  \newif\ifshowcomments
  \newcommand{\cmt}[2][blue]{%
    \ifshowcomments
      {\begingroup\color{#1}\bfseries[\,#2\,]\endgroup}%
    \fi
  }
  \newcommand{\matt}[1]{\cmt[magenta]{Matt: #1}}
\tikzstyle{rightarrow}=[->]
  \newcommand{\ga}[1]{\cmt[blue]{GA: #1}}
  \newtheorem{theorem}{Theorem}[section] % Numbered by section
  \newtheorem{lemma}[theorem]{Lemma}     % Shares numbering with Theorem
  \newtheorem{corollary}[theorem]{Corollary}
  \theoremstyle{definition}
  \newtheorem{definition}[theorem]{Definition}
  \theoremstyle{remark}
  \title{Deriving the Generalised Born Rule from First Principles}
  \author{Gaurang Agrawal}
  \affiliation{Indian Institute of Science Education
  and Research Pune, India}
  \affiliation{Université Paris-Saclay, Inria, CNRS, LMF, 91190 Gif-sur-Yvette, France}
  \email{agarwalga20@gmail.com}
  \author{Matt Wilson}
  \affiliation{Université Paris-Saclay, CNRS, ENS Paris-Saclay, Inria, CentraleSupélec, Laboratoire Méthodes Formelles}
  \email{matthew.wilson@centralesupelec.fr}
\begin{document}
  \maketitle

  \begin{abstract}
%<<<<<<< Updated upstream
%A basic postulate of modern compositional approaches to generalised physical theories is the generalised Born rule, in which probabilities are postulated to be computable (up-to monoid homomorphism) from the composition of states and effects. In this paper we consider whether this postulate, and the strength of the identification between scalars and probabilities, can be argued from basic principles. To this end; we first consider the most naive possible process-theoretic interpretation of textbook quantum theory, in which physical processes (unitaries) along with states and effects (kets and bras) and a probability function from states and effects satisfying just some basic compatibility axioms are identified. We then show, that any such process theory equipped with such structure is equivalent to one in which this probability function is internalised into a composition rule for states and effects. We then proceed to consider introduction of noise into any such theory, and observe that the result of doing so is a strengthening of the identification between scalars and probabilities; from bare monoid homomorphisms to semiring isomorphisms.
%=======
A basic postulate of modern compositional approaches to generalised physical theories is the generalised Born rule, in which probabilities are postulated to be computable from the composition of states and effects. In this paper we consider whether this postulate, and the strength of the identification between scalars and probabilities, can be argued from basic principles. To this end, we first consider the most naive possible process-theoretic interpretation of textbook quantum theory, in which physical processes (unitaries) along with states and effects (kets and bras) and a probability function from states and effects satisfying just some basic compatibility axioms are identified. We then show that any process theory equipped with such structure is equivalent to an alternative process theory in which the generalised Born rule holds. We proceed to consider introduction of noise into any such theory, and observe that the result of doing so is a strengthening of the identification between scalars and probabilities; from bare monoid homomorphisms to semiring isomorphisms.
  \end{abstract}

  \section{Introduction}

%Is the generalised Born rule justified? 
Within process theories \cite{Coecke_Kissinger_2017} and operational probabilistic theories \cite{Chiribella2010Jun}, where physical processes are modelled using symmetric monoidal categories (SMCs) \cite{Abramsky2004Feb, Baez2010Jul}, the generalised Born rule stands as a cornerstone principle. It posits a direct connection between the compositional structure of a theory and its observable statistics: the probability of measuring some state $\rho$ and returning result $\sigma$ is given up to homomorphism $\lambda$ by the composition rule \cite{Abramsky2004Feb, Coecke2015Oct, Coecke2017Mar, Chiribella2010Jun} \begin{equation}
  P\left(\tikzfig{staterho}, \tikzfig{effectsigma}\right) =\lambda\left(\tikzfig{compositionrhosigma}\right).
\end{equation}

But how fundamental is this principle? Is its prevalence a mere convenience found in certain well-behaved theories, or is it a derivable feature of any attempt to equip a process theory with a probabilistic interpretation?
% \ga{Matt could you please fill in the parts where we cite and show relations with existing works.}
If the principle were not derivable, it could well expand the search for generalised compositional and probabilistic theories \cite{Barrett2005Aug, Plavala2023Sep} beyond quantum mechanics, by explicit separation between compositional and probabilistic principles, and highlight the importance of the assumption of the identification of this structure in modern reconstructions of quantum theory. If the principle were derivable it would establish a basic structural fact about the nature of composition and probability in all conceivable physical theories, placing current axiomatic approaches to reconstructing quantum mechanics on firmer ground.

In this article we put forward a derivation of the generalised Born rule from basic principles. We will show that any process theory equipped with a reasonable probabilistic interpretation, which we refer to as \textit{probabilistic process theories} can be systematically transformed into an operationally equivalent one in which the generalised Born rule holds up to an injective monoid homomorphism $\lambda$ from the scalars of the process theory to the monoid of positive Reals under multiplication. The result is first presented for a restricted class of simplified probabilistic process theories as a pedagogical introduction to the main result, and then for the general case.

% In order to carefully frame the aforementioned questions, we define `probabilistic process theories,' a term we ascribe to process theories supplemented by a family of probability functions subject to simple physical constraints. 

%We then reconsider the role of the generalised Born rule within process theories. 
%<<<<<<< Updated upstream
%Building from this main result, we consider the precise strength of the identification that can be made between scalars and probabilities. If scalars themselves were Real numbers, the possible injective monoid endomorphisms are rather broad (for instance including the taking of any \ga{positive real} integer power), however there exists only one (the identity) which is furthermore a semiring homomorphism.
%Inspired by this, we see that by freely adding noise (and suitable quotienting) to any theory we can upgrade an injective homomorphism between scalars and probabilities to a semiring homomorphism, hence outlining a complete path from probabilsitic process theories and those in which the generalised Born rule encodes (as precisely as could be imagined), the probability.
%=======
Building from this main result, we reconsider the precise strength of the identification that can be made between scalars and probabilities. 
Whilst injective monoid endomorphisms on the positive reals are abundant, there exists only one injective semiring endomorphism on the positive reals. 
Inspired by this, we consider whether any probabilistic process theory can be further transformed into an equivalent one in which the identification between scalars and probabilities holds up-to semiring homomorphism, in other words a homomorphism which further preserves sums.
%We see that by freely adding noise (and suitable quotienting) to any theory we can upgrade an inujective homomorphism between scalars and probnabilities to a semiring homomorphism, hence outlining a complete path from probabilsitic process theories and those in which the generalised Born rule encodes (as precisely as could be imagined), the probability.
% we con
In order to do so, we freely add noise to probabilistic process theories (which satisfy the generalised Born rule) as formal positive linear combinations of processes.
Further quotienting by probabilistic equivalence, gives a way to construct probabilistic process theories in which the generalised Born rule holds up-to an injective semiring homomorphism, see Fig. \ref{fig1}. In the case of noisy quantum theory $\mathbf{CP}$, this allows us to derive the exact quantum mechanical Born rule $\Tr[\rho \sigma]$

  \begin{figure}[H]
      \centering
      \includegraphics[width = 0.6\textwidth]{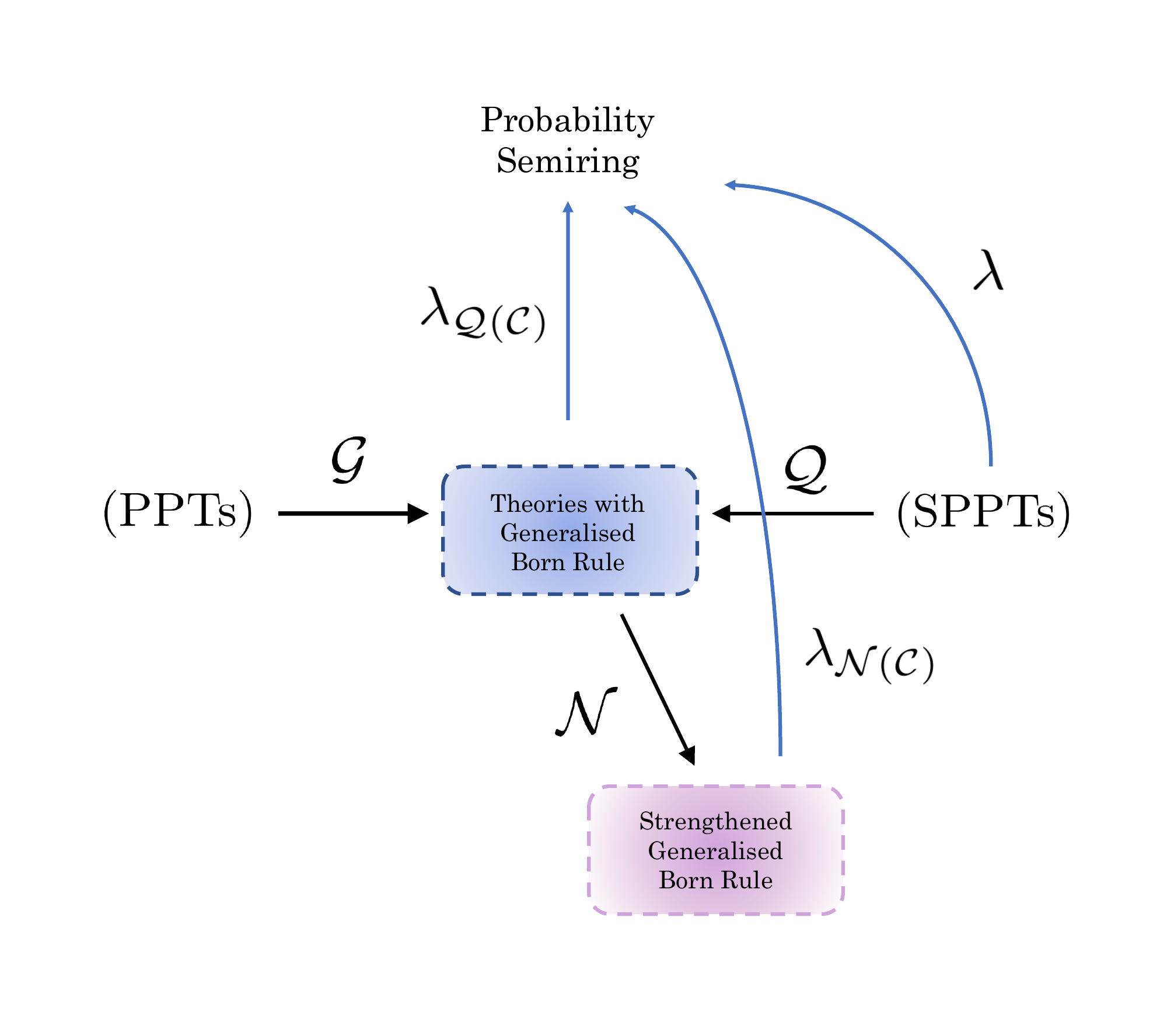}
      \caption{\textbf{Schematic of our construction.} We construct theories with generalised Born rule from probabilistic process theories via a quotient $\mathcal{G}$ ($\mathcal{Q}$ for a simplified case). Further, by adding noise $\mathcal{N}$ we strengthen the generalised Born rule. As a result the function $\lambda$ elevates from a monoid homomorphism to a semiring isomorphism.}
      \label{fig1}
  \end{figure}

%>>>>>>> Stashed changes

%Inspired by the construction of noisy quantum theory, we build a ladder of these probabilistic process theories culminating in a `noisy probabilistic process theory'. 

%In summary, starting from a process theory $\mathcal{C},$ the core steps of our construction are
%   \begin{enumerate} 
%    \item $\textbf{Internalisation}:$  ($\mathcal{C} \rightarrow \mathcal{I}[\mathcal{C}]$), 
    %which removes global phases or equivalently performs doubling. 
%    which constructs a probabilistic process theory satisfying the generalised Born rule.
%    \item \textbf{Adding Noise}: ($\mathcal{C} \rightarrow \mathcal{N}[\mathcal{C}]$), which creates classical mixtures (sums) of morphisms, and quotients by probabilistic equivalence.
    %imbuing the theory with classical noise.
%    \item $\textbf{Quotienting}_2:$ ($\mathcal{S}(\mathcal{C}) \rightarrow \mathcal{N}(\mathcal{C})$ ), which quotients the theory with respect to equivalent Kraus decompositions.
%   \end{enumerate}
     
We explore concrete examples along the way,
%in particular we show that our construction 
for the probabilistic process theory of normalised (pure states and unitaries) quantum theory the construction of an associated theory satisfying the generalised Born rule returns the theory of rank-1 completely positive trace non-increasing (CPTNI) maps i.e., Kraus operators. Adding noise, further yields the theory $\mathbf{CP}$ of completely positive maps. Similarly, for the simplified probabilistic process theory $\mathbf{FHilb}$ of finite-dimensional Hilbert spaces, constructing theories with generalised Born rule and adding noise return the theory of completely positive maps. As a byproduct of this investigation then, we find a categorical construction for the completely positive maps free of adjoints and discards, in contrast with existing constructions such as Selinger's $\mathbf{CP}$ construction \cite{Selinger2007Mar}, the $\mathbf{CP}^*$ 
%construction of Coecke, Heunen and Kissinger 
\cite{Coecke2013May, Coecke2014Aug}, 
and the $\mathbf{CP}^\infty$ construction \cite{Coecke2016Oct} where adjoints are treated as first-class citizens. 

Since this alternative construction for completely positive maps can be applied to any probabilistic process theory, it gives a construction for generalisations of completely positive maps for quantum mechanics with alternative Born rules.
   The obtained theorems could hence also be considered an attempt among a long line of works \cite{Gleason1975, Deutsch1999Aug, Hardy2001Jan, Saunders2004Jun, Carroll2014, Masanes2019Mar, DeBrota2021Aug, Hossenfelder2021Feb, Gogioso2023Jun} to consider the impact of alternative Born rules in quantum theory. 
   In future, for instance, it could be possible to even single out the traditional quantum mechanical Born rule by ruling out pathological features of the associated information theories they induce.
   By the generality of the construction, it furthermore appears that probabilistic process theories could be used as a general method to lift alternative Born rules in quantum mechanics into generalised probabilistic theories \cite{Hardy2001Jan, Barrett2007Mar, Plavala2023Sep} and by the categorical nature of the construction, operational probabilistic theories \cite{Chiribella2010Jun, DAriano2017Jan}.

  \section{Probabilistic Process Theories}

  Graphical notation is used throughout the paper to represent morphisms in SMCs following the conventions of Refs. \cite{Coecke2015Oct, Coecke2017Mar}. Along with being intuitive, the graphical notation is also formal by soundness and completeness of string diagrams for monoidal categories \cite{Joyal1991Jul}. We present a brief overview of symmetric monoidal categories and graphical notation

  \begin{definition}[\bf Symmetric Monoidal Category]
  A symmetric monoidal category (SMC) is a category $\mathcal{C}$ equipped with a bifunctor $\otimes: \mathcal{C} \times \mathcal{C} \rightarrow \mathcal{C}$, a unit object $I$, and natural isomorphisms 
  \begin{align}
    &\alpha_{A,B,C}: (A \otimes B) \otimes C \cong A \otimes (B \otimes C) \qquad &\text{(associator)},\notag\\ 
    &\beta^l_A: I \otimes A \cong A \qquad &\text{(left unitor)}, \notag\\
    &\beta^r_A: A \otimes I \cong A \qquad &\text{(right unitor)}, \notag\\ 
    &\sigma_{A,B}: A \otimes B \cong B \otimes A \qquad &\text{(symmetry)}
  \end{align}
  for all objects $A, B, C$ in $\mathcal{C}$. These isomorphisms must satisfy the coherence conditions, ensuring that all diagrams formed by these natural isomorphisms commute. Furthermore, we can always pretend that these isomorphisms are equalities since, by Mac Lane's strictification theorem \cite{Lane} every SMC is equivalent to a strict one.
  \end{definition}

  The objects of an SMC are drawn as wires with morphisms as boxes connecting to these wires. A sequential composition is represented by connecting the output of one box to the input of another, while a parallel composition is represented by placing boxes side by side, see Fig. \ref{fig:notation}. The monoidal unit object $I$ and morphism $1_I$ are depicted as empty space. Morphisms from $I$ to an object $A$ are called states, while morphisms from an object $A$ to $I$ are called effects. Scalars, which are morphisms from $I$ to $I$, are represented as closed diagrams or simply as numbers in the graphical notation.

    \begin{figure}[H]
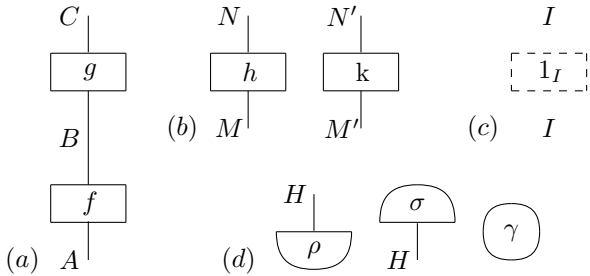

      \centering
      \tikzfig{diag_notation}
      \caption{\textbf{Diagrammatic notation.} (a) Sequential composition $g \circ f$ of morphisms $B \overset{g}{\rightarrow} C$ and $A \overset{f}{\rightarrow} B$. (b) Parallel composition $h \otimes k$ of morphisms $M \overset{h}{\rightarrow} N$ and $M' \overset{k}{\rightarrow} N'$. (c) Monoidal unit object $I$ and identity morphism $1_I$ represented as empty diagram (dashed box here). (d) State $\rho$, effect $\sigma$ and scalar $\gamma$ as boxes with single output wire, single input wire, and no input-output wires respectively.}
      \label{fig:notation}
  \end{figure}

  \label{sec:physicaltheories}
  
In modern operational approaches to modelling generalised physical theories it is common to posit a process theory with an assignment from scalars to probabilities. We instead begin with a process-theoretic interpretation of traditional textbook quantum theory. In the more traditional approach, one is presented with notions of
\begin{itemize}
\item States: Normalised elements of a Hilbert space
\item Effects: Normalised Elements of the dual Hilbert space
\item Processes: Unitary Linear maps between Hilbert spaces
\item Probability/Born rule: Assignment from states and effects to Real numbers between $0$ and $1$
\end{itemize}
Probabilistic process theories (PPTs) consists of two parts: a compositional structure, defined by an underlying symmetric monoidal category encoding the states, effects, and processes, along with probability functions satisfying some simple well-behaviour conditions with respect to composition, reminiscent of the conditions for operational probability functions give in Refs.\cite{Galley2017Jul, Galley2018Nov, Masanes2019Mar}.

  \begin{definition}[\bf Probabilistic Process Theory] A probabilistic process theory is a tuple $(\mathcal{D} \subseteq \mathcal{C}, \{\mathcal{S}^A\}, \{\mathcal{E}^A\}, \left\{P^A |~ A \in \mathrm{Ob}(\mathcal{C})\right\})$ consisting of a pair of symmetric monoidal categories $\mathcal{D} \subseteq \mathcal{C}$ along with families of 'physical' states, effects, and probability functions for each object, satisfying the following:
  \begin{itemize}
  \item For any objects $A, B$, the set of physical states $\mathcal{S}^A \subseteq \mathcal{C}(I, A)$ has  \[(s_A,s_B) \in \mathcal{S}^A \times \mathcal{S}^B \implies s_A\otimes s_B \in \mathcal{S}^{A \otimes B}\]
  \item For any objects $A, B$, the set of physical effects $\mathcal{E}^A$ has \[(e_A,e_B) \in \mathcal{E}^A \times \mathcal{E}^B \implies e_A\otimes e_B \in \mathcal{E}^{A \otimes B}\]
  \item For all physical transformations $ f \in \mathcal{D}(A, B),~ \rho \in \mathcal{S}^A, \text{ and } \sigma \in \mathcal{E}^B: ~~ f \circ \rho \in \mathcal{S}^B \text{ and } \sigma \circ f \in \mathcal{E}^A$
  \item The identity on monoidal unit $i_I$ is both a physical state and effect $i_I \in \mathcal{S}^I, ~ i_I \in \mathcal{E}^I$
  \item There exists a probability function $P^A : \mathcal{S}^A \times \mathcal{E}^A \rightarrow \mathbb{R}_{\geq 0}$ 
  %\ga{modified to geq 0}
  \end{itemize}
  subject to three constraints:\begin{itemize}
      \item[\textbf{(I)}] Associativity: \begin{eqnarray}
         P^{B}\left(\tikzfig{statefcircrho} ~ , ~ \tikzfig{effectsigma}\right) &=& P^{A}\left(\tikzfig{staterho} ~ , ~\tikzfig{effectsigmacircf}\right)\notag\\ && \qquad \forall ~ \tikzfig{staterho} \in \mathcal{S}^A,~ \tikzfig{effectsigma} \in \mathcal{E}^B, \text{ and } \tikzfig{morphismf} \in \mathcal{D}(A, B)
      \end{eqnarray} 
      \item[\textbf{(II)}] Product: 
      \begin{eqnarray} P^{A_1\otimes A_2}\left(\tikzfig{staterhoone} \tikzfig{staterhotwo}~,~ \tikzfig{effectsigmaone} \tikzfig{effectsigmatwo}\right) &=& P^{A_1}\left(\tikzfig{staterhoone}~,~ \tikzfig{effectsigmaone}\right) \cdot P^{A_2}\left(\tikzfig{staterhotwo}~,~ \tikzfig{effectsigmatwo}\right) \notag\\ &&\forall ~ \tikzfig{staterhoone} \in \mathcal{S}^{A_1}, ~ \tikzfig{staterhotwo} \in \mathcal{S}^{A_2},~ \tikzfig{effectsigmaone} \in \mathcal{E}^{A_1},~ \tikzfig{effectsigmatwo} \in \mathcal{E}^{A_2} 
       \end{eqnarray} 
      \item[\textbf{(III)}] Non-triviality:  \begin{eqnarray} &\exists& \left(\tikzfig{staterho}, \tikzfig{effectsigma}\right) \text{ such that } P\left(\tikzfig{staterho}, \tikzfig{effectsigma}\right) \neq 0, \notag\\ &\exists& \left(\tikzfig{staterhoprime}, \tikzfig{effectsigmaprime}\right) \text{ such that }  P\left(\tikzfig{staterhoprime}, \tikzfig{effectsigmaprime}\right) \neq 1. \end{eqnarray}      
  \end{itemize}

  \label{def:physicaltheory_new}
  \end{definition}

  \noindent These axioms come with an intuitive set of physical interpretations. Axiom \textbf{(I)} can be attributed to the associativity of physical processes, whereby the calculated probability is independent of whether the transformation $f$ is considered part of the state preparation or the measurement. Axiom \textbf{(II)} asserts that the joint probability of two independent events is the product of their individual probabilities. Lastly, \textbf{(III)} serves to exclude trivial theories where either nothing ever happens (all probabilities are zero) or everything always happens (all probabilities are one). 
  %There are also other implications of \textbf{(III)}. In particular:

  Definition \ref{def:physicaltheory_new} is general enough to encompass pure quantum theory, noisy quantum theory, classical stochastic processes, and many other mathematical frameworks. In particular, the category of finite-dimensional Hilbert spaces, $\mathbf{FHilb}$ with states given by unit-norm elements of Hilbert spaces, effects given by unit-norm elements of dual Hilbert spaces, processes given by the unitary linear maps, and the probability function given by the Born rule, $P(\ket{\psi}, \bra{\phi}) = |\langle \phi | \psi \rangle|^2$, satisfies the definition of a probabilistic process theory. This will be our core example throughout the paper.

%  \noindent These axioms come with an intuitive set of physical interpretations. Axiom \textbf{(I)} can be attributed to the associativity of physical processes, whereby the calculated probability is independent of whether the transformation $f$ is considered part of the state preparation or the measurement. Axiom \textbf{(II)} asserts that the joint probability of two independent events is the product of their individual probabilities. Lastly, \textbf{(III)} serves to exclude trivial theories where either nothing ever happens (all probabilities are zero) or everything always happens (all probabilities are one). There are also other implications of \textbf{(III)}. In particular:
%  \matt{In the subtheory case we would have to require that the identity on the monoidal unit is included in the states set and the effect set, very reasonable.}

%In order to understand how the generalised Born rule can be derived for probabilistic process theories, we will first explore a simplified construction which wpplied to a simplified class of probabilistic process theories, in which all states and effects of the process theory are allowed, and all processes are allowed.
Our goal will be to prove that any probabilistic process theory is operationally equivalent to one in which the generalised Born rule holds. Let us state formally what we mean by a generalised Born rule.
  \begin{definition}
  A probabilistic process theory is said to have a generalised Born rule if there exists an injective monoid homomorphism $\lambda$ such that, for any given physical state-effect pair $\rho$ and $\sigma$, the probability of measuring $\sigma$ given $\rho$, $P(\rho, \sigma)$ is exactly the image of their composition under $\lambda$
  \begin{equation}
    P\left(\tikzfig{staterho}, \tikzfig{effectsigma}\right) =\lambda\left(\tikzfig{compositionrhosigma}\right).
  \end{equation}
 Furthermore, we say that this Born rule is additive if $\mathcal{C}$ is enriched in convex spaces and $\lambda$ is a semiring homomorphism. 
  \end{definition}

The proof that any probabilistic process theory is equivalent to one with a generalised Born rule is however rather technical, and so to see the core structures behind the main derivation we will first present a derivation of the same fact for simplified probabilistic process theories (SPPTs), which are process theories imparted with a minimal statistical structure. 
  
    \begin{definition}[\bf Simplified probabilistic process theory]
    A simplified probabilistic process theory is defined as a pair $\left(\mathcal{C}, \left\{P^A |~ A \in \mathrm{Ob}(\mathcal{C})\right\}\right)$ of a symmetric monoidal category $\mathcal{C}$ and a collection of probability functions for all state-effect pairs $P^{A}: \mathcal{S}^A \times \mathcal{E}^A \rightarrow \mathbb{R}_{\geq 0}$, where $\mathcal{S}^A = \mathcal{C}(I, A)$ and $\mathcal{E}^A = \mathcal{C}(A, I)$, subject to three constraints:
  \begin{itemize}
      \item[\textbf{(I)}] Associativity: \( P^{B}(f \circ \rho, \sigma) = P^{A}(\rho, \sigma \circ f) \qquad \forall \rho \in \mathcal{S}^A,~ \sigma \in \mathcal{E}^B, \text{ and } f \in \mathcal{C}(A, B)\) 
      \item[\textbf{(II)}] Product: \( P^{A_1\otimes A_2}(\rho_1 \otimes \rho_2, \sigma_1 \otimes \sigma_2) = P^{A_1}(\rho_1, \sigma_1) \cdot P^{A_2}(\rho_2, \sigma_2) \) 
      \item[\textbf{(III)}] Non-triviality: There exists a state-effect pair $(\rho, \sigma)$ such that \( P(\rho, \sigma) \neq 0 \) and a state-effect pair $(\rho', \sigma')$ such that \( P(\rho', \sigma') \neq 1 \).
  \end{itemize}
  \label{def:physicaltheory}
  \end{definition}
Note, that the trivial case $(\mathcal{C} \subseteq \mathcal{C}, \mathcal{C}(I, A), \mathcal{C}(A,I), \{P^A\})$ within a probabilistic process theory leads to a simplified probabilistic process theory.

  \begin{lemma}
  There exists a state-effect pair $(\rho, \sigma)$ such that \( P(\rho, \sigma) \neq 0 \) if and only if \( P(1_I, 1_I) = 1\), where $1_I$ is the identity on monoidal unit $I$.
  \label{lemma:probabilityunit}
  \end{lemma}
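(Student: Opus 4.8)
The plan is to extract a single master identity from the Product axiom \textbf{(II)} and then read off both implications of the biconditional from it. Throughout I would work in a strict symmetric monoidal category, which is permitted by the strictification remark accompanying the definition of an SMC: there the monoidal unit satisfies $A \otimes I = A$ on objects and $f \otimes 1_I = f$ on morphisms. Consequently, for any state $\rho \in \mathcal{S}^A$ and effect $\sigma \in \mathcal{E}^A$ one has $\rho \otimes 1_I = \rho$ and $\sigma \otimes 1_I = \sigma$ as genuine equalities. Moreover, since in the simplified setting $\mathcal{S}^I = \mathcal{E}^I = \mathcal{C}(I,I)$, the pair $(1_I, 1_I)$ is itself a legitimate state-effect pair on the object $I$, so that $P^I(1_I, 1_I)$ is well-defined.

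The first step is to instantiate axiom \textbf{(II)} with $A_1 = A$, $A_2 = I$, $\rho_1 = \rho$, $\rho_2 = 1_I$, $\sigma_1 = \sigma$, and $\sigma_2 = 1_I$. Using strictness to collapse the left-hand side $P^{A \otimes I}(\rho \otimes 1_I, \sigma \otimes 1_I)$ to $P^A(\rho, \sigma)$, this yields the master identity
\[
P^A(\rho, \sigma) = P^A(\rho, \sigma) \cdot P^I(1_I, 1_I) \qquad \text{for every state-effect pair } (\rho, \sigma).
\]

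For the forward direction, I would suppose that some pair satisfies $P(\rho, \sigma) \neq 0$; as this is a nonzero real, I can cancel it from both sides of the master identity to obtain $P(1_I, 1_I) = 1$. For the converse, the argument is immediate: if $P(1_I, 1_I) = 1$, then the pair $(1_I, 1_I)$ is itself a state-effect pair with $P(1_I, 1_I) = 1 \neq 0$, exhibiting the required nonzero pair.

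I expect no serious obstacle here. The only point demanding care is the appeal to strictness used to identify $\rho \otimes 1_I$ with $\rho$ (and $A \otimes I$ with $A$), so that the Product axiom can be applied verbatim with the trivial tensor factor $I$; this must be justified by invoking the strictification theorem. Everything else reduces to cancellation of a nonzero real number.
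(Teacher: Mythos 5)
Your proof is correct and follows essentially the same route as the paper: both hinge on the identity $P(\rho,\sigma) = P(\rho,\sigma)\cdot P(1_I,1_I)$ obtained from axiom \textbf{(II)} with a trivial tensor factor, the only cosmetic difference being that you cancel the nonzero $P(\rho,\sigma)$ directly while the paper first notes $P(1_I,1_I)\in\{0,1\}$ and then rules out $0$.
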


  \begin{proof}
  The forward implication is clear: \( P(1_I, 1_I) = 1 \implies \exists ~ \rho, \sigma \text{ such that } P(\rho, \sigma) \neq 0 \). From \textbf{(II)}, \(P(1_I, 1_I) = P(1_I \otimes 1_I, 1_I \otimes 1_I) = P(1_I, 1_I) \cdot P(1_I, 1_I)\), which implies \(P(1_I, 1_I) \in \{0, 1\}\). However, since $P(\rho, \sigma) = P(1_I, 1_I) \cdot P(\rho, \sigma)$ for all $\rho, \sigma$, the case $P(1_I, 1_I) = 0$ would imply $P(\rho, \sigma) = 0$ for all pairs, which contradicts the premise. Therefore, $P(1_I, 1_I) = 1$.
  \end{proof}

  In theories that possess a discard map, such as noisy quantum theory where the partial trace acts as a discard effect, an additional property holds:
%  \matt{In the subtheory case this would work if we asked for the discard map to be in the physical subcategory, since then we could slides it to the left (rather than sliding rho to the right), or if we ask that causal state preparations are legitimate physical maps. Both are true for the theory of quantum channels. But not for e.g unitary quantum theory. } \ga{should i define discard map first?}
  \begin{lemma}
    In an SMC equipped with a discard map, denoted $\tikzfig{discard},$ $P\left(\tikzfig{staterho}, \tikzfig{discard}\right) = 1$ if and only if $P(1_I, 1_I) = 1$, where $\tikzfig{staterho}$ is a causal state and $1_I$ is the monoidal unit morphism.
    \end{lemma}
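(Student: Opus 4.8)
The plan is to reduce the statement to a single application of the associativity axiom \textbf{(I)} combined with the defining property of a causal state. Recall that a state is \emph{causal} precisely when discarding it returns the trivial scalar, that is $\tikzfig{discard}\circ\rho = 1_I$. The key observation is that the effect in $P\!\left(\rho, \tikzfig{discard}\right)$ is itself the discard, so once we move the state across to the effect side, the composite of the causal state with the discard collapses to the identity on the monoidal unit, and the probability becomes $P(1_I, 1_I)$.

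Concretely, I would first rewrite the state as $\rho = \rho \circ 1_I$, regarding $\rho\colon I \to A$ as a process in the category and $1_I$ as the empty state on $I$. Applying associativity \textbf{(I)} with this process then transfers $\rho$ from the state side of the probability function to the effect side:
\[
P^A\left(\rho,\, \tikzfig{discard}\right) = P^A\left(\rho \circ 1_I,\, \tikzfig{discard}\right) = P^I\left(1_I,\; \tikzfig{discard}\circ\rho\right).
\]
Invoking causality, the composite effect $\tikzfig{discard}\circ\rho$ equals $1_I$, whence $P\!\left(\rho, \tikzfig{discard}\right) = P(1_I, 1_I)$. Since the two quantities are literally equal, one equals $1$ exactly when the other does, which is the desired biconditional. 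Note that this also shows the forward and backward directions are not genuinely distinct: both follow at once from the single scalar equality, with causality being the essential ingredient that forces the right-hand scalar to be $1_I$ rather than an arbitrary scalar.

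The only step requiring care, and the main (albeit mild) obstacle, is the type-bookkeeping needed to invoke \textbf{(I)}: one must check that $\rho$, viewed as a morphism $I \to A$, lies in the class $\mathcal{D}(I,A)$ of physical processes to which the axiom applies, that $1_I \in \mathcal{S}^I$ (guaranteed by the axiom $i_I \in \mathcal{S}^I$), and that $\tikzfig{discard} \in \mathcal{E}^A$. In the simplified setting, where $\mathcal{D} = \mathcal{C}$ and every morphism is physical, these conditions hold automatically; in the general setting they hold for the theories of interest, such as noisy quantum theory, where the partial trace is the discard and normalised channels and states are physical. Aside from this bookkeeping the argument is immediate, requiring no estimates or further constructions beyond associativity and the definition of causality.
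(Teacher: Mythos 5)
Your proposal is correct and follows essentially the same route as the paper's proof: apply axiom \textbf{(I)} to move $\rho$ to the effect side, then use causality to collapse $\tikzfig{discard}\circ\rho$ to $1_I$, so that the two probabilities are literally equal and the biconditional is immediate. The extra type-bookkeeping you note is harmless but not needed in the paper's (simplified) setting, and the paper additionally cites Lemma~\ref{lemma:probabilityunit} to conclude the common value is $1$, which the bare biconditional does not actually require.
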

    
    \begin{proof}
    $P\left(\tikzfig{staterho}, \tikzfig{discard}\right) \overset{\textbf{(I)}}= P\left(1_I, \tikzfig{compositionrhodiscard}\right) = P(1_I, 1_I)$. The second equality follows from $\rho$ being a causal state. By Lemma \ref{lemma:probabilityunit}, this value is 1.
    \end{proof}

  \section{Probabilities in Simplified Probabilistic Process Theories}
  \label{sec:probphysicaltheories}

In this section we will explore some basic features of SPPTs.
%We provide definitions and proofs relevant for probabilities in a probabilistic process theory $(\mathcal{C}, \{P^A\})$, thus deriving basic results that we will come to expect from any such construction.  

  \begin{lemma}[\bf $\lambda$ function]
  Consider a symmetric monoidal category $\mathcal{C}$ and a probability function $P:\mathcal{S}^A \times \mathcal{E}^A \rightarrow \mathbb{R}_{\geq 0},$ where $\mathcal{S}^A$ and $\mathcal{E}^A$ are the sets of states and effects for object A in category $\mathcal{C}$. Given P satisfies \textbf{(I)}, there exists a function $\lambda:\mathcal{C}(I, I) \rightarrow \mathbb{R}_{\geq 0}$ such that
  \begin{equation}
  \lambda\left(\tikzfig{compositionrhosigma}\right) = P\left(\tikzfig{staterho} \ , \ \tikzfig{effectsigma}\right) \ \forall \rho \in \mathcal{S}^A, \ \sigma \in \mathcal{E}^A.
  \end{equation}
  \label{lemma:lambda}
  \end{lemma}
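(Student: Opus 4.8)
The plan is to exhibit an explicit canonical formula for $\lambda$ and then verify the required identity using only associativity \textbf{(I)}. The natural candidate is to set, for every scalar $\gamma \in \mathcal{C}(I,I)$,
$$\lambda(\gamma) := P(1_I, \gamma).$$
I would first argue that this is a legitimate definition. In the simplified setting $\mathcal{S}^I = \mathcal{C}(I,I)$ and $\mathcal{E}^I = \mathcal{C}(I,I)$, so the identity $1_I$ is a state on $I$ and \emph{every} scalar $\gamma$ is simultaneously an effect on $I$. Hence $(1_I,\gamma)$ is a valid state-effect pair on $I$ and $P^I(1_I,\gamma)$ is defined for all $\gamma$. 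Because $P$ is a genuine function, this assignment is automatically single-valued, so $\lambda : \mathcal{C}(I,I) \to \mathbb{R}_{\geq 0}$ is a well-defined function with no choices left to make.

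The second step verifies the stated equation. Given arbitrary $\rho \in \mathcal{S}^A$ and $\sigma \in \mathcal{E}^A$, I would regard the state $\rho : I \to A$ itself as a process $f = \rho$, and apply axiom \textbf{(I)} with the trivial state $1_I \in \mathcal{S}^I$ and the effect $\sigma \in \mathcal{E}^A$. Associativity then reads
$$P^A(\rho \circ 1_I,\ \sigma) = P^I(1_I,\ \sigma \circ \rho).$$
Since $\rho \circ 1_I = \rho$, the left-hand side is exactly $P(\rho,\sigma)$, while the right-hand side is by definition $\lambda(\sigma \circ \rho)$. This gives $\lambda(\sigma \circ \rho) = P(\rho,\sigma)$ for every object $A$ and every pair $(\rho,\sigma)$, which is the claim. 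Notably, only axiom \textbf{(I)} is required; neither the product axiom \textbf{(II)} nor non-triviality \textbf{(III)} enters.

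The one point that looks like an obstacle — and the reason the lemma is nontrivial despite the short argument — is \emph{well-definedness}. A priori one might worry that two distinct decompositions $\sigma \circ \rho = \sigma' \circ \rho'$ of the same scalar could yield $P(\rho,\sigma) \neq P(\rho',\sigma')$, in which case no such $\lambda$ could exist. The trick is that anchoring $\lambda$ to the canonical representative $(1_I,\gamma)$ dodges this worry entirely at the definition stage, and the computation above then retroactively forces $P(\rho,\sigma) = \lambda(\sigma\circ\rho) = P(\rho',\sigma')$ whenever the composites agree. In other words, the genuine content that $P$ \emph{factors through} composition of states with effects emerges as an immediate corollary of the construction rather than as something that must be checked by hand. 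I therefore expect the only real work to be isolating the correct canonical definition; the verification reduces to the identity $\rho \circ 1_I = \rho$ together with a single invocation of associativity.
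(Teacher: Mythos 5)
Your proposal is correct and is essentially the paper's own argument: both define $\lambda(\gamma) = P(1_I,\gamma)$ and apply axiom \textbf{(I)} with $f=\rho$ to get $P(\rho,\sigma)=P(\rho\circ 1_I,\sigma)=P(1_I,\sigma\circ\rho)=\lambda(\sigma\circ\rho)$. Your added remarks on well-definedness make explicit a point the paper leaves implicit, but the route is the same.
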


  \begin{proof}
  $P\left(\tikzfig{staterho} \ , \ \tikzfig{effectsigma}\right) = P\left(\tikzfig{staterho} \circ 1_I \ , \ \tikzfig{effectsigma}\right) \overset{\textbf{(I)}}= P\left(1_I , \ \tikzfig{compositionrhosigma}\right)$. Now we define $\lambda\left(\tikzfig{compositionrhosigma}\right) = P\left(1_I , \ \tikzfig{compositionrhosigma}\right).$ 
  %This definition is consistent, since $\gamma_1 = \gamma_2 \implies \lambda(\gamma_1) = \lambda(\gamma_2).$
  \end{proof}

  \begin{corollary}
  $\tikzfig{compositionrhoprimesigmaprime} = \tikzfig{compositionrhosigma} \Rightarrow P\bigg(\tikzfig{staterho} \ , \ \tikzfig{effectsigma}\bigg) = P\bigg(\tikzfig{staterhoprime} \ , \ \tikzfig{effectsigmaprime}\bigg). $
  \label{corr:compprobequiv}
  \end{corollary}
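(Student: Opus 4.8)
The plan is to read this off immediately from Lemma~\ref{lemma:lambda}. The essential observation is that the construction there produces a genuine \emph{function} $\lambda : \mathcal{C}(I,I) \to \mathbb{R}_{\geq 0}$, that is, a single-valued assignment whose value on a scalar depends on that scalar alone and not on any particular way of writing it as the composition of a state with an effect. Concretely, the proof of Lemma~\ref{lemma:lambda} fixes $\lambda(\gamma) = P(1_I, \gamma)$, so the right-hand side manifestly depends only on the scalar $\gamma \in \mathcal{C}(I,I)$, regardless of the decomposition that produced it.

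With this in hand the corollary is a one-line deduction. By Lemma~\ref{lemma:lambda} applied to each pair, we have $P\big(\tikzfig{staterho}, \tikzfig{effectsigma}\big) = \lambda\big(\tikzfig{compositionrhosigma}\big)$ and likewise $P\big(\tikzfig{staterhoprime}, \tikzfig{effectsigmaprime}\big) = \lambda\big(\tikzfig{compositionrhoprimesigmaprime}\big)$. The hypothesis is precisely that the two compositions coincide as scalars in $\mathcal{C}(I,I)$, so feeding equal arguments into the function $\lambda$ yields equal values, and the two probabilities therefore agree.

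There is no real obstacle here; the only thing to be careful about is conceptual rather than computational, namely to confirm that $\lambda$ is genuinely well-defined as a function before appealing to it. This is exactly what Lemma~\ref{lemma:lambda} guarantees, since Axiom~\textbf{(I)} lets one slide the entire state-effect composite onto the effect side, $P(\rho,\sigma) = P(1_I, \sigma\circ\rho)$, exhibiting the probability as a function of the composite scalar alone. Thus the corollary records the operationally meaningful fact that nothing beyond the value of the inner-product scalar $\sigma \circ \rho$ can be probabilistically detected.
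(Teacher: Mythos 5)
Your proof is correct and is precisely the argument the paper intends: the corollary follows immediately from Lemma~\ref{lemma:lambda}, since $\lambda(\gamma) = P(1_I,\gamma)$ is a well-defined function of the composite scalar alone (by Axiom \textbf{(I)}), so equal composites force equal probabilities. The paper gives no separate proof for this corollary, and your one-line deduction matches its stated reasoning exactly.
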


  This corollary states that if two state-effect pairs compose to the same scalar, they must also yield the same probability. Interestingly, this fact follows directly from the SMC structure and the associativity axiom (\textbf{I}). 
  %This indicates that the associativity of physical transformations is the primary driver of the generalised Born rule. The following theorem makes this equivalence more concrete. 

  \begin{theorem}[\bf A monoid homomorphism for simplified probabilistic process theories]
  In a simplified probabilistic process theory $(\mathcal{C}, \{P^A\})$ the function $\lambda$ defined above is a monoid homomorphism $\lambda:\Big(\mathcal{C}(I, I), 1_I, \otimes \Big) \rightarrow \Big(\mathrm{Range}(P) \subseteq \mathbb{R}_{\geq 0}, 1, \times\Big)$.
  \label{th:Cmonoidalhomomorphism}
  \end{theorem}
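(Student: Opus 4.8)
The plan is to verify the two defining conditions of a monoid homomorphism for $\lambda$ directly from the axioms, namely unit preservation $\lambda(1_I)=1$ and multiplicativity $\lambda(\gamma_1 \otimes \gamma_2)=\lambda(\gamma_1)\cdot\lambda(\gamma_2)$. It is worth first recording that the stated codomain really is a monoid: by Lemma \ref{lemma:lambda} every value $P(\rho,\sigma)=\lambda(\sigma\circ\rho)$ lies in the image of $\lambda$, so $\mathrm{Range}(P)=\mathrm{Range}(\lambda)$, and once the two homomorphism properties are established, closure of $\mathrm{Range}(P)$ under multiplication and containment of the unit $1$ both fall out for free, making $(\mathrm{Range}(P),1,\times)$ a submonoid of $(\mathbb{R}_{\geq 0},1,\times)$.

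For unit preservation I would simply evaluate the definition from Lemma \ref{lemma:lambda}: $\lambda(1_I)=P(1_I,1_I)$. Non-triviality \textbf{(III)} supplies a state-effect pair with $P(\rho,\sigma)\neq 0$, so Lemma \ref{lemma:probabilityunit} immediately gives $P(1_I,1_I)=1$, hence $\lambda(1_I)=1$.

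For multiplicativity, the key move is to feed identity scalars into the product axiom \textbf{(II)}. Taking $A_1=A_2=I$, $\rho_1=\rho_2=1_I$, and $\sigma_1=\gamma_1$, $\sigma_2=\gamma_2$ arbitrary scalars, axiom \textbf{(II)} reads
\[
P^{I\otimes I}(1_I\otimes 1_I,\ \gamma_1\otimes\gamma_2)=P^I(1_I,\gamma_1)\cdot P^I(1_I,\gamma_2)=\lambda(\gamma_1)\,\lambda(\gamma_2).
\]
I would then identify the left-hand side with $\lambda(\gamma_1\otimes\gamma_2)$: since $I\otimes I=I$ and $1_I\otimes 1_I=1_I$, it equals $P^I(1_I,\gamma_1\otimes\gamma_2)=\lambda(\gamma_1\otimes\gamma_2)$, giving $\lambda(\gamma_1\otimes\gamma_2)=\lambda(\gamma_1)\,\lambda(\gamma_2)$.

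The one place requiring care — the main, if mild, obstacle — is precisely this identification of $I\otimes I$ with $I$ and of $1_I\otimes 1_I$ with $1_I$, together with the check that $\gamma_1\otimes\gamma_2$ is again a scalar in $\mathcal{C}(I,I)$ on which $\lambda$ is defined. In a general SMC these hold only up to the coherence isomorphisms, so in principle one must insert the unitors and confirm they act trivially on scalars; this is handled cleanly by Mac Lane's strictification theorem, already invoked in the definition of an SMC, which lets us treat the unitors as equalities. With that bookkeeping settled, unit preservation and multiplicativity together establish that $\lambda$ is a monoid homomorphism, and the closure observation above confirms the codomain is a monoid as claimed.
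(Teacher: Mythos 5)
Your proposal is correct and follows essentially the same route as the paper: unit preservation via Lemma \ref{lemma:probabilityunit} together with axiom \textbf{(III)}, and multiplicativity via axiom \textbf{(II)} applied to scalars written as state-effect compositions (the paper uses generic $\sigma_i \circ \rho_i$ where you specialise to $\rho_i = 1_I$, $\sigma_i = \gamma_i$, which is the same thing since every scalar decomposes this way). Your added remarks on strictness of $I \otimes I = I$ and on $\mathrm{Range}(P)$ being a submonoid are sound bookkeeping the paper leaves implicit.
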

  \begin{proof}
  From the SPPT $(\mathcal{C}, \{P^A\})$, consider the symmetric monoidal category $\mathcal{C}$ and probability functions $P:\mathcal{S}^A \times \mathcal{E}^A \rightarrow \mathbb{R}_{\geq 0}.$ Given P satisfies \textbf{(I)},\textbf{(II)}, and \textbf{(III)}, we have
  $\lambda(1_I) \overset{\text{Def.}}{=} P(1_I, 1_I) \overset{(\text{Lemma } \ref{lemma:probabilityunit})}=1$. Hence $\lambda$ preserves the identity.\\
  Now, 
  \begin{eqnarray}
  \lambda\left(\tikzfig{compositionrhoonesigmaone} \tikzfig{compositionrhotwosigmatwo}\right) &\overset{\text{Def.}}=& P\left(\tikzfig{staterhoone} \tikzfig{staterhotwo}, \tikzfig{effectsigmaone} \tikzfig{effectsigmatwo}\right)\overset{\textbf{(II)}}=P\left(\tikzfig{staterhoone}, \tikzfig{effectsigmaone}\right)\cdot P\left(\tikzfig{staterhotwo}, \tikzfig{effectsigmatwo}\right) \notag\\ &\overset{\text{Def.}}=&\lambda\left(\tikzfig{compositionrhoonesigmaone}\right) \cdot \lambda\left(\tikzfig{compositionrhotwosigmatwo}\right).
  \end{eqnarray}
  Hence, $\lambda$ preserves scalar multiplication.
  % monoidal product. $\therefore \lambda$ is a monoid homomorphism. 
  \end{proof}

All three axioms serve their distinct roles. While the associativity axiom \textbf{(I)} gives a well-formed function $\lambda$, the product axiom \textbf{(II)} is responsible for the preservation of the monoidal product, and the non-triviality axiom \textbf{(III)} is responsible for preserving the identity. Interestingly, Lemma \ref{lemma:probabilityunit} shows that identity preservation is equivalent to the theory being non-trivial, i.e., $\lambda(1_I) = 1$ if and only if there exists a state-effect pair such that $P(\rho, \sigma) \neq 0$.

%    These theorems demonstrate that in simplified probabilistic process theories, scalars and probabilities are structurally identical as monoids. Furthermore, the mapping between these two structures is precisely the generalised Born rule.
 % To be explicit here, the justification for the Born rule in probabilistic process theories has two components:
 % \begin{enumerate}
 %   \item  The probability of a state-effect pair can be calculated as a function $\lambda$ of their composition Lem. \ref{lemma:lambda}.
 %   \item The $\lambda$ function is a homomorphism Thm. \ref{th:Cmonoidalhomomorphism}. 
 % \end{enumerate}
 % While 1. can be established simply from the SMC structure and associativity axiom (\textbf{I}), for 2. all three axioms (\textbf{I}), (\textbf{II}), and (\textbf{III}) are required. Combining the two, we define
  
%  \begin{definition}
%  A probabilistic process theory $(\mathcal{C}, \{P^A\})$ is said to have a generalised Born rule if there exists an injective monoid homomorphism $\lambda$ such that, for any given state-effect pair $\rho$ and $\sigma$, the probability of measuring $\sigma$ given $\rho$, $P(\rho, \sigma)$ is exactly the image of their composition under $\lambda$
%  $$P\left(\tikzfig{staterho}, \tikzfig{effectsigma}\right) =\lambda\left(\tikzfig{compositionrhosigma}\right).$$
% Furthermore, we say that this Born rule is additive if $\mathcal{C}$ is enriched in convex spaces and $\lambda$ is a semiring homomorphism. 
%  \end{definition}
  
Homomorphism between scalars and probabilities will not in general be unique. Consider the case of pure quantum theory $\mathbf{FHilb}$, where there is no unique homomorphism from the scalars (complex numbers) to non-negative real numbers $\mathbb{R}_{\geq 0}$. For instance, any power function of the form 
  \begin{equation}
    \lambda(re^{i \theta}) = r^k
  \end{equation}
  is a monoid homomorphism from $(\mathbb{C}, 1, \times)$ to $(\mathbb{R}_{\geq 0}, 1, \times)$ for $k > 0$. The standard Born rule corresponds to the case $k = 2$.

  We now turn to statistical equivalence. In any mathematical model for a given SPPT $(\mathcal{C}, \{P^{A}\})$, multiple descriptions of a transformation may give rise to the same statistics. It is therefore instructive to define statistically equivalent morphisms and indentify their properties. 

  \begin{definition} \label{def:probequivmorphisms}\textbf{(Probabilistically equivalent morphisms)} Given morphisms $f \in \mathcal{C}(A,B)$ and $g \in \mathcal{C}(A,B)$ in SPPT $(\mathcal{C}, \{P^{A}\})$, we call them probabilistically equivalent and write $f \sim g$ iff 
  \begin{equation}
  P^{B \otimes C}\left(\tikzfig{statefcircrhodouble}, \tikzfig{effectsigmadouble}\right) = 
  \lambda\left(\tikzfig{compositionrhofsigmadouble}\right) = \lambda\left(\tikzfig{compositionrhogsigmadouble}\right) = P^{B \otimes C }\left(\tikzfig{stategcircrhodouble}, \tikzfig{effectsigmadouble}\right)
  \end{equation}
  for all $\rho \in \mathcal{S}^{A \otimes C}$, $\sigma \in \mathcal{E}^{B \otimes C}$, and for all systems $C$.
  \end{definition}

  We may also refer to these as statistically equivalent morphisms. It is straightforward to show that $\sim$ is an equivalence relation. The following lemmas hold for probabilistically equivalent states.
 % \begin{lemma}
 % $\sim$ is an equivalence relation.
 % \end{lemma}

  \begin{lemma}
  $\tikzfig{compositionrhoa} = \tikzfig{compositionrhoprimea} ~~ \forall a \in \mathcal{E}^A \implies \tikzfig{staterho} \sim \tikzfig{staterhoprime}.$
  \label{th:localeffectimpliesstateequiv}
  \end{lemma}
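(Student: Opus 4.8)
The plan is to unfold the definition of probabilistic equivalence (Definition \ref{def:probequivmorphisms}) for the two states and reduce it to an equality of scalars, which can then be matched directly to the hypothesis. Writing $\rho, \rho' \in \mathcal{S}^A$ for the two states, the relation $\rho \sim \rho'$ requires that, for every system $C$, every test state $\tau \in \mathcal{S}^C$ and every test effect $\sigma \in \mathcal{E}^{A\otimes C}$, the probabilities $P^{A\otimes C}\big((\rho \otimes 1_C)\circ \tau, \sigma\big)$ and $P^{A\otimes C}\big((\rho' \otimes 1_C)\circ \tau, \sigma\big)$ agree. By Lemma \ref{lemma:lambda} each such probability is $\lambda$ applied to the corresponding closed scalar diagram, so by Corollary \ref{corr:compprobequiv} it suffices to prove the equality of scalars $\sigma \circ (\rho \otimes 1_C) \circ \tau = \sigma \circ (\rho' \otimes 1_C) \circ \tau$ for all such $C, \tau, \sigma$.

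The key step is a repackaging of this scalar. Working in the strictified category, the composite $(\rho\otimes 1_C)\circ \tau$ equals $\rho \otimes \tau$, and since $\rho$ only feeds the $A$-leg one may slide the test state $\tau$ up into the test effect by functoriality of $\otimes$: $\sigma \circ (\rho \otimes \tau) = \big(\sigma \circ (1_A \otimes \tau)\big)\circ \rho$. Setting $a := \sigma \circ (1_A \otimes \tau) : A \to I$, this reads $\sigma \circ (\rho \otimes 1_C)\circ \tau = a \circ \rho$, and identically $\sigma \circ (\rho' \otimes 1_C)\circ \tau = a \circ \rho'$. Crucially, in a simplified probabilistic process theory every morphism $A \to I$ is an effect, so $a \in \mathcal{E}^A$; the hypothesis $a\circ\rho = a\circ\rho'$ for all effects $a$ then yields the desired scalar equality, and hence the equality of probabilities for every choice of $C, \tau, \sigma$.

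The only real obstacle is recognising that the large test effect $\sigma$ on $A\otimes C$, once its $C$-leg is plugged with the test state $\tau$, collapses to a genuine effect on $A$ alone; after this observation the result is immediate. I expect this to be painless here precisely because SPPTs take $\mathcal{E}^A = \mathcal{C}(A,I)$, so no physicality check on the collapsed effect $a$ is required. This is exactly the point at which the analogous statement for general probabilistic process theories would demand extra care, since there one must additionally argue that $\sigma \circ (1_A \otimes \tau)$ lands in the restricted set of physical effects.
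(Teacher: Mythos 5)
Your proof is correct and follows essentially the same route as the paper's: both unfold the definition of $\sim$, use axiom \textbf{(I)} (in your case via Corollary~\ref{corr:compprobequiv}) to reduce to an equality of closed scalar diagrams, and then observe that $\sigma \circ (\rho \otimes \tau)$ collapses to $a \circ \rho$ with $a = \sigma \circ (1_A \otimes \tau) \in \mathcal{E}^A$, so the hypothesis applies. Your closing remark about why this is painless for SPPTs but would need a physicality check for general PPTs is accurate and matches the paper's later treatment of the general case.
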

  \begin{proof}
  It can be seen that
  \begin{eqnarray}
  \tikzfig{staterho} \sim \tikzfig{staterhoprime} &\overset{\text{Def.}}\iff& P\left(\tikzfig{staterhocircmdouble}, \tikzfig{effectsigmadouble}\right) = P\left(\tikzfig{staterhoprimecircmdouble}, \tikzfig{effectsigmadouble}\right) ~\forall ~ m, \sigma \notag\\
  &\overset{\textbf{(I)}}\iff& P\left(1_I, \tikzfig{compositionsigmacircrhotensormdouble}\right) = P\left(1_I, \tikzfig{compositionsigmacircrhoprimetensormdouble}\right).
  \end{eqnarray}
  From the premise
  \begin{eqnarray}
  \tikzfig{compositionsigmacircrhotensormdouble} = \tikzfig{compositionsigmaprimerho} = \tikzfig{compositionrhoprimesigmaprime} = \tikzfig{compositionsigmacircrhoprimetensormdouble} \notag
  \end{eqnarray}
  which implies $P\left(1_I, \tikzfig{compositionsigmacircrhotensormdouble} \right) = P\left(1_I, \tikzfig{compositionsigmacircrhoprimetensormdouble} \right)$, and therefore $\rho \sim \rho'$.
  \end{proof}

Lemma \ref{th:localeffectimpliesstateequiv} says that if two states give the same compositions with all effects, then they must be probabilistically equivalent. A stronger lemma says that two states give the same statistics with all effects if and only if they are probabilistically equivalent. 

  \begin{lemma} \label{th:stateequiv}
  $\lambda\left(\tikzfig{compositionrhoa}\right) = \lambda\left(\tikzfig{compositionrhoprimea}\right) ~~ \forall a \in \mathcal{E}^A \iff \tikzfig{staterho} \sim \tikzfig{staterhoprime}$
  \end{lemma}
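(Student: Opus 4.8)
The plan is to prove the biconditional by establishing each direction separately, treating the reverse implication ($\rho \sim \rho' \Rightarrow \lambda(a\circ\rho)=\lambda(a\circ\rho')$) as a direct specialisation of Definition \ref{def:probequivmorphisms}, and the forward implication as a mild strengthening of Lemma \ref{th:localeffectimpliesstateequiv} in which strict equality of compositions is replaced by equality under $\lambda$.

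For the reverse direction I would instantiate the definition of probabilistic equivalence at the trivial system $C = I$ with ancillary input state $1_I \in \mathcal{S}^I$. Since $(\rho \otimes 1_I)\circ 1_I = \rho$, the defining chain of equalities collapses to $\lambda(\sigma \circ \rho) = \lambda(\sigma \circ \rho')$ for every $\sigma \in \mathcal{E}^A$; renaming $\sigma$ to $a$ yields exactly the left-hand side of the lemma. No further work is needed here, so the content of the statement sits entirely in the other direction.

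For the forward direction I would expand $\rho \sim \rho'$ via its definition and Axiom \textbf{(I)} exactly as in the proof of Lemma \ref{th:localeffectimpliesstateequiv}: $\rho \sim \rho'$ is equivalent to the condition $\lambda(\sigma \circ (\rho \otimes 1_C) \circ m) = \lambda(\sigma \circ (\rho' \otimes 1_C) \circ m)$ for all systems $C$, all ancillary states $m \in \mathcal{S}^C$, and all effects $\sigma \in \mathcal{E}^{A\otimes C}$. The key manoeuvre is then to re-bracket $\sigma \circ (\rho \otimes 1_C) \circ m$ as $a \circ \rho$, where $a := \sigma \circ (1_A \otimes m) \in \mathcal{E}^A$ is a genuine physical effect on $A$ (this uses that in an SPPT every morphism $A \to I$ is an effect). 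Applying the hypothesis $\lambda(a\circ\rho) = \lambda(a\circ\rho')$ to this particular $a$ gives the required equality for each choice of $C, m, \sigma$, and hence $\rho \sim \rho'$.

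The main obstacle is conceptual rather than computational: one must see that the premise, which a priori only controls the "bare" effects $a \in \mathcal{E}^A$ wired directly to $\rho$, in fact controls all the ancilla-assisted scalars $\sigma \circ (\rho \otimes 1_C) \circ m$ that appear in the definition of $\sim$, precisely because each such scalar is itself of the form $a \circ \rho$ for an effect $a$ absorbed from $\sigma$ and $m$. Once this rewriting is in place the argument is routine, and I expect no subtlety beyond confirming that $a = \sigma \circ (1_A \otimes m)$ indeed lies in $\mathcal{E}^A$ so that the hypothesis applies, which is automatic in the simplified setting.
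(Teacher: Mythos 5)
Your proposal is correct and follows essentially the same route as the paper: the paper's proof is a single chain of equivalences in which the middle step absorbs the ancilla state $m$ into the effect to form $\sigma' = \sigma \circ (1_A \otimes m) \in \mathcal{E}^A$ (your forward direction) and, conversely, recovers arbitrary $a \in \mathcal{E}^A$ by specialising to $C = I$, $m = 1_I$ (your reverse direction). Splitting the biconditional into two explicit implications rather than writing one iff-chain is only a presentational difference; the key rebracketing and the observation that $\mathcal{E}^A = \mathcal{C}(A,I)$ in the simplified setting are exactly the paper's.
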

  \begin{proof}
  We know that
  \begin{eqnarray}
  \tikzfig{staterho} \sim \tikzfig{staterhoprime} &\overset{(\text{Lemma } \ref{th:localeffectimpliesstateequiv})}\iff& P\left(1_I, \tikzfig{compositionsigmacircrhotensormdouble}\right) = P\left(1_I, \tikzfig{compositionsigmacircrhoprimetensormdouble}\right) ~ \forall m,\sigma \notag\\
  &\iff& P\left(1_I, \tikzfig{compositionsigmaprimerho}\right) = P\left(1_I, \tikzfig{compositionrhoprimesigmaprime}\right) ~ \forall \sigma' \in \mathcal{E}^A \notag\\
  &\overset{\text{Def.}}\iff&\lambda\left(\tikzfig{compositionsigmaprimerho}\right) = \lambda\left(\tikzfig{compositionrhoprimesigmaprime}\right) ~ \forall \sigma' \in \mathcal{E}^A.
  \end{eqnarray}
  \end{proof}

We now check that scalars factor through probability calculations up-to the homomorphism $\lambda$. 

  \begin{theorem}\label{th:scalermultipletheorem}
  Given $\gamma \in \mathcal{C}(I,I),$ $P\left(\gamma ~ \tikzfig{staterho}, \tikzfig{effectsigma}\right) = \lambda(\gamma)\cdot P\left(\tikzfig{staterho}, \tikzfig{effectsigma}\right)$
  \label{th:scalarfactorthrough}
  \end{theorem}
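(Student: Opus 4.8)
The plan is to reduce the claim to the two properties of $\lambda$ already in hand: its defining relation (Lemma \ref{lemma:lambda}) and its being a monoid homomorphism (Theorem \ref{th:Cmonoidalhomomorphism}). The one geometric input needed is that scaling a state by $\gamma$ and then closing it off with the effect $\sigma$ produces a scalar that splits as the monoidal product of $\gamma$ with the bare composition of $\rho$ and $\sigma$.

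First I would read the scaled state $\gamma \cdot \rho$ as the monoidal product $\gamma \otimes \rho$, regarded as a state on $A$ after absorbing the unitor $I \otimes A \cong A$ (an equality by strictification). Applying Lemma \ref{lemma:lambda} to the state-effect pair $(\gamma \otimes \rho, \sigma)$ gives
\[
P\left(\gamma \cdot \rho, \sigma\right) = \lambda\left(\sigma \circ (\gamma \otimes \rho)\right).
\]
Next I would invoke the standard fact that in a symmetric monoidal category a scalar is a closed (disconnected) subdiagram and therefore factors out of any composite: $\sigma \circ (\gamma \otimes \rho) = \gamma \otimes (\sigma \circ \rho)$. The monoid-homomorphism property then yields $\lambda\left(\gamma \otimes (\sigma \circ \rho)\right) = \lambda(\gamma) \cdot \lambda(\sigma \circ \rho)$, and a second application of Lemma \ref{lemma:lambda} rewrites $\lambda(\sigma \circ \rho) = P(\rho, \sigma)$, giving the stated identity.

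The only step that is not pure bookkeeping is the scalar-splitting identity $\sigma \circ (\gamma \otimes \rho) = \gamma \otimes (\sigma \circ \rho)$, which is the centrality of scalars in a monoidal category (a consequence of the interchange law) and is entirely transparent in the graphical calculus, since $\gamma$ is a wire-free box sitting in parallel with the $\rho$–$\sigma$ loop. I therefore anticipate no genuine obstacle, only the need to state this SMC fact cleanly. As an alternative that bypasses it, one can appeal directly to the Product axiom \textbf{(II)}: writing $\gamma \cdot \rho = \gamma \otimes \rho$ and $\sigma = 1_I \otimes \sigma$, axiom \textbf{(II)} gives $P^{I \otimes A}(\gamma \otimes \rho, 1_I \otimes \sigma) = P^I(\gamma, 1_I) \cdot P^A(\rho, \sigma)$, and $P^I(\gamma, 1_I) = \lambda(1_I \circ \gamma) = \lambda(\gamma)$ by Lemma \ref{lemma:lambda}, recovering the same conclusion.
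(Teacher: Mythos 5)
Your proposal is correct, and the ``alternative'' you sketch in the final sentences is exactly the paper's own proof: the paper writes $\gamma\,\rho$ as $\gamma\otimes\rho$ and $\sigma$ as $1_I\otimes\sigma$, applies the Product axiom \textbf{(II)} to obtain $P(\gamma,1_I)\cdot P(\rho,\sigma)$, and then uses \textbf{(I)} together with the definition of $\lambda$ to identify $P(\gamma,1_I)=P(1_I,\gamma)=\lambda(\gamma)$. Your primary route is a mild repackaging of the same content: instead of invoking \textbf{(II)} directly, you first collapse the whole diagram to a scalar via Lemma~\ref{lemma:lambda}, use centrality of scalars (the interchange law) to write $\sigma\circ(\gamma\otimes\rho)=\gamma\otimes(\sigma\circ\rho)$, and then invoke the multiplicativity of $\lambda$ from Theorem~\ref{th:Cmonoidalhomomorphism} --- which is itself proved from \textbf{(II)}, so the two arguments rest on the same axiom. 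One small point worth stating if you keep the primary route: the multiplicativity in Theorem~\ref{th:Cmonoidalhomomorphism} is phrased for scalars presented as state--effect composites, so you should note that in a simplified probabilistic process theory every scalar $\gamma$ is such a composite (take the state $\gamma\in\mathcal{S}^I$ and the effect $1_I\in\mathcal{E}^I$); with that remark both routes are complete and equivalent in strength.
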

  \begin{proof}
  \begin{eqnarray}
  P\left(\gamma ~ \tikzfig{staterho}, \tikzfig{effectsigma}\right) &=& P\left(\gamma ~\tikzfig{staterho}, 1_I ~ \tikzfig{effectsigma}\right) \notag\\
  &\overset{\textbf{(II)}}=& P(\gamma, 1_I) \cdot P\left(\tikzfig{staterho}, \tikzfig{effectsigma}\right) \notag\\
  &\overset{\textbf{(I)}}=& P(1_I, \gamma) \cdot P\left(\tikzfig{staterho}, \tikzfig{effectsigma}\right) \notag\\
  &\overset{\text{Def.}}=& \lambda(\gamma) \cdot P\left(\tikzfig{staterho}, \tikzfig{effectsigma}\right).
  \end{eqnarray}
  \end{proof}

  Lastly, scalars are equivalent if and only if their corresponding probabilities are the same.

  \begin{lemma} 
  Let $\gamma, \gamma' \in \mathcal{C}(I,I)$. Then,
  $\gamma \sim \gamma' \iff \lambda(\gamma) = \lambda(\gamma')$.
  \label{th:scalarequiv}
  \end{lemma}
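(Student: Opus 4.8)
The plan is to specialise the definition of probabilistic equivalence (Definition \ref{def:probequivmorphisms}) to the case where the two morphisms are the scalars $\gamma, \gamma' \in \mathcal{C}(I,I)$, and then to recognise that pre-composing a state with a scalar is just multiplication of that state by the scalar, so that Theorem \ref{th:scalarfactorthrough} applies directly and collapses the statistical condition into a single numerical one.

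First I would unpack $\gamma \sim \gamma'$. Setting $A = B = I$ in Definition \ref{def:probequivmorphisms} and using $I \otimes C \cong C$, the condition becomes
\[
P^{C}\big((\gamma \otimes 1_C)\circ\rho,\ \sigma\big) = P^{C}\big((\gamma' \otimes 1_C)\circ\rho,\ \sigma\big)
\]
for all objects $C$, all $\rho \in \mathcal{S}^{C}$ and all $\sigma \in \mathcal{E}^{C}$. Since scalars are central in any symmetric monoidal category, the morphism $(\gamma \otimes 1_C)\circ\rho$ is precisely the scalar multiple $\gamma\,\rho$, and likewise $(\gamma' \otimes 1_C)\circ\rho = \gamma'\,\rho$.

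Next I would apply Theorem \ref{th:scalarfactorthrough} to each side, rewriting the displayed equation as
\[
\lambda(\gamma)\cdot P^{C}(\rho,\sigma) = \lambda(\gamma')\cdot P^{C}(\rho,\sigma),
\]
so that $\gamma \sim \gamma'$ holds iff this identity holds for every choice of $C$, $\rho$, and $\sigma$. The backward direction is then immediate: if $\lambda(\gamma) = \lambda(\gamma')$ the two sides coincide for all pairs, hence $\gamma \sim \gamma'$. For the forward direction I would invoke the non-triviality axiom \textbf{(III)}, which guarantees a state-effect pair $(\rho,\sigma)$ with $P(\rho,\sigma) \neq 0$; cancelling this nonzero factor from both sides yields $\lambda(\gamma) = \lambda(\gamma')$.

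The only genuinely delicate points — and the steps I expect to need the most care — are the identification $(\gamma \otimes 1_C)\circ\rho = \gamma\,\rho$, which relies on the centrality of scalars in an SMC, and the appeal to \textbf{(III)} to secure a strictly nonzero probability licensing the cancellation in the forward direction. Everything else is routine bookkeeping, reducing a statement about statistical equivalence of scalars to the elementary arithmetic fact that a common nonzero factor may be cancelled.
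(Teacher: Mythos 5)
Your proposal is correct and follows essentially the same route as the paper's proof: unfold the definition of $\sim$ for scalars, apply Theorem~\ref{th:scalarfactorthrough} to factor out $\lambda(\gamma)$ and $\lambda(\gamma')$, and invoke axiom \textbf{(III)} to cancel a nonzero probability in the forward direction. Your treatment is slightly more explicit about the auxiliary system $C$ and the centrality of scalars, which the paper's proof leaves implicit, but the argument is the same.
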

  \begin{proof}
  $\gamma \sim \gamma' \overset{\text{Def.}}\iff \lambda\left(\tikzfig{compositiongammascalarsigmarho}\right) = \lambda\left(\tikzfig{compositiongammaprimescalarsigmarho}\right)$ for all state-effect pairs $\rho, \sigma$. 
  By Lemma \ref{th:scalermultipletheorem}, this is equivalent to 
  \begin{equation}\lambda(\gamma)\lambda\left(\tikzfig{compositionrhosigma}\right) = \lambda(\gamma')\lambda\left(\tikzfig{compositionrhosigma}\right) \quad \forall \rho, \sigma.\end{equation}
  Since, axiom \textbf{(III)} guarantees that there is some pair for which $\lambda\Bigg(\tikzfig{compositionrhosigma}\Bigg) \neq 0$, we may conclude that $\lambda(\gamma) = \lambda(\gamma')$.
  \end{proof}

%With plenty of interesting features at hand, a natural question arises: can the equivalence classes of these morphisms form a consistent theory themselves? We answer this question in the affirmative in the next section.

%  \section{Quotienting Simplified Probabilistic Process Theories}
%  \label{sec:quotientedtheory}

For a PPT to support a generalised Born rule, we require the monoid homomorphism $\lambda$ to be injective. As a result, we must build a new PPT in which morphisms are quotiented by probabilistic equivalence. This procedure is analogous to ignoring the global phase in pure quantum theory. 
%We will show that applying this quotient to the theory $(\mathbf{FHilb}, |\langle -|- \rangle|^2)$ precisely formalizes this intuition. We begin by defining the quotiented category $\mathcal{Q}(\mathcal{C})$ and describing its properties.

\begin{definition}[\bf Quotiented Category]
We define a new category $\mathcal{Q}(\mathcal{C})$ whose objects are the same as those in $\mathcal{C}$. The morphisms in $\mathcal{Q}(\mathcal{C})$ are the equivalence classes of morphisms from $\mathcal{C}$ under the relation defined in Def.~\ref{def:probequivmorphisms}.
\label{def:quotientedcategory}
\end{definition}

  \begin{definition}[\bf Sequential composition in $\mathcal{Q}(\mathcal{C})$] Given two morphisms $[\phi]_{qc} \in \mathcal{Q}(\mathcal{C})(B,C)$ and $[\chi]_{qc} \in \mathcal{Q}(\mathcal{C})(A,B)$ which are equivalence classes of morphisms $\phi \in \mathcal{C}(B,C)$ and $\chi \in \mathcal{C}(A,B)$ respectively, we define their sequential composition as $[\phi]_{qc} \circ_{qc} [\chi]_{qc} := [\phi \circ \chi]_{qc}$. 
  \end{definition}

  \begin{lemma}[\bf Consistency of $\circ_{qc}$] For the sequential composition to be well-defined, if  $\phi_k \sim \phi_m$ and $\chi_l \sim \chi_n$ we want $ \phi_k \circ \phi_m \sim \chi_l \circ \chi_n.$ This is indeed the case.
  \label{lem:consistencyofcomposition_QC}
  \end{lemma}
  \begin{proof}
 We have

  \begin{equation}
    \phi_k \sim \phi_m \overset{\text{Def.}}{\implies} \lambda\left(\tikzfig{compositionrhophiksigmadouble}\right) = \lambda\left(\tikzfig{compositionrhophimsigmadouble}\right) ~\forall \rho, \sigma,
  \end{equation}

  and
    \begin{equation}
      \chi_l \sim \chi_n \overset{\text{Def.}}{\implies} \lambda\left(\tikzfig{compositionrhochilsigmadouble}\right) = \lambda\left(\tikzfig{compositionrhochinsigmadouble}\right) ~\forall \rho, \sigma.
    \end{equation}

 $\phi_k \circ_{c}\chi_l \sim \phi_m \circ_{c}\chi_l \sim \phi_m \circ_{c}\chi_n$ follows from

    \[\lambda\left(\tikzfig{compositionsigmaphikchimrhodoubleextended}\right) \overset{\phi_k \sim \phi_m}{=} \lambda\left(\tikzfig{compositionsigmaphimchimrhodoubleextended}\right)  \overset{\chi_l \sim \chi_n}{=} \lambda\left(\tikzfig{compositionsigmaphimchinrhodouble}\right).\]
  \end{proof}

  \begin{definition}[\bf Monoidal product in $\mathcal{Q}(\mathcal{C})$] The monoidal product for objects in $\mathcal{Q}(\mathcal{C})$ is the same as in $\mathcal{C}$. For morphisms, given $[\phi]_{qc} \in \mathcal{Q}(\mathcal{C})(A,B)$ and $[\chi]_{qc} \in \mathcal{Q}(\mathcal{C})(C,D)$ which are equivalence classes of morphisms $\phi \in \mathcal{C}(A,B)$ and $\chi \in \mathcal{C}(C,D)$ respectively, we define their monoidal composition as $[\phi]_{qc} \otimes_{qc} [\chi]_{qc} := [\phi \otimes \chi]_{qc}$. 
  \end{definition}

  \begin{lemma}[\bf Consistency of $\otimes_{qc}$]  For the sequential composition to be well-defined, if $\phi_k \sim \phi_m$ and $\chi_l \sim \chi_n$ we want $\phi_k \otimes \phi_m \sim \chi_l \otimes \chi_n.$ This is indeed the case.
  \end{lemma}
  \begin{proof}
By transitivity, it is enough to show $\phi_k \otimes \chi_l \sim \phi_m \otimes \chi_l$ and $\phi_m \otimes \chi_l \sim \phi_m \otimes \chi_n$. The first equivalence holds since $\phi_k \sim \phi_m$:
    \[\lambda\left(\tikzfig{compositionrhophikchilsigmatriple}\right) = \lambda\left(\tikzfig{compositionrhophikchilsigmatripleextended}\right) \overset{\phi_k \sim \phi_m}{=} \lambda\left(\tikzfig{compositionrhophimchilsigmatripleextended}\right) = \lambda\left(\tikzfig{compositionrhophimchilsigmatriple}\right).\]
    The second equivalence holds since $\chi_l \sim \chi_n$:
  \begin{eqnarray}
  \lambda\left(\tikzfig{compositionrhophimchilsigmatriple}\right) &=& \lambda\left(\tikzfig{compositionrhoprimechilphimsigmaprimetriple}\right) \text{ with } \tikzfig{effectsigmaprimetriple} = \tikzfig{effectsigmacircswaptriple}, \text{ and } \tikzfig{staterhoprimetriple} = \tikzfig{staterhocircswaptriple}\notag\\ &\overset{\chi_l \sim \chi_n}{=}&\lambda\left(\tikzfig{compositionrhoprimechinphimsigmaprimetriple}\right) \text{ with same argument as above} \notag\\
  &=& \lambda\left(\tikzfig{compositionrhophimchinsigmatriple}\right) ~~\forall \sigma, \rho.
  \end{eqnarray}
  Therefore, $\phi_k \otimes \chi_l \sim \phi_m \otimes \chi_n$.
  \end{proof}

  \begin{definition}[\bf Identity Morphisms in $\mathcal{Q}(\mathcal{C})$]
  For every object $A$, the identity morphism $1_{A_{qc}} \in \mathcal{Q}(\mathcal{C})(A,A)$ is the equivalence class of $1_{A} \in \mathcal{C}(A,A)$. 
  \end{definition}
  An immediate application of Theorem \ref{th:scalarequiv} reveals that the identity on monoidal unit $1_{I_{qc}}$ consists of all morphisms $\phi$ such that $\phi \sim 1_{I} \iff \lambda(\phi) = \lambda(1_{I})$.

  With all the necessary components defined, the obvious thing to check here is if the category $\mathcal{Q}(\mathcal{C})$ is indeed a symmetric monoidal category. The answer is affirmative. 
  \begin{theorem}
  $\mathcal{Q}(\mathcal{C})$ is a symmetric monoidal category. 
  \label{th:QCisSMC}
  \end{theorem}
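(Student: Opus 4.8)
The plan is to realise $\mathcal{Q}(\mathcal{C})$ as the image of $\mathcal{C}$ under a quotient functor and then transport all of the symmetric monoidal structure along this functor. Write $[f]$ for the $\sim$-equivalence class of a morphism $f$ of $\mathcal{C}$, and let $q : \mathcal{C} \to \mathcal{Q}(\mathcal{C})$ be the identity-on-objects assignment $f \mapsto [f]$. The essential content has in fact already been discharged: the consistency lemmas for $\circ_{qc}$ (Lemma~\ref{lem:consistencyofcomposition_QC}) and for $\otimes_{qc}$ show that $\sim$ is a congruence, so that the transitive closures appearing in the definitions collapse to single classes and the operations are computed representative-wise, $[f] \circ_{qc} [g] = [f \circ g]$ and $[f] \otimes_{qc} [g] = [f \otimes g]$, independently of the chosen representatives. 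Equivalently, $q$ is a full (indeed surjective-on-morphisms), identity-on-objects functor that strictly preserves $\otimes$. Every axiom of a symmetric monoidal category is an equation between composites of morphisms, so once we know $q$ is a structure-preserving surjection, each such equation descends from $\mathcal{C}$ to $\mathcal{Q}(\mathcal{C})$.

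First I would check that $\mathcal{Q}(\mathcal{C})$ is a category. Associativity is immediate from associativity in $\mathcal{C}$: $([f] \circ_{qc} [g]) \circ_{qc} [h] = [(f \circ g) \circ h] = [f \circ (g \circ h)] = [f] \circ_{qc} ([g] \circ_{qc} [h])$. The identity laws follow likewise, using that $1_{A_{qc}} = [1_A]$, giving $[1_B] \circ_{qc} [f] = [1_B \circ f] = [f]$ and symmetrically on the right. Next I would verify that $\otimes_{qc}$ is a bifunctor: it preserves identities since $[1_A] \otimes_{qc} [1_B] = [1_A \otimes 1_B] = [1_{A \otimes B}]$, and it respects composition by the interchange law in $\mathcal{C}$, namely $([f_2] \circ_{qc} [f_1]) \otimes_{qc} ([g_2] \circ_{qc} [g_1]) = [(f_2 \otimes g_2) \circ (f_1 \otimes g_1)] = ([f_2] \otimes_{qc} [g_2]) \circ_{qc} ([f_1] \otimes_{qc} [g_1])$.

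It then remains to supply the coherence isomorphisms and to check naturality together with the coherence conditions. I would simply set $\alpha^{qc}_{A,B,C} = [\alpha_{A,B,C}]$, $\beta^{l,qc}_A = [\beta^l_A]$, $\beta^{r,qc}_A = [\beta^r_A]$, and $\sigma^{qc}_{A,B} = [\sigma_{A,B}]$. Each is an isomorphism in $\mathcal{Q}(\mathcal{C})$ because $q$ sends inverse pairs to inverse pairs, e.g. $[\alpha_{A,B,C}] \circ_{qc} [\alpha_{A,B,C}^{-1}] = [1] = 1_{qc}$. For naturality, any morphism of $\mathcal{Q}(\mathcal{C})$ lifts to one of $\mathcal{C}$ since equivalence classes are nonempty, so each naturality square in $\mathcal{Q}(\mathcal{C})$ is the $q$-image of the corresponding commuting square in $\mathcal{C}$ and hence commutes; the pentagon, triangle, and hexagon identities descend in exactly the same way, being equalities of morphisms in $\mathcal{C}$ pushed through $q$.

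The work of the argument is therefore entirely concentrated in the claim that $\sim$ is a congruence, which is precisely what the consistency lemmas establish; I expect this to be the only genuine obstacle, and it has already been handled prior to the statement. The one remaining point to keep in view is a bookkeeping one: every verification above is legitimate only because $q$ is surjective on morphisms, so that a statement quantified over morphisms of $\mathcal{Q}(\mathcal{C})$ can always be witnessed by representatives in $\mathcal{C}$ — a fact guaranteed by the definition of the morphisms of $\mathcal{Q}(\mathcal{C})$ as nonempty equivalence classes.
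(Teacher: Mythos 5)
Your proposal is correct and follows essentially the same route as the paper's own proof: both arguments reduce everything to the fact that $\sim$ is a congruence (the consistency lemmas for $\circ_{qc}$ and $\otimes_{qc}$), so that the operations are computed representative-wise and every SMC axiom, being an equation between morphisms, descends from $\mathcal{C}$ to the quotient. You are somewhat more explicit than the paper — packaging the argument via the identity-on-objects quotient functor $q$ and spelling out the coherence isomorphisms and their naturality — but the underlying idea is identical.
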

  \begin{proof}
  The objects of $\mathcal{Q}(\mathcal{C})$ are the same as in $\mathcal{C}$ and thus satisfy the SMC axioms. For morphisms, the axioms are inherited from $\mathcal{C}$ because the composition and product operations in $\mathcal{Q}(\mathcal{C})$ are well-defined on equivalence classes. For instance, given a morphism $\mathfrak{f} \in \mathcal{Q}(\mathcal{C})(A,B)$ and a representative $f \in \mathfrak{f}$, we know $1_{I} \otimes f \sim f \otimes 1_{I} \sim f$ in $\mathcal{C}$. Since $\mathfrak{f} \otimes 1_{I_{qc}}$, $1_{I_{qc}} \otimes \mathfrak{f}$, and $\mathfrak{f}$ are their respective equivalence classes, we have $\mathfrak{f} \otimes 1_{I_{qc}} = 1_{I_{qc}} \otimes \mathfrak{f} = \mathfrak{f}$. The remaining axioms follow from similar reasoning.
  \end{proof}

  With the quotiented category $\mathcal{Q}(\mathcal{C})$ defined, and shown to be a symmetric monoidal category, we will proceed to show that it forms a valid probabilistic process theory. We begin with defining a family of probability functions on the newly defined category.  

  \begin{definition}[\bf Probability functions in $\mathcal{Q}(\mathcal{C})$]
    Let $\rho_{qc} \in \mathcal{Q}(\mathcal{C})(I, A)$ and $\sigma_{qc} \in \mathcal{Q}(\mathcal{C})(A, I)$ be the states and effects in the quotiented category $\mathcal{Q}(\mathcal{C})$. We define the family of probability functions $P^A_{\mathcal{Q}(\mathcal{C})}:\mathcal{Q}(\mathcal{C})(I, A) \times \mathcal{Q}(\mathcal{C})(A, I) \rightarrow \mathbb{R}_{\geq 0}$ as follows:
    \begin{equation}
    P^A_{\mathcal{Q}(\mathcal{C})}\left(\tikzfig{staterhoQC},\tikzfig{effectsigmaQC}\right) = P\left(\tikzfig{staterho}, \tikzfig{effectsigma}\right) \text{ for }  \rho \in \rho_{qc} \text{ and } \sigma \in \sigma_{qc}.
    \label{eq:probabilityfunctionQC}
    \end{equation}
  \end{definition}

%  In each step of our construction, we lift the probability function of the newly defined theory from the previous one.  This lifting procedure is what relates the different rungs in our hierarchy. Here, 
  We must ensure that the probability function $P^A_{\mathcal{Q}(\mathcal{C})}$ is well-defined, i.e., it does not depend on the choice of equivalence-class representatives.

  \begin{lemma}
    The probability functions $P^A_{\mathcal{Q}(\mathcal{C})}$ are well-defined.
  \end{lemma}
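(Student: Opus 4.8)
The plan is to show that the assignment does not depend on the chosen representatives: if $\rho, \rho' \in \rho_{qc}$ and $\sigma, \sigma' \in \sigma_{qc}$, i.e. $\rho \sim \rho'$ and $\sigma \sim \sigma'$, then $P(\rho, \sigma) = P(\rho', \sigma')$. Since every element of $\rho_{qc}$ is $\sim$-equivalent to every other, and likewise for $\sigma_{qc}$, this is exactly what well-definedness demands. I would establish the equality by changing one representative at a time, passing through the common intermediate value $P(\rho', \sigma)$.

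First I would reduce probabilities to scalars using Lemma \ref{lemma:lambda}, writing $P(\rho, \sigma) = \lambda(\sigma \circ \rho)$ throughout. For the state swap, Lemma \ref{th:stateequiv} gives directly that $\rho \sim \rho'$ implies $\lambda(a \circ \rho) = \lambda(a \circ \rho')$ for every effect $a \in \mathcal{E}^A$; specialising to $a = \sigma$ yields $P(\rho, \sigma) = P(\rho', \sigma)$.

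Second, for the effect swap I need the effect-side analogue of Lemma \ref{th:stateequiv}, namely that $\sigma \sim \sigma'$ forces $\lambda(\sigma \circ s) = \lambda(\sigma' \circ s)$ for every state $s$. This is the step I expect to require the most care, since the excerpt only states the characterisation for states. I would obtain it by unwinding Definition \ref{def:probequivmorphisms} for the morphisms $\sigma, \sigma' : A \to I$ with trivial ancilla $C = I$, outer effect $1_I$, and extending state $\rho'$; the definition then collapses to $\lambda(\sigma \circ \rho') = \lambda(\sigma' \circ \rho')$, i.e. $P(\rho', \sigma) = P(\rho', \sigma')$. Equivalently, one may prove the effect-side dual of Lemma \ref{th:stateequiv} by repeating the argument given there, invoking the symmetry of the SMC to interchange the roles of states and effects.

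Chaining the two equalities gives $P(\rho, \sigma) = P(\rho', \sigma) = P(\rho', \sigma')$, so the value $P^A_{\mathcal{Q}(\mathcal{C})}(\rho_{qc}, \sigma_{qc})$ is independent of the representatives and the probability function is well-defined. The only genuine subtlety is the effect-side reduction; everything else is a routine application of Lemmas \ref{lemma:lambda} and \ref{th:stateequiv}.
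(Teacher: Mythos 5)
Your proof is correct, but it is organised differently from the paper's. The paper reduces both probabilities to scalars via Lemma~\ref{lemma:lambda}, then invokes Lemma~\ref{lem:consistencyofcomposition_QC} (consistency of $\circ_{qc}$) to conclude $\sigma\circ\rho \sim \sigma'\circ\rho'$ directly from $\rho\sim\rho'$ and $\sigma\sim\sigma'$, and finishes with the scalar characterisation of Lemma~\ref{th:scalarequiv}, which gives $\lambda(\sigma\circ\rho)=\lambda(\sigma'\circ\rho')$ in one step. You instead change one representative at a time: the state swap via Lemma~\ref{th:stateequiv}, and the effect swap by unwinding Definition~\ref{def:probequivmorphisms} with trivial ancilla $C=I$, outer effect $1_I$ and extending state $\rho'$ --- a correct instantiation, since the defining condition for $\sigma\sim\sigma'$ then collapses to $\lambda(\sigma\circ\rho')=\lambda(\sigma'\circ\rho')$. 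The paper's route is shorter because the composition-consistency lemma already did the work of combining the two equivalences; your route is more self-contained and makes explicit the asymmetry in the paper's stated lemmas (the equivalence characterisation is only proved for states, not effects), which you correctly identify as the one point needing care. Either argument is acceptable; if you wanted to shorten yours, you could simply cite Lemma~\ref{lem:consistencyofcomposition_QC} for the single equivalence $\sigma\circ\rho\sim\sigma'\circ\rho'$ and then Lemma~\ref{th:scalarequiv}, as the paper does.
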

  \begin{proof}
    Let $\rho, \rho' \in \rho_{qc}$ and $\sigma, \sigma' \in \sigma_{qc}$, then
    \begin{eqnarray}
    P^A_{\mathcal{Q}(\mathcal{C})}\left(\tikzfig{staterhoQC},\tikzfig{effectsigmaQC}\right) = P\left(\tikzfig{staterho}, \tikzfig{effectsigma}\right) = P\left(1_I, \tikzfig{compositionrhosigma}\right) = \lambda\left(\tikzfig{compositionrhosigma}\right), \notag\\
      P^A_{\mathcal{Q}(\mathcal{C})}\left(\tikzfig{staterhoQC},\tikzfig{effectsigmaQC}\right) = P\left(\tikzfig{staterhoprime}, \tikzfig{effectsigmaprime}\right) = P\left(1_I, \tikzfig{compositionrhoprimesigmaprime}\right) = \lambda\left(\tikzfig{compositionrhoprimesigmaprime}\right). 
    \end{eqnarray}
  As $\rho, \rho' \in \rho_{qc}$ and $\sigma, \sigma' \in \sigma_{qc}$, we have $\rho \sim \rho'$ and $\sigma \sim \sigma'$, which implies $\sigma \circ \rho \sim \sigma' \circ \rho'$ by lemma \ref{lem:consistencyofcomposition_QC}. By lemma \ref{th:scalarequiv}, we have
  \begin{equation} \sigma \circ \rho \sim \sigma' \circ \rho' \iff \lambda\left(\tikzfig{compositionrhosigma}\right) = \lambda\left(\tikzfig{compositionrhoprimesigmaprime}\right),\end{equation}
which shows that the value of $P^A_{\mathcal{Q}(\mathcal{C})}$ is independent of the choice of representatives from the equivalence classes. 
%Thus, the probability function is well-defined.
  \end{proof}

  We can define the $\lambda$ function for the quotiented category analogously. 

  \begin{definition}
    $\lambda$ function for the quotiented category $\lambda_{\mathcal{Q}(\mathcal{C})}:\mathcal{Q}(\mathcal{C})(I,I) \rightarrow \mathbb{R}_{\geq 0}$ is defined as follows:
    \begin{equation}
    \lambda_{\mathcal{Q}(\mathcal{C})}(\gamma_{qc}) = P^A_{\mathcal{Q}(\mathcal{C})}(1_I, \gamma_{qc}) = P(1_I, \gamma) \text{ for } \gamma \in \gamma_{qc}.
    \end{equation}
  \end{definition}

  As the next step, we show that the quotiented category $\mathcal{Q}(\mathcal{C})$ with the family of probability function $P^A_{\mathcal{Q}(\mathcal{C})}$ constitutes a valid probabilistic process theory.

  \begin{theorem}
  The probability function $P_{\mathcal{Q}(\mathcal{C})}$ satisfies the axioms \textbf{(I)}, \textbf{(II)}, and \textbf{(III)} in the category $\mathcal{Q}(\mathcal{C})$.
  \label{th:QCProbaxioms}
  \end{theorem}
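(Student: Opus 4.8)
The plan is to lift each of the three axioms from $\mathcal{C}$ to $\mathcal{Q}(\mathcal{C})$ by working with representatives, leaning throughout on the already-established fact that $P_{\mathcal{Q}(\mathcal{C})}$ is independent of the chosen representative. The guiding observation is that sequential composition, monoidal product, and the probability function in $\mathcal{Q}(\mathcal{C})$ are all defined so that a representative of a composite may be computed from representatives of its parts; consequently each axiom for $P_{\mathcal{Q}(\mathcal{C})}$ reduces to the corresponding axiom for $P$ in $\mathcal{C}$.

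For axiom \textbf{(I)}, given classes $\mathfrak{f}$, $\rho_{qc}$, and $\sigma_{qc}$, I would first note that for representatives $f \in \mathfrak{f}$ and $\rho \in \rho_{qc}$ the morphism $f \circ \rho$ is a legitimate representative of $\mathfrak{f} \circ_{qc} \rho_{qc}$, since it lies in the generating set of the closure defining that composite; likewise $\sigma \circ f$ represents $\sigma_{qc} \circ_{qc} \mathfrak{f}$. Evaluating $P_{\mathcal{Q}(\mathcal{C})}$ on these representatives, the desired identity collapses to $P^{B}(f \circ \rho, \sigma) = P^{A}(\rho, \sigma \circ f)$, which is exactly axiom \textbf{(I)} in $\mathcal{C}$, and representative-independence makes the particular choice harmless.

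For axiom \textbf{(II)}, the monoidal product in $\mathcal{Q}(\mathcal{C})$ admits $\rho_1 \otimes \rho_2$ and $\sigma_1 \otimes \sigma_2$ as representatives of $\rho_{1,qc} \otimes_{qc} \rho_{2,qc}$ and $\sigma_{1,qc} \otimes_{qc} \sigma_{2,qc}$ respectively, so that evaluating $P_{\mathcal{Q}(\mathcal{C})}$ reduces the factorisation claim directly to axiom \textbf{(II)} for $P$. Axiom \textbf{(III)} is immediate: since every value $P_{\mathcal{Q}(\mathcal{C})}(\rho_{qc}, \sigma_{qc}) = P(\rho, \sigma)$ is realised by a pair in $\mathcal{C}$ and conversely, the range of $P_{\mathcal{Q}(\mathcal{C})}$ coincides with that of $P$; hence the witnesses guaranteed by non-triviality in $\mathcal{C}$ descend to witnesses in $\mathcal{Q}(\mathcal{C})$.

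The only genuinely non-routine point is justifying that a representative of $\mathfrak{f} \circ_{qc} \rho_{qc}$ (and of the monoidal composite) may be taken to be the composite of representatives, and that doing so is harmless because $P_{\mathcal{Q}(\mathcal{C})}$ is representative-independent. This is precisely what the consistency lemmas—Lemma \ref{lem:consistencyofcomposition_QC} and its monoidal analogue—together with the well-definedness of $P_{\mathcal{Q}(\mathcal{C})}$ secure, so once those are invoked, the three axioms follow with no further calculation.
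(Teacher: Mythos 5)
Your proof is correct and follows essentially the same route as the paper's: pick representatives, observe that composites of representatives represent the composites, and reduce each axiom to its counterpart in $\mathcal{C}$, with well-definedness of $P_{\mathcal{Q}(\mathcal{C})}$ and the consistency lemmas guaranteeing the choice of representatives is harmless. Your explicit flagging of the representative-of-the-composite point is a slightly more careful articulation of what the paper leaves implicit, but the argument is the same.
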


  \begin{proof}
    \textbf{(I)}: Let $\rho_{qc} \in \mathcal{Q}(\mathcal{C})(I,A)$, $\sigma_{qc} \in \mathcal{Q}(\mathcal{C})(B,I)$ and $f_{qc} \in \mathcal{Q}(\mathcal{C})(A,B)$ be the states, effects and morphisms in the quotiented category $\mathcal{Q}(\mathcal{C})$ and let $\rho \in \rho_{qc}, \sigma  \in \sigma_{qc}$ and $f \in f_{qc}$ . Then,
    \begin{equation}
    P_{\mathcal{Q}(\mathcal{C})}\left(\tikzfig{statefcircrhodoubleQC},\tikzfig{effectsigmadoubleQC}\right) = P\left(\tikzfig{statefcircrhodouble}, \tikzfig{effectsigmadouble}\right) = P\left(\tikzfig{staterhodouble}, \tikzfig{effectsigmacircfdouble} \right)  = P_{\mathcal{Q}(\mathcal{C})}\left(\tikzfig{staterhodoubleQC}, \tikzfig{effectsigmacircfdoubleQC}\right).
    \end{equation}
    \textbf{(II)}:  Let $\rho_{qc}, \rho'_{qc} \in \mathcal{Q}(\mathcal{C})(I,A)$ and  $\sigma_{qc}, \sigma'_{qc} \in \mathcal{Q}(\mathcal{C})(B,I)$ be states and effects in the quotiented category $\mathcal{Q}(\mathcal{C})$ and let $\rho \in \rho_{qc}, \rho' \in \rho'_{qc}, \sigma \in \sigma_{qc}, \sigma' \in \sigma'_{qc}$. Then
     \begin{equation}
\begin{aligned}
P_{\mathcal{Q}(\mathcal{C})}\left( \tikzfig{staterhoQC} \tikzfig{staterhoprimeQC},\tikzfig{effectsigmaQC} \tikzfig{effectsigmaprimeQC}\right)
    &= P\left( \tikzfig{staterho} \tikzfig{staterhoprime}, \tikzfig{effectsigma} \tikzfig{effectsigmaprime}\right) \\[-4pt]
    & \hspace*{-1cm} = P\left(\tikzfig{staterho}, \tikzfig{effectsigma}\right) \,  
       P\left(\tikzfig{staterhoprime}, \tikzfig{effectsigmaprime}\right)
       = P_{\mathcal{Q}(\mathcal{C})}\left( \tikzfig{staterhoQC}, \tikzfig{effectsigmaQC}\right)
         P_{\mathcal{Q}(\mathcal{C})}\left( \tikzfig{staterhoprimeQC}, \tikzfig{effectsigmaprimeQC}\right).
\end{aligned}
\end{equation}
    \textbf{(III)}: Since, there exists a pair $(\rho, \sigma)$ in the original theory such that $P(\rho, \sigma) \neq 0$, we can choose $\rho_{qc} \in \mathcal{Q}(\mathcal{C})(I,A)$ and $\sigma_{qc} \in \mathcal{Q}(\mathcal{C})(B,I)$ to be the equivalence classes of $\rho$ and $\sigma$ respectively. Then, we have
    \[
    P_{\mathcal{Q}(\mathcal{C})}( \rho_{qc},\sigma_{qc}) = P(\rho, \sigma) \neq 0. 
    \]
    Similarly, there exists a pair $(\rho', \sigma')$ in the original theory such that $P(\rho', \sigma') \neq 1$, we can choose $\rho'_{qc} \in \mathcal{Q}(\mathcal{C})(I,A)$ and $\sigma'_{qc} \in \mathcal{Q}(\mathcal{C})(B,I)$ to be the equivalence classes of $\rho'$ and $\sigma'$ respectively. 
    \[
    P_{\mathcal{Q}(\mathcal{C})}( \rho'_{qc},\sigma'_{qc}) = P(\rho', \sigma') \neq 1. 
    \]
    
  \end{proof}

  \begin{corollary}
    $(\mathcal{Q}(\mathcal{C}), \{P^A_{\mathcal{Q}(\mathcal{C})}\})$ is a probabilistic process theory.
    \label{cor:QCvalidtheory}
  \end{corollary}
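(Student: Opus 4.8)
The plan is to verify that the pair $(\mathcal{Q}(\mathcal{C}), \{P^A_{\mathcal{Q}(\mathcal{C})}\})$ meets every clause of Definition \ref{def:physicaltheory}, since at this point all the substantive work has already been carried out in the preceding results and what remains is to assemble them. Recall that a simplified probabilistic process theory requires exactly two ingredients: a symmetric monoidal category, together with a family of probability functions defined on its state--effect pairs and obeying the three constraints \textbf{(I)}, \textbf{(II)}, and \textbf{(III)}. The corollary is therefore a packaging statement, and I would prove it by citing the three facts that supply these ingredients.

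First I would invoke Theorem \ref{th:QCisSMC}, which establishes that $\mathcal{Q}(\mathcal{C})$ is a genuine symmetric monoidal category: its objects coincide with those of $\mathcal{C}$, while the composition $\circ_{qc}$, the monoidal product $\otimes_{qc}$, and the identities $1_{A_{qc}}$ were each shown to be well-defined on equivalence classes (Lemma \ref{lem:consistencyofcomposition_QC} and the consistency lemma for $\otimes_{qc}$) and to inherit the coherence axioms from $\mathcal{C}$. Second, I would note that the probability functions $P^A_{\mathcal{Q}(\mathcal{C})}$ are well-defined, i.e. independent of the choice of class representatives; this is precisely the content of the preceding lemma, where $\rho \sim \rho'$ and $\sigma \sim \sigma'$ were shown to force $\sigma \circ \rho \sim \sigma' \circ \rho'$ and hence, via Lemma \ref{th:scalarequiv}, equal probability values.

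Finally, the three defining constraints are exactly the conclusion of Theorem \ref{th:QCProbaxioms}, which verifies associativity \textbf{(I)}, the product rule \textbf{(II)}, and non-triviality \textbf{(III)} for $P_{\mathcal{Q}(\mathcal{C})}$ by pushing each condition down to the corresponding statement for $P$ on chosen representatives. Combining these three facts discharges every requirement of Definition \ref{def:physicaltheory}, so the corollary follows immediately.

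I do not expect a genuine obstacle at this stage: the only point that required care --- that all of the quotient-level structure descends soundly to equivalence classes --- has already been secured in the consistency lemmas for $\circ_{qc}$ and $\otimes_{qc}$ together with the well-definedness lemma for $P^A_{\mathcal{Q}(\mathcal{C})}$. If anything, the subtlety worth flagging explicitly in the write-up is that these soundness results all trace back to Lemma \ref{th:scalarequiv} and Corollary \ref{corr:compprobequiv}, which convert statements about equal compositions into statements about equal probabilities; the corollary is then simply the observation that the bundle $(\mathcal{Q}(\mathcal{C}), \{P^A_{\mathcal{Q}(\mathcal{C})}\})$ instantiates Definition \ref{def:physicaltheory} in the (simplified) sense recorded in the note following that definition.
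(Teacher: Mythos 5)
Your proposal is correct and matches the paper's (implicit) argument exactly: the corollary is stated without proof precisely because it is the assembly of Theorem \ref{th:QCisSMC} (SMC structure), the well-definedness lemma for $P^A_{\mathcal{Q}(\mathcal{C})}$, and Theorem \ref{th:QCProbaxioms} (axioms \textbf{(I)}--\textbf{(III)}). You also correctly read the statement in the simplified sense of Definition \ref{def:physicaltheory}, which is how the paper itself uses the corollary later on.
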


  We term $(\mathcal{Q}(\mathcal{C}), \{P^A_{\mathcal{Q}(\mathcal{C})}\})$ a quotiented SPPT. Now, we examine the impact of this quotienting procedure on pure quantum theory $(\mathbf{FHilb }, |\langle -|- \rangle|^2)$ and noisy quantum theory $(\mathbf{CP}, \Tr[- \circ -])$.

%  \ga{Wait, we can just do doubling here instead. A theory with dephased morphisms is really the same as theory with doubled morphisms}
  \begin{theorem}[\bf Characterisation of states and morphisms in quotiented $(\mathbf{FHilb}, |\langle -|- \rangle|^2)$] Starting from $(\mathbf{FHilb }, |\langle -|- \rangle|^2)$ with states  $\ket{\psi} \in \mathbf{FHilb}(I,A)$ and morphisms $f \in \mathbf{FHilb}(A,B)$,  the states and morphisms in quotiented theory  are equivalence classes $\{e^{i\theta}\ket{\psi} | ~\theta \in [0,2\pi)\}$ and $\{e^{i\theta}f | ~\theta \in [0,2\pi)\}$ respectively, i.e., they correspond to unit rays of states and morphisms. Consequently, the process theory produced by quotienting $\mathbf{FHilb}$ is in bijection with the rank-1 completely positive maps of form $\ketbra{\psi}$ and $f (\cdot) f^\dagger$ respectively. 
  %Furthermore, these also correspond to doubled states $\ket{\psi} \otimes \ket{\psi^*}$ and doubled morphisms $f \otimes f^*$, where $*$ is the complex conjugation
  \end{theorem}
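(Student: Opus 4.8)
The plan is to unfold the definition of probabilistic equivalence in the concrete case of $\mathbf{FHilb}$ and reduce it to an elementary Cauchy--Schwarz argument. First I would pin down the homomorphism $\lambda$ for this theory: since the scalars are $\mathcal{C}(I,I)=\mathbb{C}$ and $P(\ket{\psi},\bra{\phi})=|\langle\phi|\psi\rangle|^2$, Lemma~\ref{lemma:lambda} gives $\lambda(z)=P(1_I,z)=|z|^2$. Hence two morphisms $f,g\in\mathbf{FHilb}(A,B)$ are probabilistically equivalent precisely when
\[
|\bra{\sigma}(f\otimes 1_C)\ket{\rho}|^2=|\bra{\sigma}(g\otimes 1_C)\ket{\rho}|^2
\]
for every ancilla $C$ and all unit vectors $\ket{\rho}\in A\otimes C$, $\ket{\sigma}\in B\otimes C$. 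The forward direction is immediate: if $g=e^{i\theta}f$ then $\bra{\sigma}(g\otimes 1_C)\ket{\rho}=e^{i\theta}\bra{\sigma}(f\otimes 1_C)\ket{\rho}$, so the moduli agree and $f\sim g$, and the induced maps $X\mapsto fXf^\dagger$, $\ketbra{\psi}{\psi}$ are manifestly phase-invariant.

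The substance is the converse, and the key move is to exploit the ancilla. I would take $C\cong A$ and let $\ket{\rho}=\ket{\Phi}=\tfrac{1}{\sqrt{\dim A}}\sum_i\ket{i}_A\ket{i}_C$ be maximally entangled. Then $(f\otimes 1_C)\ket{\Phi}$ is, up to a normalisation independent of $f$, the vectorisation of the matrix of $f$; choosing $\ket{\sigma}$ to be the normalised vectorisation of an arbitrary matrix $M$ turns the overlap into a Hilbert--Schmidt pairing, $\bra{\sigma}(f\otimes 1_C)\ket{\Phi}\propto\Tr(M^\dagger f)$, with the same proportionality constant for $f$ and $g$. The equivalence condition therefore collapses to
\[
|\Tr(M^\dagger f)|=|\Tr(M^\dagger g)|\qquad\text{for all }M .
\]

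At this point the claim follows from a two-substitution Cauchy--Schwarz argument. Writing $\langle X,Y\rangle_{HS}=\Tr(X^\dagger Y)$ and setting $M=f$ and then $M=g$ gives $\|f\|_{HS}^2=|\langle f,g\rangle_{HS}|=\|g\|_{HS}^2$, whence $\|f\|_{HS}=\|g\|_{HS}$ and $|\langle f,g\rangle_{HS}|=\|f\|_{HS}\,\|g\|_{HS}$. Saturating Cauchy--Schwarz forces $g=e^{i\theta}f$ for some phase $\theta$ (the case $f=0$ forcing $g=0$ trivially). This is exactly the nontrivial containment of each equivalence class in a unit ray, so the morphism classes are precisely $\{e^{i\theta}f\}$. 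The statement for states is the special case $A=I$: here one may take $C=I$ and apply the same reasoning directly to $|\langle\phi|\psi\rangle|=|\langle\phi|\psi'\rangle|$, evaluating at $\phi=\psi$ and $\phi=\psi'$ to conclude $\psi'=e^{i\theta}\psi$.

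Finally, for the bijection with rank-1 completely positive maps I would send the ray $\{e^{i\theta}\ket{\psi}\}$ to $\ketbra{\psi}{\psi}$ and the ray $\{e^{i\theta}f\}$ to $X\mapsto fXf^\dagger$; both are well defined on rays by the phase-invariance noted above, and surjectivity onto the rank-1 CP maps is the definition of that class. Injectivity reduces to the standard uniqueness of a single Kraus operator up to phase, which I would verify by testing $fXf^\dagger=gXg^\dagger$ on rank-one $X=\ketbra{a}{b}$ to obtain $\ket{fa}\bra{fb}=\ket{ga}\bra{gb}$ and hence a constant phase $g=e^{i\theta}f$. I expect the main obstacle to be precisely this converse: the naive attempt using only $C=I$ fails, because the modulus data of the bilinear form $\bra{\phi}f\ket{\psi}$ alone suffers from phase-retrieval ambiguities (for example entrywise complex conjugation), and it is exactly the entangled ancilla that upgrades the problem to the Hilbert--Schmidt setting where only a global phase survives.
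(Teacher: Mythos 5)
Your proof is correct, and it follows the same broad strategy as the paper's (unfold probabilistic equivalence concretely in $\mathbf{FHilb}$, pass to the Choi/ancilla picture, and read off rays and rank-1 CP maps), but your handling of the one genuinely delicate step is tighter than the paper's. The paper argues
\[
|\langle\sigma|(f\otimes 1)|\rho\rangle|^2=|\langle\sigma|(f'\otimes 1)|\rho\rangle|^2\ \forall\rho,\sigma
\ \iff\ \langle\sigma|(f\otimes 1)|\rho\rangle=e^{i\theta}\langle\sigma|(f'\otimes 1)|\rho\rangle ,
\]
silently assuming the phase $\theta$ is uniform in $\rho,\sigma$ (its cleaner ``alternative'' derivation goes through equality of Choi matrices but then leaves implicit the standard fact that $f(\cdot)f^\dagger=f'(\cdot)f'^\dagger$ forces $f'=e^{i\theta}f$). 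You close both gaps at once: the maximally entangled ancilla converts the hypothesis into $|\Tr(M^\dagger f)|=|\Tr(M^\dagger g)|$ for all $M$, and the two substitutions $M=f$, $M=g$ saturate Cauchy--Schwarz in the Hilbert--Schmidt inner product, which is exactly what pins down a single global phase; you also verify injectivity of $\{e^{i\theta}f\}\mapsto f(\cdot)f^\dagger$ explicitly. The one inaccuracy is your closing remark: entrywise complex conjugation is \emph{not} a phase-retrieval ambiguity for the unassisted ($C=I$) data, since $|\bra{\sigma}\bar f\ket{\rho}|=|\bra{\bar\sigma}f\ket{\bar\rho}|\neq|\bra{\sigma}f\ket{\rho}|$ in general; in fact the condition $|\bra{\sigma}f\ket{\rho}|=|\bra{\sigma}g\ket{\rho}|$ for all $\rho,\sigma$ already determines $f$ up to a global phase (one shows $f\rho=e^{i\theta(\rho)}g\rho$ pointwise and then that $\theta$ is constant by evaluating on sums), so the ancilla is a convenience that streamlines the argument rather than a logical necessity. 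This side remark is not load-bearing, and the proof as given stands.
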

  \begin{proof}
  Given $\ket{\psi} ,\ket{\psi'} $ are states in $\mathbf{FHilb}$
  \begin{equation}
    \ket{\psi} \sim \ket{\psi'} \overset{(\text{Lemma } \ref{th:stateequiv})}\iff \lambda\left(\tikzfig{compositionpsisigma}\right) = \lambda\left(\tikzfig{compositionpsiprimesigma}\right) ~ \forall \sigma \in \mathcal{S}^B.
    \label{eq:stateequivFHilb}
  \end{equation}
  In $(\mathbf{FHilb }, |\langle -|- \rangle|^2)$ this corresponds to
  \begin{eqnarray}
  && |\langle \sigma| \psi \rangle|^2 = |\langle \sigma| \psi' \rangle|^2 \qquad \forall ~\sigma \notag\\
  &\iff&  \langle \sigma| \psi \rangle = e^{i \theta} \langle \sigma| \psi' \rangle  \notag\\
  &\iff& \langle \sigma |( \ket{\psi} - e^{i \theta} \ket{\psi'}) = 0 \qquad \forall ~\sigma \notag\\
  &\iff& \ket{\psi} = e^{i \theta} \ket{\psi'}.
  \end{eqnarray}
$\therefore$ states in quotiented $(\mathbf{FHilb }, |\langle -|- \rangle|^2)$  are sets of states $e^{i \theta}\ket{\psi}$. Alternatively, note that  \begin{eqnarray}
  & & |\langle \sigma| \psi \rangle|^2 = |\langle \sigma| \psi' \rangle|^2 \qquad \forall ~\sigma \notag\\
  &\iff&  \langle \sigma| \psi \rangle \langle \psi| \sigma \rangle =  \langle \sigma| \psi' \rangle \langle \psi'| \sigma \rangle  \notag\\ 
  &\iff& \langle \sigma |( \ketbra{\psi} - \ketbra{\psi'}) | \sigma \rangle = 0 \qquad \forall ~\sigma \notag\\
  &\iff& \ketbra{\psi} = \ketbra{\psi'}.
  \end{eqnarray}
 $\therefore$ states in quotiented $(\mathbf{FHilb }, |\langle -|- \rangle|^2)$ correspond to rank-1 unnormalised density matrices $\ketbra{\psi}$. 
 %Similarly
 %  \begin{eqnarray}
 % &\iff& |\langle \sigma| \psi \rangle|^2 = |\langle \sigma| \psi' \rangle|^2 \qquad \forall ~\sigma \notag\\
 % &\iff&  \langle \sigma| \psi \rangle \otimes \langle \sigma^*| \psi^* \rangle =  \langle \sigma| \psi' \rangle \otimes \langle \sigma^*| \psi'^* \rangle   \notag\\ 
 % &\iff&  \langle \sigma \otimes \sigma^*| \psi \otimes \psi^* \rangle  =  \langle \sigma \otimes \sigma^*| \psi' \otimes \psi'^* \rangle \notag\\ 
 % && \text{ where we denote }\ket{a \otimes b} \text{ as a shorthand for } \ket{a} \otimes \ket{b} \notag\\
 % &\iff& \langle \sigma \otimes \sigma^* | ( \ket{\psi \otimes \psi^* } - \ket{\psi' - \psi'^*}) = 0 \qquad \forall ~\sigma \notag\\
 % &\iff& \ket{\psi \otimes \psi^* } = \ket{\psi' \otimes \psi'^*} \text{ or } \ket{\psi} \otimes \ket{\psi^* } = \ket{\psi'} \otimes \ket{\psi'^*}.
 % \end{eqnarray}
 %$\therefore$ states in quotiented $(\mathbf{FHilb }, |\langle -|- \rangle|^2)$  correspond to doubled states $\ket{\psi} \otimes \ket{\psi^* }$.
% blah
 For morphisms $f, f'$ in $\mathbf{FHilb}$ such that $f \sim f'$, we have
  \begin{eqnarray}
  f \sim  f' &\iff& \lambda\left(\tikzfig{compositionrhofsigmadouble}\right) = \lambda\left(\tikzfig{compositionrhofprimesigmadouble}\right) ~ \forall \rho, \sigma. 
  \end{eqnarray}
In $(\mathbf{FHilb }, |\langle -|- \rangle|^2)$ this corresponds to
  \begin{eqnarray}
  && |\langle \sigma | f \otimes 1_B| \rho \rangle|^2 = | \langle \sigma | f' \otimes 1_B| \rho \rangle|^2 \qquad \forall \rho, \sigma \notag\\
  &\iff& \langle \sigma | f \otimes 1_B| \rho \rangle = e^{i \theta} \langle \sigma | f' \otimes 1_B| \rho \rangle  \notag\\
  &\iff& \langle \sigma | (f - e^{i \theta}f') \otimes1_B| \rho \rangle = 0 \qquad \forall \rho, \sigma \notag\\
  &\iff& f = e^{i \theta} f' .
  \end{eqnarray}
 $\therefore$ morphisms in quotiented $\mathbf{FHilb}$ are sets of morphisms $e^{i \theta}f$. Alternatively, note that
    \begin{eqnarray}
  && |\langle \sigma | f \otimes1_B| \rho \rangle|^2 = | \langle \sigma | f' \otimes1_B| \rho \rangle|^2 \qquad \forall \rho, \sigma \notag\\
  &\iff& \langle \sigma | f \otimes1_B| \rho \rangle \langle \rho | f^\dagger \otimes1_B| \sigma \rangle = \langle \sigma | f' \otimes1_B| \rho \rangle \langle \rho | f'^\dagger \otimes1_B| \sigma \rangle  \notag\\
  &\iff& \langle \sigma | \left(f \otimes1_B| \rho \rangle \langle \rho | f^\dagger \otimes1_B- f' \otimes1_B| \rho \rangle \langle \rho | f'^\dagger \otimes1_B\right) | \sigma \rangle = 0 \qquad \forall \rho, \sigma \notag\\
  &\iff& f \otimes1_B| \rho \rangle \langle \rho | f^\dagger \otimes1_B = f' \otimes1_B| \rho \rangle \langle \rho | f'^\dagger \otimes1_B\qquad \forall \rho \notag\\
  &\iff& f \text{ and } f' \text{ have the same Choi matrices} \notag\\
  &\iff& \text{they act identically on density matrices } f (\cdot) f^\dagger = f' (\cdot) f'^\dagger.
  \end{eqnarray}
  $\therefore$ morphisms in quotiented $\mathbf{FHilb}$ correspond to rank-1 completely positive map $f (\cdot) f^\dagger$. 
  %Similarly, 
  %    \begin{eqnarray}
 % &\iff& |\langle \sigma | f \otimes1_B| \rho \rangle|^2 = | \langle \sigma | f' \otimes1_B| \rho \rangle|^2 \qquad \forall \rho, \sigma \notag\\
 % &\iff& \langle \sigma | f \otimes1_B| \rho \rangle \otimes \langle \sigma^* | f^* \otimes1_B| \rho^* \rangle = \langle \sigma | f' \otimes1_B| \rho \rangle \otimes \langle \sigma^* | f'^* \otimes1_B| \rho^* \rangle  \notag\\
%    &\iff& \langle \sigma \otimes \sigma^* | (f \otimes1_B) \otimes (f^* \otimes1_B) | \rho \otimes \rho^* \rangle  = \langle \sigma \otimes \sigma^* | (f' \otimes1_B) \otimes (f'^* \otimes1_B) | \rho \otimes \rho^* \rangle \notag\\
%    &\iff& \langle \sigma \otimes \sigma^* | (f \otimes1_B) \otimes (f^* \otimes1_B) - (f' \otimes1_B) \otimes (f'^* \otimes1_B)  | \rho \otimes \rho^* \rangle  = 0 \qquad \forall \rho, \sigma \notag\\
%    &\iff&  (f \otimes1_B) \otimes (f^* \otimes1_B) = (f' \otimes1_B) \otimes (f'^* \otimes1_B)   \notag\\
%  &\iff& f \otimes f^* = f' \otimes f'^*
%  \end{eqnarray}
%  $\therefore$ morphisms in quotiented $(\mathbf{FHilb }, |\langle -|- \rangle|^2)$  correspond to doubled morphisms $f \otimes f^*$.
  \end{proof}

%  \begin{theorem}
%    The probability functions in a quotiented probabilistic process theory constructed from pure quantum theory $\{\mathbf{FHilb}, |\langle - | - \rangle|^2\}$ with doubled representation for states and morphisms is given by $P_{\mathcal{Q}(\mathcal{C})}(\ket{\psi} \otimes \ket{\psi^*}, \bra{\phi} \otimes \bra{\phi^*}) =  \langle \phi \otimes \phi^* | \psi \otimes \psi^* \rangle$, where $\ket{\psi}$ and $\bra{\phi}$ are compatible states and effects in \mathbf{FHilb}. 
%  \end{theorem}
%  \begin{proof}
%    By definition of probability function in quotiented theory, we have $P_{\mathcal{Q}(\mathcal{C})}(\rho_{qc}, \sigma_{qc}) = P(\rho, \sigma) ~ \forall \rho \in \rho_{qc}, ~ \sigma \in \sigma_{qc}$. In quantum theory $(\mathbf{FHilb}, |\langle - | - \rangle|^2)$, this translates to  $P_{\mathcal{Q}(\mathcal{C})}(\ket{\psi} \otimes \ket{\psi^*}, \bra{\phi} \otimes \bra{\phi^*}) = P(\ket{\psi}, \bra{\phi}) = |\langle \phi | \psi \rangle|^2 = \langle \phi \otimes \phi^* | \psi \otimes \psi^* \rangle$.
%  \end{proof}
  
%    \begin{theorem}
Note that the probability functions in a quotiented SPPT constructed from pure quantum theory $\{\mathbf{FHilb}, |\langle - | - \rangle|^2\}$ are given by the density matrix Born rule  $P_{\mathcal{Q}(\mathcal{C})}(\ketbra{\psi}{\psi}, \ketbra{\phi}{\phi}) = P(\ket{\psi}, \bra{\phi}) = |\langle \psi | \phi \rangle|^2=  Tr(\ketbra{\psi} \circ \ketbra{\phi})$. 
%  \end{theorem}
%  \begin{proof}
%    By definition of probability function in quotiented theory, we have $P_{\mathcal{Q}(\mathcal{C})}(\rho_{qc}, \sigma_{qc}) = P(\rho, \sigma) ~ \forall \rho \in \rho_{qc}, ~ \sigma \in \sigma_{qc}$. In quantum theory $(\mathbf{FHilb}, |\langle - | - \rangle|^2)$, this translates to  $P_{\mathcal{Q}(\mathcal{C})}(\ket{\psi} \otimes \ket{\psi^*}, \bra{\phi} \otimes \bra{\phi^*}) = P(\ket{\psi}, \bra{\phi}) = |\langle \phi | \psi \rangle|^2 = \langle \phi \otimes \phi^* | \psi \otimes \psi^* \rangle$.
%  \end{proof}

  \begin{theorem}
  \textbf{Characterisation of states and morphisms in quotiented $(\mathbf{CP}, \Tr[- \circ -])$.} If our original theory is $(\mathbf{CP}, \Tr[- \circ -])$ with morphisms $\phi \in \mathbf{CP}(A,B)$,  the morphisms in quotiented $\mathbf{CP}$ are exactly the same as that of $\mathbf{CP}$, i.e., the theory remains unchanged.
  \end{theorem}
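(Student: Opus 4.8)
The plan is to show that probabilistic equivalence in $(\mathbf{CP}, \Tr[-\circ-])$ collapses to genuine equality of completely positive maps, so that every $\sim$-class is a singleton and the quotient leaves the theory untouched. The first step is to pin down the scalars and the homomorphism $\lambda$. In $\mathbf{CP}$ the scalar monoid $\mathbf{CP}(I,I)$ is $\mathbb{R}_{\geq 0}$, and by definition $\lambda(\gamma) = P(1_I, \gamma) = \Tr[\gamma \circ 1_I] = \gamma$, so $\lambda$ is the identity on $\mathbb{R}_{\geq 0}$ and in particular injective. Consequently, via Def.~\ref{def:probequivmorphisms} and injectivity of $\lambda$, probabilistic equivalence of two morphisms $\phi, \phi' \in \mathbf{CP}(A,B)$ reduces to the numerical identity $\sigma \circ (\phi \otimes 1_C) \circ \rho = \sigma \circ (\phi' \otimes 1_C) \circ \rho$ for every ancilla $C$, every $\rho \in \mathcal{S}^{A \otimes C}$, and every $\sigma \in \mathcal{E}^{B \otimes C}$.

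Next I would unfold these compositions concretely. States $\rho\colon I \to A$ in $\mathbf{CP}$ are positive operators (unnormalised density matrices) on $H_A$, effects $\sigma\colon B \to I$ act as $X \mapsto \Tr[E_\sigma X]$ for positive $E_\sigma$, and the scalar $\sigma \circ \phi \circ \rho$ is exactly $\Tr[E_\sigma\,\phi(\rho)]$. It already suffices to take the trivial ancilla $C = I$: the equivalence then asserts $\Tr[E_\sigma\,\phi(\rho)] = \Tr[E_\sigma\,\phi'(\rho)]$ for all density operators $\rho$ and all effect operators $E_\sigma$. Including nontrivial $C$ only refines the relation, so it cannot identify more maps than the $C=I$ case.

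The core step is a tomographic completeness argument. Positive operators span the full operator space $\mathcal{L}(H_A)$, and the effect operators $E_\sigma$ span $\mathcal{L}(H_B)$; since $\phi, \phi'$ are linear superoperators, bilinearity of $(\rho, E_\sigma) \mapsto \Tr[E_\sigma\,\phi(\rho)]$ extends the equality from these spanning sets to all arguments, yielding $\phi(\rho) = \phi'(\rho)$ for every $\rho$, i.e. $\phi = \phi'$ as morphisms of $\mathbf{CP}$. Equivalently, one may take $C = A$ with $\rho$ the Choi state $\sum_{ij}\ketbra{ii}{jj}$, so that probing all effects shows $\phi$ and $\phi'$ have equal Choi matrices and hence are equal. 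The converse $\phi = \phi' \Rightarrow \phi \sim \phi'$ is immediate.

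I expect the main obstacle to be justifying this tomographic completeness cleanly, namely that the physical states and effects of $\mathbf{CP}$ form a separating (spanning) family for superoperators, so that statistical indistinguishability is genuine equality — in contrast with the $\mathbf{FHilb}$ case, where global phases survive. Once that is in hand, each equivalence class is a singleton $\{\phi\}$, the probability functions transport verbatim, and $\mathcal{Q}(\mathbf{CP})$ is isomorphic to $\mathbf{CP}$, establishing the claim.
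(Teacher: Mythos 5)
Your proposal is correct and follows essentially the same route as the paper: both reduce probabilistic equivalence to the numerical identity $\Tr[\sigma\,(\phi\otimes 1)\,\rho]=\Tr[\sigma\,(\phi'\otimes 1)\,\rho]$ for all states and effects (using that $\lambda$ is trivial on the scalars $\mathbb{R}_{\geq 0}$ of $\mathbf{CP}$) and then conclude $\phi=\phi'$ by linearity and the fact that positive operators span the relevant operator spaces. The paper states this tomographic step in one line ($\Tr[\sigma((f-f')\otimes I)\rho]=0$ for all $\rho,\sigma$ implies $f=f'$), whereas you spell out the spanning argument and the dispensability of the ancilla; these are elaborations of the same proof, not a different one.
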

  \begin{proof}
    We know that $f \sim  f' \iff \lambda\left(\tikzfig{compositionrhofsigmadouble}\right) = \lambda\left(\tikzfig{compositionrhofprimesigmadouble}\right) ~ \forall \rho, \sigma.$ In $(\mathbf{CP}, \Tr[- \circ -])$ this corresponds to 
  \begin{eqnarray}
  &\iff& \Tr[\sigma (f \otimes I) \rho] = \Tr[\sigma (f' \otimes I) \rho] \qquad \forall \rho, \sigma \notag\\
  &\iff& \Tr[\sigma ((f - f') \otimes I) \rho] = 0 \qquad \forall \rho, \sigma \notag\\ 
  & \iff& f = f'.
  \end{eqnarray}
  \end{proof}

Lastly, we show that quotienting returns a theory with a generalised Born rule.

  \begin{theorem}[\bf Generalised Born rule for simplified probabilistic process theories.]
    There exists a monoid homomorphism $\lambda_{\mathcal{Q}(\mathcal{C})}$ from $\mathcal{Q}(\mathcal{C})(I, I)$ to $(\mathbb{R}_{\geq 0}, 1, \times)$ and a monoid homomorphism $\theta_{\mathcal{Q}(\mathcal{C})}$ from $(Range(P) \subseteq \mathbb{R}_{\geq 0}, 1, \times)$ to $\mathcal{Q}(\mathcal{C})(I, I)$ such that $\lambda_{\mathcal{Q}(\mathcal{C})}(\theta_{\mathcal{Q}(\mathcal{C})}(a)) = a ~~ \forall a \in Range(P)$ and $\theta_{\mathcal{Q}(\mathcal{C})} (\lambda_{\mathcal{Q}(\mathcal{C})}(\omega)) = \omega ~~\forall \omega \in \mathcal{Q}(\mathcal{C})(I,I)$.
    \label{thm:generalisedBornQC}
  \end{theorem}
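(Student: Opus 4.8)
The plan is to show that the map $\lambda_{\mathcal{Q}(\mathcal{C})}$ already defined on the scalar monoid is a monoid \emph{isomorphism} onto $\mathrm{Range}(P)$, and that $\theta_{\mathcal{Q}(\mathcal{C})}$ is its two-sided inverse; the content of the theorem then reduces to observing that quotienting by $\sim$ has collapsed exactly the fibres of $\lambda$. First I would record that $\lambda_{\mathcal{Q}(\mathcal{C})}(\gamma_{qc}) = \lambda(\gamma)$ is well-defined and a monoid homomorphism. Well-definedness is immediate from Lemma~\ref{th:scalarequiv}: any two representatives $\gamma, \gamma' \in \gamma_{qc}$ satisfy $\gamma \sim \gamma'$, hence $\lambda(\gamma) = \lambda(\gamma')$. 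Preservation of the unit and of the product is inherited from Theorem~\ref{th:Cmonoidalhomomorphism}, since both the unit $1_{I_{qc}}$ and the monoidal product in $\mathcal{Q}(\mathcal{C})$ are computed on representatives, so that $\lambda_{\mathcal{Q}(\mathcal{C})}(\gamma_{1,qc} \otimes \gamma_{2,qc}) = \lambda(\gamma_1 \otimes \gamma_2) = \lambda(\gamma_1)\lambda(\gamma_2)$.

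Next I would establish the two properties that promote $\lambda_{\mathcal{Q}(\mathcal{C})}$ to an isomorphism onto $\mathrm{Range}(P)$. Injectivity is the converse direction of Lemma~\ref{th:scalarequiv}: if $\lambda_{\mathcal{Q}(\mathcal{C})}(\gamma_{qc}) = \lambda_{\mathcal{Q}(\mathcal{C})}(\gamma'_{qc})$ then $\lambda(\gamma) = \lambda(\gamma')$, so $\gamma \sim \gamma'$ and therefore $\gamma_{qc} = \gamma'_{qc}$. For the surjectivity-and-codomain bookkeeping I would prove the set equality $\mathrm{Range}(\lambda) = \mathrm{Range}(P)$. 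One inclusion is Lemma~\ref{lemma:lambda}: every probability $P(\rho,\sigma)$ equals $\lambda(\sigma \circ \rho)$, so $\mathrm{Range}(P) \subseteq \mathrm{Range}(\lambda)$. The reverse inclusion uses that in an SPPT $\mathcal{S}^I = \mathcal{C}(I,I) = \mathcal{E}^I$, so for any scalar $\gamma$ the pair $(1_I,\gamma)$ is a legitimate state--effect pair and $\lambda(\gamma) = P(1_I,\gamma) \in \mathrm{Range}(P)$.

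With the isomorphism in hand I would define $\theta_{\mathcal{Q}(\mathcal{C})}(a) = \gamma_{qc}$ for any scalar $\gamma$ with $\lambda(\gamma) = a$; existence of such a $\gamma$ follows from $\mathrm{Range}(P) \subseteq \mathrm{Range}(\lambda)$, and uniqueness of the class from injectivity via Lemma~\ref{th:scalarequiv}. The two composite identities then fall out directly: $\lambda_{\mathcal{Q}(\mathcal{C})}(\theta_{\mathcal{Q}(\mathcal{C})}(a)) = \lambda(\gamma) = a$, and for $\omega_{qc}$ with representative $\omega$ the element $\lambda(\omega)$ lies in $\mathrm{Range}(P)$ (by $\mathrm{Range}(\lambda) \subseteq \mathrm{Range}(P)$, so that $\theta_{\mathcal{Q}(\mathcal{C})}$ is defined on it) and $\theta_{\mathcal{Q}(\mathcal{C})}(\lambda_{\mathcal{Q}(\mathcal{C})}(\omega_{qc}))$ returns the class of any scalar sharing that $\lambda$-value, which is $\omega_{qc}$. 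Being the inverse of a bijective monoid homomorphism, $\theta_{\mathcal{Q}(\mathcal{C})}$ is automatically a monoid homomorphism; alternatively one checks directly that $\theta_{\mathcal{Q}(\mathcal{C})}(1) = 1_{I_{qc}}$ and, choosing $\gamma_i$ with $\lambda(\gamma_i) = a_i$, that $\lambda(\gamma_1 \otimes \gamma_2) = a_1 a_2$ gives $\theta_{\mathcal{Q}(\mathcal{C})}(a_1 a_2) = \theta_{\mathcal{Q}(\mathcal{C})}(a_1) \otimes \theta_{\mathcal{Q}(\mathcal{C})}(a_2)$.

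I do not expect a genuine obstacle, since the substantive work was front-loaded into Lemma~\ref{th:scalarequiv} — which is precisely the statement that $\lambda$ becomes injective after quotienting — and into the validity of the quotient theory (Corollary~\ref{cor:QCvalidtheory}). The one point demanding care is the bidirectional set equality $\mathrm{Range}(\lambda) = \mathrm{Range}(P)$: the inclusion $\mathrm{Range}(\lambda) \subseteq \mathrm{Range}(P)$, supplied by the trivial pair $(1_I,\gamma)$, is exactly what guarantees that $\theta_{\mathcal{Q}(\mathcal{C})}$ is defined on every value of the form $\lambda(\omega)$ and hence that the second composite identity $\theta_{\mathcal{Q}(\mathcal{C})}(\lambda_{\mathcal{Q}(\mathcal{C})}(\omega)) = \omega$ even typechecks.
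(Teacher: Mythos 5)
Your proposal is correct and follows essentially the same route as the paper: both obtain the homomorphism property of $\lambda_{\mathcal{Q}(\mathcal{C})}$ from Corollary~\ref{cor:QCvalidtheory} together with Theorem~\ref{th:Cmonoidalhomomorphism}, derive injectivity from Lemma~\ref{th:scalarequiv}, and then take $\theta_{\mathcal{Q}(\mathcal{C})}$ to be the inverse, verifying it preserves unit and product. The only difference is that you make explicit the bookkeeping $\mathrm{Range}(\lambda) = \mathrm{Range}(P)$ (via the pair $(1_I,\gamma)$), which the paper leaves implicit when it writes each scalar as $\sigma \circ \rho$; this is a welcome clarification rather than a different argument.
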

  \begin{proof}
That $\lambda_{\mathcal{Q}(\mathcal{C})}$ is a monoid homomorphism follows from $(\mathcal{Q}(\mathcal{C}), \{P^A_{\mathcal{Q}(\mathcal{C})}\})$ being a simplified probabilistic process theory as shown in corollary \ref{cor:QCvalidtheory} and Theorem \ref{th:Cmonoidalhomomorphism}. For the monoid homomorphism $\theta_{\mathcal{Q}(\mathcal{C})}$ consider a pair of scalars $\omega, \omega' \in \mathcal{Q}(\mathcal{C})(I, I)$ such that $\lambda_{\mathcal{Q}(\mathcal{C})}(\omega) = \lambda_{\mathcal{Q}(\mathcal{C})}(\omega'),$ and $\gamma \in \omega$, $\gamma' \in \omega'$, are pair of scalars in $\mathcal{C}$ with $\gamma = \sigma \circ \rho$ and $\gamma' = \sigma' \circ \rho'$.
  Now, since
  \begin{eqnarray}
  \lambda_{\mathcal{Q}(\mathcal{C})}(\omega) = \lambda_{\mathcal{Q}(\mathcal{C})}(\omega') \overset{\text{Def.}}{\iff} \lambda(\gamma) = \lambda(\gamma') 
  \overset{\text{Lemma }\ref{th:scalarequiv}}\iff \gamma \sim \gamma',
  \end{eqnarray}
  which implies $\omega = [\gamma]_{qc} = [\gamma']_{qc} = \omega'$ and so $\lambda_{\mathcal{Q}(\mathcal{C})}$ is injective. Consequently, there exists an inverse $\theta_{\mathcal{Q}(\mathcal{C})}$ such that $\theta_{\mathcal{Q}(\mathcal{C})}(\lambda_{\mathcal{Q}(\mathcal{C})}(\omega)) = \omega$ for all $\omega \in \mathcal{Q}(\mathcal{C})(I, I)$ and $\lambda_{\mathcal{Q}(\mathcal{C})}(\theta_{\mathcal{Q}(\mathcal{C})}(p)) = p$ for all $p \in Range(P_{\mathcal{Q}(\mathcal{C})}) \subseteq \mathbb{R}_{\geq 0}$. Furthermore, 
  \begin{equation}\theta_{\mathcal{Q}(\mathcal{C})}(1) = \theta_{\mathcal{Q}(\mathcal{C})}(\lambda_{\mathcal{Q}(\mathcal{C})}(1_{I_{qc}})) = 1_{I_{qc}}.\end{equation}
  \begin{eqnarray}
    \theta_{\mathcal{Q}(\mathcal{C})}(a \cdot b) = \theta_{\mathcal{Q}(\mathcal{C})}(\lambda_{\mathcal{Q}(\mathcal{C})}(\omega_a) \cdot \lambda_{\mathcal{Q}(\mathcal{C})}(\omega_b)) = \theta_{\mathcal{Q}(\mathcal{C})}(\lambda_{\mathcal{Q}(\mathcal{C})}(\omega_a \otimes \omega_b))  = \omega_a \otimes \omega_b = \theta_{\mathcal{Q}(\mathcal{C})}(a) \otimes \theta_{\mathcal{Q}(\mathcal{C})}(b).
  \end{eqnarray}
  and so $\theta_{\mathcal{Q}(\mathcal{C})}$ is a monoid homomorphism. 
  %Thus, we conclude that there exists monoidal isomorphisms $\lambda_{\mathcal{Q}(\mathcal{C})}$ and $\theta_{\mathcal{Q}(\mathcal{C})}$.
  \end{proof}

  We conclude that for any simplified probabilistic process theory $(\mathcal{C}, \{P^A\})$ the quotiented theory $(\mathcal{Q}(\mathcal{C}), \{P^A_{\mathcal{Q}(\mathcal{C})}\})$ has a generalised Born rule.
  %, the homomorphisms between scalars and probabilities are further restricted to monoidal isomorphisms. 
 % Like a probabilistic process theory, mapping between these two structures is facilitated by the generalised Born rule. 
  For a given state-effect pair $\rho_{qc}$ and $\sigma_{qc}$, the probability of measuring $\sigma_{qc}$ given $\rho_{qc}$, $P_{\mathcal{Q}(\mathcal{C})}(\rho_{qc}, \sigma_{qc})$ is the image of their composition under the monoid isomorphism $\lambda_{\mathcal{Q}(\mathcal{C})}$
\begin{equation}
  P_{\mathcal{Q}(\mathcal{C})}\left(\tikzfig{staterhoQC}, \tikzfig{effectsigmaQC}\right) =P_{\mathcal{Q}(\mathcal{C})}\left(1_{I_{qc}}, \tikzfig{compositionrhosigmaQC}\right) = \lambda_{\mathcal{Q}(\mathcal{C})}\left(\tikzfig{compositionrhosigmaQC}\right),
\end{equation} 
and conversely, the scalar obtained from their composition is exactly the image of the probability value under the monoid isomorphism $\theta_{\mathcal{Q}(\mathcal{C})}$
  \begin{equation}
    \theta_{\mathcal{Q}(\mathcal{C})}\left(P_{\mathcal{Q}(\mathcal{C})}\left(\tikzfig{staterhoQC}, \tikzfig{effectsigmaQC}\right)\right) = \theta_{\mathcal{Q}(\mathcal{C})}\left(\lambda_{\mathcal{Q}(\mathcal{C})}\left(\tikzfig{compositionrhosigmaQC}\right)\right) = \tikzfig{compositionrhosigmaQC}.
  \end{equation}

%  Having established the structure and properties of quotiented probabilistic process theories, we have seen how statistical equivalence leads to a refined compositional framework where probabilities are even more tightly linked to the scalars. This sets the stage for further generalisation. We now turn to the notion of subtheories of a probabilistic process theory, providing a more flexible and fine-grained framework that captures many physically relevant restrictions.
Having established the existence of the generalised Born rule in the simplified case, we now turn to the more principled and more complicated problem of proving the generalised Born rule for completely general probabilistic process theories.

  \section{Derivation of the Generalised Born Rule for Probabilistic Process Theories}
  \label{sec:subtheories}

Recall that simplified probabilistic process theories
%, which allow all states and effects to act as associative morphisms with regards to the associativity axiom \textbf{(I)}, thus, 
allow us to define a lambda function $P(\rho, \sigma) = P(1_I, \sigma \circ \rho) = \lambda(\sigma \circ \rho)$, by essentially observing that states are particular cases of processes. A general probabilistic process theory decouples the roles of physical states and morphisms. 
Physical morphisms come from a subcategory $\mathcal{D}$ and the set of physical states and effects $\mathcal{S}^A, \mathcal{E}^A$ may or may not lie in $\mathcal{D}(I, A) \text{ and } \mathcal{D}(A, I)$ respectively. As a consequence, there may not be a well-formed $\lambda$ function. At this expense, however, we can accommodate constructions where the physical entities do not form a category.
Natural examples of theories fit this broader mould, for instance textbook quantum mechanics with unitaries $\mathcal{D}$ and unit-norm kets and bras $\mathcal{S}^A, \mathcal{E}^A$, or textbook quantum information theory with quantum channels $\mathcal{D},$ trace-$1$ density matrices $\mathcal{S}^A$, and general POVM elements $\mathcal{E}^A$.

In this general case the $\lambda_{\mathcal{Q}(\mathcal{C})}$ homomorphism used to express the generalised Born rule will only arise after taking a quotient by probabilistic equivalence. This however presents a technical challenge since we directly referred to the function $\lambda$ to take such quotients. 
In the general case then; we must refer only directly to the probability function and make repeated direct use of the three laws for such functions on probabilistic process theories.

%In order to define scalars as compositions of physical states, effects, and associative morphisms, 
%noting that different combinations can be statistically equivalent. A scalar can then be defined as an 
%we will take equivalence classes of probabilistically equivalent tuples of the form $(U, \rho, \sigma)_{\cong}$ where $U \in \mathcal{D}(A,B), \rho \in \mathcal{S}^A, \sigma \in \mathcal{E}^B$ for some objects $A,B$. 
%To formalise this idea, we define the following

  \begin{definition}[\bf Quotienting a probabilistic process theory]
  Given a probabilistic process theory $(\mathcal{D} \subseteq \mathcal{C}, \{\mathcal{S}^A\}, \{\mathcal{E}^A\}, \{P^A\}),$ we define its associated quotiented process theory $\mathcal{G}(\mathcal{D})$ as follows: 
  \begin{itemize}
  \item Objects: same as $\mathcal{D}$
  \item Morphisms: $\mathcal{G}(\mathcal{D})(A,B) := \{  (U,\rho,\sigma)_{\cong}   \}, ~ U \in \mathcal{D}(A \otimes X, B \otimes X'), \rho \in \mathcal{S}^X, \sigma \in \mathcal{E}^{X'}$
  Diagrammatically, a morphism is an equivalence class of a tuple \[    \left( \tikzfig{sub_thm_1},\tikzfig{sub_thm_2},\tikzfig{sub_thm_3} \right) _{\cong}, \]
  where we say that $ (U_1,\rho_1,\sigma_1) \cong (U_2,\rho_2,\sigma_2)$ if and only if
  \[ \forall \tau \in \mathcal{S}^{A \otimes Z}, \mu \in \mathcal{E}^{B \otimes Z}: \quad   P \left( \tikzfig{sub_defn_1a}, \tikzfig{sub_defn_1b} \right) =  P \left( \tikzfig{sub_defn_2a}, \tikzfig{sub_defn_2b} \right).  \]
  \item Sequential composition: 
   \[   (U,\rho^u,\sigma^u)_{\cong} \circ  (V,\rho^v,\sigma^v)_{\cong} \ := \  \left( \tikzfig{sub_thm_comp_1},\tikzfig{sub_thm_comp_2},\tikzfig{sub_thm_comp_3} \right) _{\cong}    \]
  \item Parallel composition:
  \[   (U,\rho^u,\sigma^u)_{\cong} \otimes  (V,\rho^v,\sigma^v)_{\cong} \ := \  \left( \tikzfig{sub_thm_par_comp_1},\tikzfig{sub_thm_comp_2},\tikzfig{sub_thm_comp_3} \right) _{\cong}    \]
  \end{itemize}
  \label{def:subtheory}
  \end{definition}

Following the previous section \ref{sec:probphysicaltheories}, we now check the consistency of the construction to show that it indeed returns a new PPT which furthermore admits a generalised Born rule. 
%We begin with consistency proofs

\begin{lemma}[\bf Consistency of sequential composition]
  Sequential composition in $\mathcal{G}(\mathcal{D})$ is well-defined.
  \label{lem:subtheory_seq_comp_welldefined}  
\end{lemma}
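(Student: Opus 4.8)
The plan is to show that the sequential composition of Definition~\ref{def:subtheory} is independent of the chosen representatives: if $(U_1,\rho_1^u,\sigma_1^u)\cong(U_2,\rho_2^u,\sigma_2^u)$ and $(V_1,\rho_1^v,\sigma_1^v)\cong(V_2,\rho_2^v,\sigma_2^v)$, then their composites agree as classes. Since $\cong$ is an equivalence relation, I would use transitivity to reduce to replacing one factor at a time. Fix the types: the outer tuple is effectively $B\to C$ with ancilla systems $X_u,X_u'$, the inner tuple is effectively $A\to B$ with ancilla systems $X_v,X_v'$, and the composite is effectively $A\to C$. I would first record that the composite tuple $(W,\rho^v\otimes\rho^u,\sigma^v\otimes\sigma^u)$ is itself a legitimate morphism of $\mathcal{G}(\mathcal{D})$: the composite unitary $W$ lies in $\mathcal{D}$ because $\mathcal{D}$ is a symmetric monoidal subcategory closed under $\circ$, $\otimes$, identities and swaps, while the combined ancilla state and effect are physical by the tensor-closure of $\{\mathcal{S}^A\}$ and $\{\mathcal{E}^A\}$ in Definition~\ref{def:physicaltheory_new}.

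For the inner replacement I would take arbitrary test data $\tau\in\mathcal{S}^{A\otimes Z}$, $\mu\in\mathcal{E}^{C\otimes Z}$ and expand the probability of the composite as a closed scalar, then regroup everything except $V$ into test data for the inner tuple. The crucial move is that the ancilla system $X_u$ of the \emph{outer} factor must be routed through the universally quantified ancilla slot of $\cong$ rather than collapsed against the $V$-diagram. Concretely I would take test ancilla $Z'=Z\otimes X_u$, test state $\tau'=\tau\otimes\rho^u$ (physical by tensor-closure of states), and test effect $\mu'=(\mu\otimes\sigma^u)\circ(U\otimes 1_Z)$, which is physical because $\mathcal{E}$ is closed under tensor and under precomposition with morphisms of $\mathcal{D}$ (here $U\otimes 1_Z\in\mathcal{D}$). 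Using axioms \textbf{(I)} and \textbf{(II)} to re-associate the closed diagram, the composite probability is exactly the probability of testing $(V,\rho^v,\sigma^v)$ against the physical pair $(\tau',\mu')$, so the hypothesis $V_1\cong V_2$ applies verbatim and the two composite probabilities coincide for every $\tau,\mu$.

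The outer replacement is handled by the mirror-image construction: now $V$'s output-ancilla $X_v'$ is carried through the quantified ancilla slot, with test ancilla $Z''=Z\otimes X_v'$, test state $\tau''=(V\otimes 1_Z)\circ(\tau\otimes\rho^v)$ (physical since $f\circ\rho\in\mathcal{S}$ for $f\in\mathcal{D}$), and test effect $\mu''=\mu\otimes\sigma^v$ (physical by tensor-closure), after which $U_1\cong U_2$ gives equality; chaining the two replacements closes the argument. I expect the main obstacle to be conceptual precision rather than calculation: one must resist treating the \emph{internal} physical state $\rho$ and effect $\sigma$ of the non-varying factor as though they were morphisms of $\mathcal{D}$ — in general they are not — and instead absorb the associated ancilla wire into the system $Z$ quantified over in the definition of $\cong$. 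The closure axioms of Definition~\ref{def:physicaltheory_new} are exactly strong enough to certify that the derived $\tau',\mu'$ (and $\tau'',\mu''$) are physical, and the remaining effort is the diagrammatic bookkeeping of swaps and re-associators needed to bring the composite's closed scalar into the canonical form required by $\cong$.
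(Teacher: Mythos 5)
Your proposal is correct and follows essentially the same route as the paper's proof: replace one factor at a time via transitivity, re-associate with axiom \textbf{(I)} so that the fixed factor's ancilla wire, state, and effect are absorbed into the universally quantified test data $(\tau',\mu')$ of the relation $\cong$, and then invoke the hypothesised equivalence of the varying factor. Your explicit check that the derived test data are physical via the closure axioms of Definition~\ref{def:physicaltheory_new} is a point the paper leaves implicit (and axiom \textbf{(II)} is not actually needed for the re-association), but the argument is the same.
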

\begin{proof}
  We need to show that the composition is independent of the choice of representatives from the equivalence classes $(U,\rho^u,\sigma^u)_{\cong},  (V,\rho^v,\sigma^v)_{\cong}$. Consider two pairs of tuples
  \begin{eqnarray} 
    (U_1,\rho_1^u,\sigma_1^u), (U_2,\rho_2^u,\sigma_2^u) \in (U,\rho^u,\sigma^u)_{\cong} 
    \label{eq:subtheory_inclusion1}\\
    (V_1,\rho_1^v,\sigma_1^v), (V_2,\rho_2^v,\sigma_2^v) \in (V,\rho^v,\sigma^v)_{\cong}
    \label{eq:subtheory_inclusion2}
  \end{eqnarray}
  we show that the compositions resulting from any two pairs of representatives are probabilistically equivalent.
\begin{align*}
\forall \tau, \mu: \quad & P \left( \tikzfig{sub_comb_seq_proof_1a}, \tikzfig{sub_comb_seq_proof_1b} \right) &&\overset{(I)}{=} && P \left( \tikzfig{sub_comb_seq_proof_2a}, \tikzfig{sub_comb_seq_proof_2b} \right) 
\end{align*}
\begin{align}
=& P \left( \tikzfig{sub_comb_seq_proof_2aa}, \tikzfig{sub_comb_seq_proof_2bb} \right) &&\overset{Eq.~\eqref{eq:subtheory_inclusion2}}{=} && P \left( \tikzfig{sub_comb_seq_proof_3a}, \tikzfig{sub_comb_seq_proof_3b} \right)\notag \\
=& P \left( \tikzfig{sub_comb_seq_proof_3aa}, \tikzfig{sub_comb_seq_proof_3b} \right) &&\overset{(I)}{=} && P \left( \tikzfig{sub_comb_seq_proof_4a}, \tikzfig{sub_comb_seq_proof_4b} \right) \notag \\
=& P \left( \tikzfig{sub_comb_seq_proof_5a}, \tikzfig{sub_comb_seq_proof_5b} \right) &&\overset{Eq.~\eqref{eq:subtheory_inclusion1}}{=} && P \left( \tikzfig{sub_comb_seq_proof_6a}, \tikzfig{sub_comb_seq_proof_6b} \right) \notag \\
=& P \left( \tikzfig{sub_comb_seq_proof_6a}, \tikzfig{sub_comb_seq_proof_6bb} \right) &&\overset{(I)}{=} && P \left( \tikzfig{sub_comb_seq_proof_7a}, \tikzfig{sub_comb_seq_proof_7b} \right)
\end{align}
\end{proof}

\begin{lemma}[\bf Consistency of parallel composition]
  Parallel composition in $\mathcal{G}(\mathcal{D})$ is well-defined.
  \label{lem:subtheory_par_comp_welldefined}  
\end{lemma}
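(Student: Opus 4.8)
The plan is to follow the structure of the proof of Lemma~\ref{lem:subtheory_seq_comp_welldefined} (consistency of sequential composition) almost verbatim, substituting the parallel wiring for the sequential wiring. Concretely, I must show that parallel composition does not depend on the choice of representatives: given the inclusions of Eqs.~\eqref{eq:subtheory_inclusion1} and \eqref{eq:subtheory_inclusion2}, the composite built from $(U_1,\rho_1^u,\sigma_1^u)$ and $(V_1,\rho_1^v,\sigma_1^v)$ is related by $\cong$ to the one built from $(U_2,\rho_2^u,\sigma_2^u)$ and $(V_2,\rho_2^v,\sigma_2^v)$. Since the state and effect slots of the composite tuple ($\tikzfig{sub_thm_comp_2}$ and $\tikzfig{sub_thm_comp_3}$) are exactly the same as in the sequential case, and only the process slot differs, this reduces to comparing test probabilities for arbitrary physical $\tau$ and $\mu$ of the appropriate composite type.

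First I would write the test probability for the first pair of representatives and produce a finite chain of equalities terminating at the test probability for the second pair, exactly as in the sequential lemma's display. The chain alternates two kinds of moves. Associativity (axiom~\textbf{(I)}) is used to slide the sub-process acting on one branch (say the $V$-branch), together with its ancilla state $\rho^v$ and effect $\sigma^v$, across the state/effect boundary, so that it is absorbed into the external test. This re-presents the $U$-branch as being tested against an enlarged state--effect pair, at which point the hypothesis $(U_1,\rho_1^u,\sigma_1^u)\cong(U_2,\rho_2^u,\sigma_2^u)$ licenses the replacement of $U_1$-data by $U_2$-data. A symmetric sequence of moves then absorbs the now-fixed $U_2$-branch into the test and applies $(V_1,\rho_1^v,\sigma_1^v)\cong(V_2,\rho_2^v,\sigma_2^v)$ to replace the $V_1$-data by $V_2$-data; interleaved \textbf{(I)}-moves restore the diagrams to the canonical parallel-composite form.

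The one ingredient absent from the sequential case is the need to permute wires using the symmetry (swap) maps of the SMC, so that the two parallel branches can be cleanly separated for independent testing; this mirrors the swap bookkeeping already used in the consistency proof for $\otimes_{qc}$ in the simplified setting. I expect this wire-permutation bookkeeping to be the main obstacle, though it is purely routine diagram manipulation rather than a conceptual difficulty.

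The single point requiring genuine care is legitimacy: invoking $\cong$ demands that the enlarged objects obtained after absorption are \emph{physical}, i.e.\ lie in the families $\mathcal{S}^\bullet$ and $\mathcal{E}^\bullet$, since the quantifier in the definition of $\cong$ ranges only over physical states and effects. This is guaranteed by the closure axioms of a probabilistic process theory: $\mathcal{S}^\bullet$ is closed under $\otimes$ and under post-composition by morphisms of $\mathcal{D}$, dually for $\mathcal{E}^\bullet$, and the symmetry maps lie in $\mathcal{D}$ because $\mathcal{D}$ is itself a symmetric monoidal category. Hence every enlarged test produced along the chain is admissible, the chain is valid end to end, and $\cong$ holds between the two composites, establishing well-definedness.
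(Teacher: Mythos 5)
Your proposal is correct and follows essentially the same route as the paper's proof: a chain of test probabilities alternating axiom \textbf{(I)} moves with applications of the two representative-equivalences, together with the swap bookkeeping (the paper likewise introduces $\mu' = \mu \circ \mathrm{swap}$, $\tau' = \mathrm{swap}\circ\tau$ and applies \textbf{(I)} to a swap morphism) and the implicit reliance on closure of the physical state/effect families under $\otimes$ and composition with $\mathcal{D}$-morphisms. The only immaterial difference is the order in which the $U$- and $V$-equivalences are invoked.
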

\begin{proof}
  We need to show that different representations of equivalence classes $(U,\rho^u,\sigma^u)_{\cong},  (V,\rho^v,\sigma^v)_{\cong}$ give rise to the same parallel composition.  Consider two pairs of tuples
  \begin{eqnarray} 
(U_1,\rho_1^u,\sigma_1^u), (U_2,\rho_2^u,\sigma_2^u) \in (U,\rho^u,\sigma^u)_{\cong} 
    \label{eq:subtheory_inclusion3}\\
    (V_1,\rho_1^v,\sigma_1^v), (V_2,\rho_2^v,\sigma_2^v) \in (V,\rho^v,\sigma^v)_{\cong}
    \label{eq:subtheory_inclusion4}
  \end{eqnarray}
   compositions resulting from any two pairs of representatives are probabilistically equivalent. $\forall \tau, \mu:$ 
\begin{align}
 & P \left( \tikzfig{sub_comb_par_proof_1a}, \tikzfig{sub_comb_par_proof_1b} \right) &&\overset{(I)}{=} && P \left( \tikzfig{sub_comb_par_proof_2a}, \tikzfig{sub_comb_par_proof_2bb} \right) \notag\\
=& P \left( \tikzfig{sub_comb_par_proof_2aa}, \tikzfig{sub_comb_par_proof_2b} \right) &&\overset{Eq. \eqref{eq:subtheory_inclusion4}}{=} && P \left( \tikzfig{sub_comb_par_proof_3a}, \tikzfig{sub_comb_par_proof_3b} \right) \notag \\
=& P \left( \tikzfig{sub_comb_par_proof_3aa}, \tikzfig{sub_comb_par_proof_3b} \right) &&\overset{(I)}{=} &&P \left( \tikzfig{sub_comb_par_proof_4a}, \tikzfig{sub_comb_par_proof_4b} \right) \notag \\
& \tikzfig{effectmuprimedouble} = \tikzfig{effectmucircswapdouble}, \tikzfig{statetauprimedouble} = \tikzfig{statetaucircswapdouble} &&&& \text{and we use \textbf{(I)} for } f = \tikzfig{swapswap} \notag  \\
=& P \left( \tikzfig{sub_comb_par_proof_5a}, \tikzfig{sub_comb_par_proof_5b} \right) &&\overset{Eq. \eqref{eq:subtheory_inclusion3}}{=} && P \left( \tikzfig{sub_comb_par_proof_6a}, \tikzfig{sub_comb_par_proof_6b} \right) \notag \\
=& P \left( \tikzfig{sub_comb_par_proof_6aa}, \tikzfig{sub_comb_par_proof_6bb} \right) &&\overset{(I)}{=} && P \left( \tikzfig{sub_comb_par_proof_7a}, \tikzfig{sub_comb_par_proof_7b} \right)
\end{align}
\end{proof}

With the compositions being consistent, we prove that the quotiented theory is an SMC. 

\begin{theorem}[\bf Quotienting returns an SMC]
  For any probabilistic process theory $ \mathcal{G}(\mathcal{D})$ is a symmetric monoidal category.
  % \ga{Let me know of additional details are needed in the proof.}
  \label{th:subtheory_ismonoidal}           
\end{theorem}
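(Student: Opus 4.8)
The plan is to lift each symmetric-monoidal-category axiom from the ambient category $\mathcal{C}$ (equivalently from its sub-SMC $\mathcal{D}$) up to the quotient $\mathcal{G}(\mathcal{D})$. By Mac Lane's strictification theorem I may assume $\mathcal{C}$, and hence $\mathcal{D}$, is strict, so that the associator and unitors are identities and the only structural isomorphism carrying nontrivial content is the symmetry. Since Lemmas~\ref{lem:subtheory_seq_comp_welldefined} and~\ref{lem:subtheory_par_comp_welldefined} already guarantee that $\circ$ and $\otimes$ descend to well-defined operations on $\cong$-classes, what remains is to exhibit identities and structural morphisms and to check the category axioms, bifunctoriality of $\otimes$, and the symmetry coherence conditions. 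The uniform technique throughout is: write each side of an identity to be verified as a representative tuple $(W,\tau,\mu)$, observe that the underlying morphisms $W \in \mathcal{D}$ coincide after absorbing any reshuffling of ancillary wires (legitimate because $\mathcal{D}$ is a sub-SMC and so contains the relevant symmetries of $\mathcal{C}$), and that the ancillary states and effects agree up to tensoring with $i_I$ or up to a permutation; the two tuples are then either literally equal or $\cong$-equivalent, the latter witnessed exactly as in the consistency lemmas by absorbing the permutation into $W$ and applying axiom \textbf{(I)}.

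First I would fix the identity on an object $A$ to be the class $1_A := (1_A, i_I, i_I)_{\cong}$, which is legitimate since $i_I \in \mathcal{S}^I$ and $i_I \in \mathcal{E}^I$ by the definition of a probabilistic process theory. Feeding this into the sequential-composition formula of Definition~\ref{def:subtheory}, the ancilla of the identity is the unit $I$, so tensoring with it changes nothing and the underlying composite reduces to $U \circ 1_{A\otimes X} = U$ by the strict unit law of $\mathcal{C}$; hence $1_B \circ (U,\rho,\sigma)_{\cong} = (U,\rho,\sigma)_{\cong} = (U,\rho,\sigma)_{\cong} \circ 1_A$, establishing the unit laws.

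Next I would verify associativity of $\circ$ together with the interchange law for $\otimes$. Composing three morphisms in either grouping yields the same stacked underlying morphism $W \in \mathcal{D}$ (by associativity of $\circ$ in $\mathcal{C}$) together with the same tensor product of the three ancillary states and of the three ancillary effects, up to a reordering of the ancillary wires; the interchange law $(f_1\otimes f_2)\circ(g_1\otimes g_2) = (f_1\circ g_1)\otimes(f_2\circ g_2)$ is handled identically, the two sides again differing only by a permutation of ancillas. I expect this bookkeeping to be the \emph{main obstacle}: the two representative tuples are generally not literally equal but differ by a symmetry of $\mathcal{C}$ acting on the ancillary wires, and one must show the tuples are nevertheless $\cong$-equivalent. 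This is precisely the manoeuvre already executed inside Lemma~\ref{lem:subtheory_par_comp_welldefined}, where a swap is absorbed into the underlying morphism and axiom \textbf{(I)} is used to pull it across the probability function; the same argument, applied to whichever permutation arises, closes each case.

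Finally I would supply the symmetry of $\mathcal{G}(\mathcal{D})$ as $(s^{\mathcal{C}}_{A,B}, i_I, i_I)_{\cong}$, where $s^{\mathcal{C}}_{A,B}$ denotes the symmetry isomorphism of $\mathcal{C}$, and likewise take the strict associator and unitors to be the classes of the corresponding $\mathcal{C}$-morphisms with unit ancillas. All of these structural morphisms lie in the image of the evident inclusion $U \mapsto (U, i_I, i_I)_{\cong}$, so they carry $X = X' = I$ and their composites with arbitrary morphisms reduce on the underlying layer to the corresponding composites in $\mathcal{C}$. Consequently naturality of the symmetry, the hexagon identity, and $s_{B,A}\circ s_{A,B} = 1_{A\otimes B}$ descend immediately from the symmetric monoidal structure of $\mathcal{C}$, and the (strict) coherence conditions hold in $\mathcal{G}(\mathcal{D})$ because they hold for these chosen representatives. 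This completes the verification that $\mathcal{G}(\mathcal{D})$ is a symmetric monoidal category.
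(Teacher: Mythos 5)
Your proposal is correct and follows essentially the same route as the paper: the same identity and swap representatives $(1_A,1_I,1_I)_{\cong}$ and $(s_{A,B},1_I,1_I)_{\cong}$, unit and associativity laws read off from the strict structure of $\mathcal{C}$, and the interchange law closed by absorbing the permutation of ancillary wires into the underlying morphism and pulling it across the probability function with axiom \textbf{(I)} — exactly the manoeuvre the paper performs when it swaps the systems $X'$ and $L'$. The only cosmetic difference is that you invoke strictification explicitly and are slightly more careful about noting that the ancillas of the structural morphisms are the physical state and effect $i_I$, both of which the paper takes for granted.
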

\begin{proof} 
  We begin with showing that $\mathcal{G}(\mathcal{D})$ is indeed a category. For this we need to show that sequential composition is associative and that there exists an identity morphism for each object. Associativity follows from
 \begin{align}  
 &(W,\rho^w,\sigma^w)_{\cong} \circ( (V,\rho^v,\sigma^v)_{\cong} \circ  (U,\rho^u,\sigma^u)_{\cong} ) \notag \\
  =& \  \left( \tikzfig{sub_seq_ass_1},\tikzfig{sub_seq_ass_2},\tikzfig{sub_seq_ass_3} \right) _{\cong} \notag  \\
   =& \   ((W,\rho^w,\sigma^w)_{\cong} \circ  (V,\rho^v,\sigma^v)_{\cong}) \circ  (U,\rho^u,\sigma^u)_{\cong} .
  \end{align} 
The triplet $(1_A,1_I,1_I)_{\cong}: A \rightarrow A $ forms the desired identity morphism, i.e., it is a unit for sequential composition since $(1_A,1_I,1_I)_{\cong}   \circ  (U,\rho^u,\sigma^u)_{\cong}  = ( (1_A \otimes 1_X)    \circ U    ,\rho^u \otimes 1_I,\sigma^u \otimes 1_I)_{\cong}   = (U,\rho^u,\sigma^u)_{\cong} $.
%Hence $\mathcal{G}(\mathcal{D})$ is indeed a category. 
%To show that it is an SMC, we need a monoidal structure which needs to satisfy all the SMC axioms. The monoidal structure is provided by the parallel composition $\otimes$ in Def. \ref{def:subtheory}. 
The associativity of parallel composition follows almost identically to the sequential case.
% \ga{should I show this explicitely?}\\
Furthermore, the identity on the unit object $(1_I,1_I,1_I)_{\cong}: I \rightarrow I $ is a unit for parallel composition, since  $(1_I,1_I,1_I)_{\cong}  \otimes  (U,\rho^u,\sigma^u)_{\cong}  = (U \otimes 1_I,\rho^u  \otimes 1_I,\sigma^u  \otimes 1_I)_{\cong}  = (U,\rho^u,\sigma^u)_{\cong} $.
Next, we check the interchange law 
\begin{align*}
    &((U,\rho^u,\sigma^u)_{\cong} \otimes (V,\rho^v,\sigma^v)_{\cong})  \circ((M,\rho^m,\sigma^m)_{\cong} \otimes (N,\rho^n,\sigma^n)_{\cong} )\\
    =& \left(\tikzfig{interchange_proof_1}, \tikzfig{interchange_proof_2}, \tikzfig{interchange_proof_3}\right)_{\cong}
\end{align*}
\begin{align}
    =& \left(\tikzfig{interchange_proof_4}, \tikzfig{interchange_proof_5}, \tikzfig{interchange_proof_6}\right)_{\cong} \notag \\
    =& ((U,\rho^u,\sigma^u)_{\cong} \circ (M,\rho^m,\sigma^m)_{\cong})  \otimes ((V,\rho^v,\sigma^v)_{\cong} \circ (N,\rho^n,\sigma^n)_{\cong} )
\end{align}
where the equality between two equivalence classes follows from
\begin{align}
P\left(\tikzfig{interchange_proof_7}, \tikzfig{interchange_proof_8}\right) \notag\\
\overset{\textbf{(I)}}{=} P\left(\tikzfig{interchange_proof_9}, \tikzfig{interchange_proof_10}\right)
\end{align}
where \textbf{(I)} is applied to swap systems $X'$ and $L'$.
For symmetry, define swap as \[  \text{swap}_{A,B} := \left( \tikzfig{sub_thm_swap}, 1_I, 1_I \right)_{\cong},  \]
which satisfies the necessary axioms for a swap map in a symmetric monoidal category.
\end{proof}

%Similarly to the simplified probabilistic process theory $(\mathcal{D} \subseteq \mathcal{C}, \{\mathcal{S}^A\}, \{\mathcal{E}^A\}, \{P^A\})$, is the trivial case $(\mathcal{C} \subseteq \mathcal{C}, \mathcal{C}(I, A), \mathcal{C}(A, I), \{P^A\})$. 
Given that a PPT in the trivial case corresponds to an SPPT, we expect that the quotienting procedure $\mathcal{G}$ for PPTs on  $(\mathcal{C} \subseteq \mathcal{C}, \mathcal{C}(I, A), \mathcal{C}(A, I), \{P^A\})$ should yield the same result as the quotienting procedure $\mathcal{Q}$ for SPPTs. 
%Thus sub-quotienting is a generalisation of the quotienting procedure for pre-probabilistic process theories. 

\begin{theorem}[\bf Equivalence between quotienting procedures]
  The quotienting procedure $\mathcal{G}$ on probabilistic process theories with trivial subtheory $(\mathcal{C} \subseteq \mathcal{C}, \mathcal{C}(I, A), \mathcal{C}(A, I), \{P^A\})$ yields an SMC isomorphic to the one given by applying the quotienting procedure $\mathcal{Q}$ for simplified probabilistic process theories on $(\mathcal{C}, \{P^A\})$ Def.\ref{def:quotientedcategory}. 
  \label{th:subtheory_isomorphic_quotiented}
\end{theorem}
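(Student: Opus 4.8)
The plan is to build an explicit isomorphism of SMCs by observing that, in the trivial case, a triple $(U,\rho,\sigma)$ with $U \in \mathcal{C}(A \otimes X, B \otimes X')$, $\rho \in \mathcal{C}(I,X)$, $\sigma \in \mathcal{C}(X',I)$ is nothing but a roundabout presentation of its \emph{reduced morphism}
$$\widehat{(U,\rho,\sigma)} := (1_B \otimes \sigma)\circ U \circ (1_A \otimes \rho) \in \mathcal{C}(A,B).$$
I would define $\Phi : \mathcal{G}(\mathcal{C}) \to \mathcal{Q}(\mathcal{C})$ to be the identity on objects and to send $(U,\rho,\sigma)_{\cong} \mapsto [\,\widehat{(U,\rho,\sigma)}\,]_{\sim}$, then show that $\Phi$ is well-defined, bijective on each hom-set, and compatible with the (symmetric) monoidal structure, hence an isomorphism of SMCs.

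The heart of the argument, which I would carry out first, is that the triple equivalence $\cong$ coincides exactly with probabilistic equivalence $\sim$ of reduced morphisms. Given $\tau \in \mathcal{S}^{A \otimes Z}$ and $\mu \in \mathcal{E}^{B \otimes Z}$, the closed diagram appearing in the definition of $\cong$ is precisely $\mu \circ (\widehat{(U,\rho,\sigma)} \otimes 1_Z) \circ \tau$, since the ancilla $Z$ passes through untouched while $\rho$ and $\sigma$ are absorbed into the reduced morphism. By Lemma~\ref{lemma:lambda} the associated probability equals $\lambda\big(\mu \circ (\widehat{(U_i,\rho_i,\sigma_i)} \otimes 1_Z) \circ \tau\big)$. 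Hence $(U_1,\rho_1,\sigma_1) \cong (U_2,\rho_2,\sigma_2)$ holds iff
$$\lambda\big(\mu \circ (\widehat{(U_1,\rho_1,\sigma_1)} \otimes 1_Z) \circ \tau\big) = \lambda\big(\mu \circ (\widehat{(U_2,\rho_2,\sigma_2)} \otimes 1_Z) \circ \tau\big)$$
for all $Z, \tau, \mu$, which is exactly Definition~\ref{def:probequivmorphisms} applied to the two reduced morphisms. This single computation yields both well-definedness ($\Rightarrow$) and faithfulness/injectivity ($\Leftarrow$) of $\Phi$ on hom-sets.

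Surjectivity on hom-sets is then immediate: any $f \in \mathcal{C}(A,B)$ is the reduced morphism of the triple $(f, 1_I, 1_I)$ with $X = X' = I$, so $\Phi\big((f,1_I,1_I)_{\cong}\big) = [f]_{\sim}$ and every $\sim$-class is hit. It remains to check that $\Phi$ respects the categorical and monoidal structure. The reduced morphism of $(1_A,1_I,1_I)$ is $1_A$, matching the identity $1_{A_{qc}}$ of $\mathcal{Q}(\mathcal{C})$. For composition and monoidal product I would compute the reduced morphism of the composite and product triples of Definition~\ref{def:subtheory}: feeding the tensored states $\rho^v \otimes \rho^u$ and tensored effects $\sigma^v \otimes \sigma^u$ into the stacked $U, V$ collapses, after routine wire bookkeeping, to $\widehat{(U,\ldots)} \circ \widehat{(V,\ldots)}$ and $\widehat{(U,\ldots)} \otimes \widehat{(V,\ldots)}$ respectively, which match composition and product of the corresponding $\sim$-classes in $\mathcal{Q}(\mathcal{C})$. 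Finally, the swap triple of Theorem~\ref{th:subtheory_ismonoidal} reduces to the underlying symmetry of $\mathcal{C}$, so $\Phi$ preserves symmetry as well; being identity on objects and bijective and structure-preserving on morphisms, $\Phi$ is the desired SMC isomorphism.

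The main obstacle I anticipate is not conceptual but diagrammatic bookkeeping: verifying that the closed diagrams in the definition of $\cong$ are literally $\mu \circ (\widehat{(U,\rho,\sigma)} \otimes 1_Z) \circ \tau$, and that the composite and product triples reduce to composites and products of reduced morphisms. Both require carefully tracking the ancilla wires $X, X', Z$ and repeated use of associativity axiom \textbf{(I)} to slide $\rho$, $\sigma$, and the swap maps into position. Once the reduced-morphism correspondence is pinned down, the rest follows mechanically.
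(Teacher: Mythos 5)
Your proposal is correct and follows essentially the same route as the paper: the paper likewise defines the functor on a triple by forming the reduced morphism $(1_B \otimes \sigma)\circ U \circ (1_A \otimes \rho)$ and sends a $\mathcal{Q}(\mathcal{C})$-class $\{f\}_{qc}$ back to $(f,1_I,1_I)_{\cong}$, verifying well-definedness via the coincidence of $\cong$ with $\sim$ on reduced morphisms (using axiom \textbf{(I)}) and checking preservation of identities, composition, tensor, and swaps. The only difference is presentational — the paper exhibits two mutually inverse strict monoidal functors $F$ and $G$, whereas you package the same content as a single functor shown to be bijective on hom-sets.
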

\begin{proof}
   We show that SMCs $\mathcal{Q}(\mathcal{C}) \text{ and } \mathcal{G}(\mathcal{C})$ are isomorphic (with a strict monoidal functor \cite{Lane}). Let us define functors 
  \begin{align}
    &F(A) = A, \notag\\
    &F\left( \left( \tikzfig{sub_thm_1},\tikzfig{sub_thm_2},\tikzfig{sub_thm_3} \right) _{\cong}\right) = \left\{\tikzfig{sub_thm_Usigmarho}\right\}_{qc},\\
  \end{align}
  and
  \begin{align}
    &G(A) = A, \notag\\
    &G(\{f\}_{qc}) = (f, 1_A, 1_A)_{\cong}
  \end{align}  
  where, the curly braces denote the equivalence class of probabilistically equivalent morphisms. We claim that the functors are well defined on morphisms.  Let us have that $ (U_1,\rho_1,\sigma_1), (U_2,\rho_2,\sigma_2) \in (U,\rho,\sigma)_{\cong}$, i.e.,
  \begin{equation} \forall \tau, \mu: \quad   P \left( \tikzfig{sub_defn_1a}, \tikzfig{sub_defn_1b} \right) =  P \left( \tikzfig{sub_defn_2a}, \tikzfig{sub_defn_2b} \right)  \end{equation}
  but this implies that 
  \begin{align*}
    \tikzfig{sub_thm_U1sigma1rho1} \text{ is probabilistically equivalent to } \tikzfig{sub_thm_U2sigma2rho2}.
  \end{align*}
  Therefore, they are in the same equivalence class in $\mathcal{Q}(\mathcal{C})$. Similarly, let us have that $\{f\} = \{g\}$, i.e.,
  \begin{equation}
  P\left(\tikzfig{statefcircrhodouble}, \tikzfig{effectsigmadouble}\right) = 
 P\left(\tikzfig{stategcircrhodouble}, \tikzfig{effectsigmadouble}\right)
  \end{equation}
  but this implies that $(f, 1_I, 1_I)_{\cong} = (g, 1_I, 1_I)_{\cong}.$ Thus, the functors are defined consistently. To show that these are (strict) symmetric monoidal functors we have
  \begin{align}
    F((1_A, 1_I, 1_I)_{\cong}) &= \{1_A\}_{qc} = 1_{A_\mathcal{Q}(\mathcal{C})}, \\
  G(\{1_A\}_{qc}) &= (1_A, 1_I, 1_I)_{\cong},
  \end{align}
  and 
  \begin{align}
    &F\left( \left( \tikzfig{morphismVdouble},\tikzfig{sub_thm_2_v_b},\tikzfig{sub_thm_3_v_b} \right) _{\cong}\circ \left( \tikzfig{sub_thm_1},\tikzfig{sub_thm_2_u},\tikzfig{sub_thm_3_u} \right) _{\cong}\right) \notag \\
    =& F\left(\left( \tikzfig{sub_thm_comp_1z},\tikzfig{sub_thm_comp_2z},\tikzfig{sub_thm_comp_3z} \right) _{\cong}\right)    =\left\{\tikzfig{sub_thmUVcomp}\right\}_{qc} = \left\{\tikzfig{sub_thm_Vbetaalpha}\right\}_{qc} \circ \left\{\tikzfig{sub_thm_Usigmarhou}\right\}_{qc},\\
  &G(\{g\}_{qc} \circ \{f\}_{qc}) = G(\{g \circ f\}_{qc}) = (g \circ f, 1_A, 1_A)_{\cong} = (g, 1_A, 1_A)_{\cong} \circ (f, 1_A, 1_A)_{\cong},
  \end{align}
and 
  \begin{align}
    &F\left( \left( \tikzfig{morphismVdoublecd},\tikzfig{sub_thm_2_v_b},\tikzfig{sub_thm_3_v_b} \right) _{\cong}\otimes \left( \tikzfig{sub_thm_1},\tikzfig{sub_thm_2_u},\tikzfig{sub_thm_3_u} \right) _{\cong}\right) \notag\\
    =& F\left(\left( \tikzfig{morphismUtensorV},\tikzfig{sub_thm_comp_2z},\tikzfig{sub_thm_comp_3z} \right) _{\cong}\right)
    =\left\{\tikzfig{sub_thm_UVparcomp}\right\}_{qc} \notag\\
    &= \left\{\tikzfig{sub_thm_Vcd}\right\}_{qc} \otimes \left\{\tikzfig{sub_thm_Usigmarhou}\right\}_{qc},\\
  \end{align}
  \begin{equation}
    G(\{g\}_{qc} \otimes \{f\}_{qc}) = G(\{g \otimes f\}_{qc}) = (g \otimes f, 1_A, 1_A)_{\cong} = (g, 1_A, 1_A)_{\cong} \otimes (f, 1_A, 1_A)_{\cong}.
  \end{equation}
  Furthermore, swaps are sent to swaps as
  \begin{equation}
    F\left(\left(\tikzfig{sub_thm_swap}, 1_I, 1_I\right)_{\cong}\right) = \left\{\tikzfig{sub_thm_swap}\right\}_{qc}, \text{ and } G\left(\left\{\tikzfig{sub_thm_swap}\right\}_{qc}\right) = \left(\tikzfig{sub_thm_swap}, 1_I, 1_I\right)_{\cong}.
  \end{equation}
Finally, we show that $F \circ G = id$ and $G \circ F = id$ are identity functors, that is the functors are inverses of each other.
\begin{equation}
  F(G(\{f\}_{qc})) = F((f, 1_A, 1_A)_{\cong}) = \{f\}_{qc},
\end{equation}
  \begin{align}
  G\left(F\left( \left( \tikzfig{sub_thm_1},\tikzfig{sub_thm_2},\tikzfig{sub_thm_3} \right) _{\cong}\right)\right) &= G\left(\left\{\tikzfig{sub_thm_Usigmarho}\right\}_{qc}\right) \notag\\
   &= \left(\tikzfig{sub_thm_Usigmarho}, 1_I, 1_I\right)_{\cong} = \left( \tikzfig{sub_thm_1},\tikzfig{sub_thm_2},\tikzfig{sub_thm_3} \right) _{\cong},\\
\end{align}
where the last equality follows from 
\begin{align}
   P \left( \tikzfig{P_comp1}, \tikzfig{P_comp2} \right) \overset{\textbf{(I)}}{=}  P \left( \tikzfig{P_comp3}, \tikzfig{P_comp4} \right).
\end{align}
\end{proof}

% This completes the first step towards constructing a probabilistic process theory out of $\mathcal{G}(\mathcal{D})$ with a generalised Born rule. 
% We now define physical states and effects in the sub-quotiented theory.

Now, we can declare the probability functions and demonstrate that the quotiented theory is in fact an SPPT. 

\begin{definition}[\bf Quotiented probabilistic process theory]
Given a probabilistic process theory $(\mathcal{D} \subseteq \mathcal{C}, \{\mathcal{S}^A\}, \{\mathcal{E}^A\}, \{P^A\}),$ we define its associated quotiented theory $\mathcal{G}(\mathcal{D} \subseteq \mathcal{C}, \{\mathcal{S}^A\}, \{\mathcal{E}^A\}, \{P^A\}) = (\mathcal{G}(\mathcal{D}) \subseteq \mathcal{G}(\mathcal{D}), \{\mathcal{S}_{\mathcal{G}(\mathcal{D})}^A\}, \{\mathcal{E}_{\mathcal{G}(\mathcal{D})}^A\}, \{P_{\mathcal{G}(\mathcal{D})}^A\})$ as follows:
  \begin{itemize}   
    \item For each object $A$ the set of physical states is  
 \[\mathcal{S}_{\mathcal{G}(\mathcal{D})}^A := \left( \tikzfig{sub_thm_1_state},\tikzfig{sub_thm_2},\tikzfig{sub_thm_3} \right) _{\cong} =  \left( \tikzfig{sub_thm_1_state2},1_I,1_I \right) _{\cong}= \mathcal{G}(\mathcal{D})(I, A)   \] 
 \item For each object $A$ the set of physical effects is  \[ \mathcal{E}_{\mathcal{G}(\mathcal{D})}^A :=   \left( \tikzfig{sub_thm_1_effect},\tikzfig{sub_thm_2},\tikzfig{sub_thm_3} \right) _{\cong} =  \left( \tikzfig{sub_thm_1_effect2},1_I,1_I \right) _{\cong} = \mathcal{G}(\mathcal{D})(A, I) \]
     \item The probability function $P_{\mathcal{G}(\mathcal{D})}^A : \mathcal{S}_{\mathcal{G}(\mathcal{D})}^A \times \mathcal{E}_{\mathcal{G}(\mathcal{D})}^A \rightarrow \mathbb{R}_{\geq 0}$ is given by 
\[  P_{\mathcal{G}(\mathcal{D})}  \left(  \left( \tikzfig{sub_thm_1_state},\tikzfig{sub_thm_2_u},\tikzfig{sub_thm_3_u} \right)_{\cong}  ,  \left( \tikzfig{sub_thm_1_effect_v},\tikzfig{sub_thm_2_v},\tikzfig{sub_thm_3_v} \right) _{\cong}  \right) = P \left(   \tikzfig{new_prob_1}   ,    \tikzfig{sub_thm_comp_3}    \right)  \]
  \end{itemize}
\end{definition}
%Where we deem such tuples to be `physical' since they are constructed from the compositions of physical entities in the original theory. we show that these states and effects satisfy the conditions of being physical states and effects in a pre-probabilistic process theory Def. \ref{def:physicaltheory_new}. This involves defining a probability function $P_{\mathcal{G}(\mathcal{D})}^A : \mathcal{S}_{\mathcal{G}(\mathcal{D})}^A \times \mathcal{E}_{\mathcal{G}(\mathcal{D})}^A \rightarrow \mathbb{R}_{\geq 0}$ satisfying the three probability constraints by lifting it from the probability in the original theory $P^A : \mathcal{S}^A \times \mathcal{E}^A \rightarrow \mathbb{R}_{\geq 0}$.

\begin{theorem}[Quotienting returns an SPPT]
The above theory $\mathcal{G}(\mathcal{D}) \subseteq \mathcal{G}(\mathcal{D})$ with physical states and effects $\mathcal{S}_{\mathcal{G}(\mathcal{D})}^A, \mathcal{E}_{\mathcal{G}(\mathcal{D})}^A$ and probability $P_{\mathcal{G}(\mathcal{D})}$ satisfies the defining axioms for probabilistic process theories Def. \ref{def:physicaltheory} 
\end{theorem}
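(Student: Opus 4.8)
The plan is to verify each clause of Definition~\ref{def:physicaltheory} in turn, leaning on the structural work already done. The key simplification is that the subtheory here is trivial: the processes are all of $\mathcal{G}(\mathcal{D})$, and the physical states and effects are the full hom-sets $\mathcal{S}^A_{\mathcal{G}(\mathcal{D})} = \mathcal{G}(\mathcal{D})(I,A)$ and $\mathcal{E}^A_{\mathcal{G}(\mathcal{D})} = \mathcal{G}(\mathcal{D})(A,I)$. Consequently the structural requirements (closure of states and effects under $\otimes$, closure under pre- and post-composition by processes, and $1_I$ being both a state and an effect) are automatic consequences of $\mathcal{G}(\mathcal{D})$ being a symmetric monoidal category, which is exactly Theorem~\ref{th:subtheory_ismonoidal}. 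The probability function $P_{\mathcal{G}(\mathcal{D})}$ has already been declared, so it remains only to check constraints \textbf{(I)}, \textbf{(II)} and \textbf{(III)}, together with the observation that $P_{\mathcal{G}(\mathcal{D})}$ is well-defined on equivalence classes, which follows by specialising the test systems $\tau,\mu$ in the defining relation $\cong$ of Definition~\ref{def:subtheory} to the trivial choice.

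For \textbf{(I)} I would expand $P_{\mathcal{G}(\mathcal{D})}\big( (U,\rho^u,\sigma^u)_{\cong} \circ s, e\big)$ and $P_{\mathcal{G}(\mathcal{D})}\big( s, e \circ (U,\rho^u,\sigma^u)_{\cong}\big)$ for a state $s$ and effect $e$, using the sequential-composition rule of Definition~\ref{def:subtheory} together with the definition of $P_{\mathcal{G}(\mathcal{D})}$. In both cases the value collapses to the original $P$ evaluated on one and the same composite diagram: the state tuple, the process $U$ with its auxiliary state $\rho^u$ and effect $\sigma^u$ plugged in, and the effect tuple, all wired together. The only difference between the two sides is which side of the $U$-box the ``cut'' defining the outer $P$ falls on, and this is removed by a single application of the original associativity axiom \textbf{(I)} to move $U$ across the cut, precisely the manipulation already used in Lemma~\ref{lem:subtheory_seq_comp_welldefined}.

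For \textbf{(II)} I would expand $P_{\mathcal{G}(\mathcal{D})}$ on a tensor of states against a tensor of effects using the parallel-composition rule, reorganise the auxiliary wires via the symmetry exactly as in Lemma~\ref{lem:subtheory_par_comp_welldefined}, and reduce to the original $P$ evaluated on a tensor product of two independent composite diagrams; applying the original product axiom \textbf{(II)} then factorises this into $P_{\mathcal{G}(\mathcal{D})}(\rho_1,\sigma_1)\cdot P_{\mathcal{G}(\mathcal{D})}(\rho_2,\sigma_2)$. For \textbf{(III)} I would lift the witnesses from the original theory: given $(\rho,\sigma)$ with $P(\rho,\sigma)\neq 0$ in $\mathcal{C}$, the tuples $(1_A,\rho,1_I)_{\cong}$ and $(1_A,1_I,\sigma)_{\cong}$ furnish a state and an effect of $\mathcal{G}(\mathcal{D})$, and the definition of $P_{\mathcal{G}(\mathcal{D})}$ evaluates on this pair to $P(\rho,\sigma)\neq 0$; the same embedding applied to the $\neq 1$ witness handles the second clause.

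The structural checks and the existence of $P_{\mathcal{G}(\mathcal{D})}$ are immediate, so the main obstacle is purely the diagrammatic bookkeeping inside \textbf{(I)} and \textbf{(II)}: one must track the auxiliary systems $X,X'$ and the swap maps introduced by parallel composition carefully enough to see that, after rearrangement, the original axioms apply to exactly the right wires. Since these rearrangements mirror the consistency arguments of Lemmas~\ref{lem:subtheory_seq_comp_welldefined} and \ref{lem:subtheory_par_comp_welldefined}, the difficulty is one of bookkeeping rather than of concept.
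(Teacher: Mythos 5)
Your proposal is correct and follows essentially the same route as the paper: the structural clauses are immediate from the hom-set definitions, axioms \textbf{(I)} and \textbf{(II)} are verified by expanding $P_{\mathcal{G}(\mathcal{D})}$ and applying the original axioms after the same wire rearrangements used in the consistency lemmas, and \textbf{(III)} is obtained by lifting the witnesses from the original theory. The only divergence is cosmetic, in the choice of witness for \textbf{(III)}: the paper embeds via $(\rho \otimes 1_I, 1_I, 1_I)_{\cong}$, placing $\rho$ in the process slot, whereas you use $(1_A, \rho, 1_I)_{\cong}$, placing it in the ancilla-state slot; both evaluate to $P(\rho,\sigma)$, and yours has the minor advantage of not requiring $\rho$ itself to lie in $\mathcal{D}$.
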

\begin{proof}
  The first four conditions for physical states and effects follow directly from the definition of states and effects as hom-sets in $\mathcal{G}(\mathcal{D})$. 
  \begin{itemize}
  \item $(s_A, s_B) \in \mathcal{S}_{\mathcal{G}(\mathcal{D})}^A \times \mathcal{S}_{\mathcal{G}(\mathcal{D})}^B \implies s_A \otimes s_B \in \mathcal{S}_{\mathcal{G}(\mathcal{D})}^{A \otimes B}.$
  \item $(e_A, e_B) \in \mathcal{E}_{\mathcal{G}(\mathcal{D})}^A \times \mathcal{E}_{\mathcal{G}(\mathcal{D})}^B \implies e_A \otimes e_B \in \mathcal{E}_{\mathcal{G}(\mathcal{D})}^{A \otimes B}.$
  \item $\forall f \in \mathcal{G}(\mathcal{D})(A,B),\rho \in \mathcal{S}_{\mathcal{G}(\mathcal{D})}^A, \text{ and } \sigma \in \mathcal{E}_{\mathcal{G}(\mathcal{D})}^A, ~~ f \circ \rho \in \mathcal{S}_{\mathcal{G}(\mathcal{D})}^B, \text{ and } \sigma \circ f \in \mathcal{E}_{\mathcal{G}(\mathcal{D})}^A.$
  \item $(1_I, 1_I, 1_I)_{\cong} \in \mathcal{S}_{\mathcal{G}(\mathcal{D})}^I$ and $(1_I, 1_I, 1_I)_{\cong} \in \mathcal{E}_{\mathcal{G}(\mathcal{D})}^I$.
  \end{itemize}
%For the probability condition, recall that we define probability function $P_{\mathcal{G}(\mathcal{D})}^A : \mathcal{S}_{\mathcal{G}(\mathcal{D})}^A \times \mathcal{E}_{\mathcal{G}(\mathcal{D})}^A \rightarrow \mathbb{R}_{\geq 0}$ by
%\[  P_{\mathcal{G}(\mathcal{D})}  \left(  \left( \tikzfig{sub_thm_1_state},\tikzfig{sub_thm_2_u},\tikzfig{sub_thm_3_u} \right)_{\cong}  ,  \left( \tikzfig{sub_thm_1_effect_v},\tikzfig{sub_thm_2_v},\tikzfig{sub_thm_3_v} \right) _{\cong}  \right) = P \left(   \tikzfig{new_prob_1}   ,    \tikzfig{sub_thm_comp_3}    \right)  \]
We now check the three constraints for probability functions: \begin{align*}
   \textbf{(I)}: \hspace{1cm} &P_{\mathcal{G}(\mathcal{D})}  \left(   \left( \tikzfig{sub_thm_1_w},\tikzfig{sub_thm_2_w},\tikzfig{sub_thm_3_w} \right)_{\cong} \circ \left( \tikzfig{sub_thm_1_state},\tikzfig{sub_thm_2_u},\tikzfig{sub_thm_3_u} \right)_{\cong}  ,  \left( \tikzfig{sub_thm_1_effect_v_b},\tikzfig{sub_thm_2_v_b},\tikzfig{sub_thm_3_v_b} \right) _{\cong}  \right)  \\
    =&P_{\mathcal{G}(\mathcal{D})}  \left(   \left( \tikzfig{sub_thm_1_statew},\tikzfig{sub_thm_2uw},\tikzfig{sub_thm_3uw} \right)_{\cong} ,  \left( \tikzfig{sub_thm_1_effect_v_b},\tikzfig{sub_thm_2_v_b},\tikzfig{sub_thm_3_v_b} \right) _{\cong}  \right)  \\
\end{align*}
\begin{align}
    =&P  \left(  \tikzfig{sub_thm_1_statewv}, \tikzfig{sub_thm_3uwv}  \right) \notag \\
    =&P_{\mathcal{G}(\mathcal{D})}  \left(  \left( \tikzfig{sub_thm_1_state},\tikzfig{sub_thm_2_u},\tikzfig{sub_thm_3_u} \right)_{\cong} ,  \left( \tikzfig{sub_thm_1_effectwv},\tikzfig{sub_thm_2wv},\tikzfig{sub_thm_3wv} \right) _{\cong}  \right) \notag \\
   = &P_{\mathcal{G}(\mathcal{D})}  \left(   \left( \tikzfig{sub_thm_1_state},\tikzfig{sub_thm_2_u},\tikzfig{sub_thm_3_u} \right)_{\cong}  ,  \left( \tikzfig{sub_thm_1_effect_v_b},\tikzfig{sub_thm_2_v_b},\tikzfig{sub_thm_3_v_b} \right) _{\cong}   \circ \left( \tikzfig{sub_thm_1_w},\tikzfig{sub_thm_2_w},\tikzfig{sub_thm_3_w} \right)_{\cong}  \right).  
\end{align}
\begin{align*}
   \textbf{(II)}: \hspace{0.2cm} &P_{\mathcal{G}(\mathcal{D})}  \left(  \left( \tikzfig{sub_thm_1_state},\tikzfig{sub_thm_2_u},\tikzfig{sub_thm_3_u} \right)_{\cong} \otimes    \left( \tikzfig{sub_thm_1_staten},\tikzfig{sub_thm_2_n},\tikzfig{sub_thm_3_n} \right)_{\cong}, \right.\\
   & \hspace{4cm} \left.  \left( \tikzfig{sub_thm_1_effectvaz},\tikzfig{sub_thm_2_v_b},\tikzfig{sub_thm_3_v_b} \right) _{\cong} \otimes  \left( \tikzfig{sub_thm_1_effectmbw},\tikzfig{sub_thm_2_m},\tikzfig{sub_thm_3_m} \right) _{\cong}  \right)  \\
    =&P_{\mathcal{G}(\mathcal{D})}  \left(   \left( \tikzfig{sub_thm_1_stateun},\tikzfig{sub_thm_2_un},\tikzfig{sub_thm_3_un} \right)_{\cong} , \left( \tikzfig{sub_thm_1_effectvm},\tikzfig{sub_thm_2_vm},\tikzfig{sub_thm_3_vm} \right)_{\cong} \right) 
\end{align*}
\begin{align}
    =&P  \left(  \tikzfig{sub_thm_1_compunvm}, \tikzfig{sub_thm_3_unvm}  \right) \notag\\
     \overset{\textbf{(I)}}{=}& P  \left(  \tikzfig{sub_thm_1_compunvm_split}, \tikzfig{sub_thm_3_unvm_split}  \right) ~~\text{ where } \textbf{(I)} \text{ swaps } Y', Z' \notag\\
     \overset{\textbf{(II)}}{=}& P  \left(  \tikzfig{sub_thm_1_compuv}, \tikzfig{sub_thm_3_uv}  \right) \cdot P  \left(  \tikzfig{sub_thm_1_compnm}, \tikzfig{sub_thm_3_nm}  \right) \notag\\
   =&P_{\mathcal{G}(\mathcal{D})}  \left(  \left( \tikzfig{sub_thm_1_state},\tikzfig{sub_thm_2_u},\tikzfig{sub_thm_3_u} \right)_{\cong} ,\left( \tikzfig{sub_thm_1_effectvaz},\tikzfig{sub_thm_2_v_b},\tikzfig{sub_thm_3_v_b} \right) _{\cong}   \right) \notag\\
   & \hspace{4cm} \cdot P_{\mathcal{G}(\mathcal{D})}\left( \left( \tikzfig{sub_thm_1_staten},\tikzfig{sub_thm_2_n},\tikzfig{sub_thm_3_n} \right)_{\cong}, \left( \tikzfig{sub_thm_1_effectmbw},\tikzfig{sub_thm_2_m},\tikzfig{sub_thm_3_m} \right) _{\cong}  \right).  
\end{align}
    \textbf{(III)}: There exists a state-effect pair $(\rho, \sigma)$ such that $P(\rho, \sigma) \neq 0.$ If we define  $\rho_{gd} = \left( \rho \otimes 1_I , 1_I, 1_I \right)_{\cong}$ and $\sigma_{gd} = \left( \sigma \otimes 1_I, 1_I, 1_I \right)_{\cong}$ then $P_{\mathcal{G}(\mathcal{D})}(\rho_{gd}, \sigma_{gd}) = P(\rho, \sigma) \neq 0.$ Similarly, there exists a state-effect pair $(\rho', \sigma')$  such that $P(\rho', \sigma') \neq 1.$ If we define  $\rho'_{gd} = \left( \rho' \otimes 1_I , 1_I, 1_I \right)_{\cong}$  and $\sigma'_{gd} = \left( \sigma' \otimes 1_I , 1_I, 1_I \right)_{\cong}$ then, $P_{\mathcal{G}(\mathcal{D})}(\rho'_{gd}, \sigma'_{gd}) = P(\rho', \sigma') \neq 1$. 
\end{proof}

%Finally, as a corollary of all these we have

%\begin{corollary}[\bf A Sub-Quotiented theory is a probabilistic process theory]
%The constructed sub-quotiented theory $\mathcal{Q}(\mathcal{D}) \subseteq \mathcal{Q}(\mathcal{D})$ with physical states and effects $\mathcal{S}_{\mathcal{G}(\mathcal{D})}^A, \mathcal{E}_{\mathcal{G}(\mathcal{D})}^A$ and probability $P_{\mathcal{G}(\mathcal{D})}$ constitute a probabilistic process theory Def. \ref{def:physicaltheory} 
%\end{corollary}

We see that the states and effects are simply defined as the hom-sets of morphisms from and to the monoidal unit in $\mathcal{G}(\mathcal{D})$. $\mathcal{G}(\mathcal{D})$ being the physical subcategory of itself, the procedure then, couples back the role of physical states, effects and morphisms. 
%In some sense, the goal of this construction so far has been to consider all compositions of physical states, effects and associative morphisms which are probabilistically equivalent as morphisms in a new category $\mathcal{Q}(\mathcal{D})$. This provides a closure to the compositions of things we consider physical in our construction and turns them into an SMC. 
In other words, theories such as textbook quantum mechanics which do not strictly form categories, can be lifted to ones that do. 
%As a result,  defining physical states and effects as the states and effects from this new category allows us to recover the abilitiy to treat them as associative morphisms and thus being able to slide them around in probability calculations. 
As a result, the outcome of quotienting a general probabilistic process theory is a simplified probabilistic process theory. 
All results which hold for simplified probabilistic process theories hence, hold for $\mathcal{G}(\mathcal{D})$, such as the existence of a monoid homomorphism $\lambda$. 

\begin{theorem}[\bf Characterisation of quotiented normalised quantum theory] The quotiented theory for the case of normalised quantum theory $(\mathcal{U} \subseteq \mathbf{FHilb}, \{\ket{\psi}\}, \{\bra{\phi}\}, |\langle - | - \rangle|^2)$ yields an SPPT isomorphic to rank-1 CPTNI maps, i.e., Kraus operators $(\mathcal{K}, \Tr[-\circ-])$ where $\mathcal{U}$ is the SMC unitaries, and $\mathcal{K}$ is the SMC of rank-1 CPTNI maps. 
  \label{th:quotiented_quantum_is_standard_quantum}
\end{theorem}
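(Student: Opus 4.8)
The plan is to construct an explicit strict symmetric monoidal functor $\Phi \colon \mathcal{G}(\mathcal{U}) \to \mathcal{K}$, show it is an isomorphism of categories, and verify that it carries the probability function $P_{\mathcal{G}(\mathcal{U})}$ to $\Tr[-\circ-]$, so that the two SPPTs are isomorphic. The key device is the linear map \emph{encoded} by a tuple: for a representative $(U,\rho,\sigma)$ with $U \in \mathcal{U}(A\otimes X,\, B\otimes X')$, $\rho = \ket{\alpha}\in \mathcal{S}^X$ and $\sigma = \bra{\beta}\in \mathcal{E}^{X'}$, set $f_{(U,\rho,\sigma)} := (1_B\otimes\bra{\beta})\,U\,(1_A\otimes\ket{\alpha}) \in \textbf{FHilb}(A,B)$. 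Since $U$ is an isometry and $\ket{\alpha},\ket{\beta}$ are unit vectors, $\|f_{(U,\rho,\sigma)}\ket{v}\| \le \|U(\ket{v}\otimes\ket{\alpha})\| = \|\ket{v}\|$, so $f_{(U,\rho,\sigma)}$ is a contraction, $f^\dagger f \le 1_A$, and $f_{(U,\rho,\sigma)}(\cdot)f_{(U,\rho,\sigma)}^\dagger$ is a rank-1 completely positive trace non-increasing map. I then define $\Phi$ to act as the identity on objects and by $\Phi\big((U,\rho,\sigma)_\cong\big) := f_{(U,\rho,\sigma)}(\cdot)\,f_{(U,\rho,\sigma)}^\dagger$ on morphisms.

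Next I would establish that $\Phi$ is well-defined and faithful in a single step. Unfolding the relation $\cong$ of Definition~\ref{def:subtheory} for the Born rule $|\langle -|-\rangle|^2$, two tuples are equivalent precisely when $|\langle\mu|(f_1\otimes 1_Z)|\tau\rangle|^2 = |\langle\mu|(f_2\otimes 1_Z)|\tau\rangle|^2$ for all unit vectors $\tau,\mu$ and all ancillas $Z$, where $f_i$ are the encoded maps. This is exactly the condition analysed in the proof of the characterisation of quotiented $\textbf{FHilb}$, which showed it is equivalent to $f_1(\cdot)f_1^\dagger = f_2(\cdot)f_2^\dagger$ (equality of Choi matrices). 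Hence $(U_1,\rho_1,\sigma_1)_\cong = (U_2,\rho_2,\sigma_2)_\cong$ iff the associated rank-1 CP maps coincide, giving both well-definedness and injectivity of $\Phi$ on morphisms at once.

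I would then check functoriality and monoidality by unpacking the sequential and parallel composites of Definition~\ref{def:subtheory}. The composite tuple runs the two dilating unitaries in series (resp. parallel) on the principal wire while tensoring the ancilla states and effects, so the map it encodes is exactly $f_W \circ f_U$ (resp. $f_U \otimes f_V$, up to the symmetry reorganising ancilla wires already handled in Theorem~\ref{th:subtheory_ismonoidal}); composing the single-Kraus CP maps then reproduces $\Phi$ of the composite. The identity tuple $(1_A,1_I,1_I)$ encodes $1_A$ and hence the identity channel, and swaps go to swaps, so $\Phi$ is a strict symmetric monoidal functor. For the probability functions, a state of $\mathcal{G}(\mathcal{U})(I,A)$ encodes a vector $\ket{\psi}$ and an effect a covector $\bra{\phi}$, whence $\Phi$ sends them to $\ketbra{\psi}$ and $\ketbra{\phi}$, and unfolding $P_{\mathcal{G}(\mathcal{U})}$ gives $|\langle\phi|\psi\rangle|^2 = \Tr[\ketbra{\phi}\,\ketbra{\psi}]$, matching $\Tr[-\circ-]$ on $\mathcal{K}$.

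The remaining and most substantive step is surjectivity (fullness): every rank-1 CPTNI map $g(\cdot)g^\dagger$, i.e. every contraction $g\colon A\to B$ with $g^\dagger g \le 1_A$, must arise as $f_{(U,\rho,\sigma)}$ for some unitary $U$ and unit ancillas. This is precisely the unitary dilation of a contraction (Stinespring--Kraus dilation): using the defect operator $\sqrt{1_A - g^\dagger g}$ one builds an isometry dilating $g$ and extends it to a unitary $U$ on suitably padded ancilla spaces with $g = (1_B\otimes\bra{\beta})U(1_A\otimes\ket{\alpha})$. I expect this dilation argument, together with the bookkeeping needed to confirm that the contraction condition matches \emph{exactly} the trace non-increasing (rather than all rank-1 CP) maps, to be the main obstacle; the categorical parts above are essentially a careful reuse of the $\textbf{FHilb}$ computation and the composition rules of $\mathcal{G}$. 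Combining faithfulness, fullness, and bijectivity on objects yields the claimed isomorphism of SPPTs $\mathcal{G}(\mathcal{U}) \cong (\mathcal{K}, \Tr[-\circ-])$.
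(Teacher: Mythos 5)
Your proposal is correct and follows essentially the same route as the paper: both define the functor to $\mathcal{K}$ by closing off the ancilla wires of a tuple $(U,\rho,\sigma)$ to obtain the Kraus operator $(1_B\otimes\sigma)\,U\,(1_A\otimes\rho)$, both invoke unitary (Stinespring) dilation for surjectivity, both reduce well-definedness/injectivity to the probabilistic-equivalence characterisation already computed for quotiented $\textbf{FHilb}$, and both verify that the induced probability function is $\Tr[-\circ-]$. The only cosmetic difference is that you package the isomorphism as one full, faithful, object-bijective functor while the paper exhibits an explicit inverse functor $G$, and you spell out the contraction bound and defect-operator dilation that the paper leaves to a citation.
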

\begin{proof}
The morphisms of $\mathcal{G}(\mathcal{U})$ are of the form \begin{equation}
  \left( \tikzfig{sub_thm_1},\tikzfig{sub_thm_2},\tikzfig{sub_thm_3} \right) _{\cong}.
\end{equation}
We declare a functor $F: \mathcal{G}(\mathcal{U}) \rightarrow \mathcal{K}$ as
\begin{equation}
  \left(\left( \tikzfig{sub_thm_1},\tikzfig{sub_thm_2},\tikzfig{sub_thm_3} \right) _{\cong}\right) = \tikzfig{sub_thm_Usigmarho}.
\end{equation}

Conversely by Stinespring dilation theorem \cite{Nielsen2010Dec, stinespring1955positive} any rank-1 CPTNI map can be dilated to a unitary map acting on a larger Hilbert space with an initialised ancilla state and a final effect $$\tikzfig{morphismf} = \tikzfig{sub_thm_Usigmarho}.$$ Thus, we can define a functor $G: \mathcal{K} \rightarrow \mathcal{G}(\mathcal{U})$ as
\begin{equation}
  G\left(\tikzfig{morphismf}\right) = \left( \tikzfig{sub_thm_1},\tikzfig{sub_thm_2},\tikzfig{sub_thm_3} \right) _{\cong}.
\end{equation}
 That $G$ is well-defined, i.e., the outcome of $G$ does not depend on the chosen dilation unitary is a consequence of probabilistic equivalence  Def. \ref{def:subtheory}.  These being symmetric monoidal functors can be proven identically to Th. \ref{th:subtheory_isomorphic_quotiented}. To show that these are inverses of each other we have \begin{equation}F\left(G\left(\tikzfig{morphismf}\right)\right) = F\left(\left( \tikzfig{sub_thm_1},\tikzfig{sub_thm_2},\tikzfig{sub_thm_3} \right) _{\cong}\right) = \tikzfig{sub_thm_Usigmarho} = \tikzfig{morphismf}
 \end{equation}
  and 
  \begin{equation}
    G\left(F\left(\left( \tikzfig{sub_thm_1},\tikzfig{sub_thm_2},\tikzfig{sub_thm_3} \right) _{\cong}\right)\right) = G\left(\tikzfig{sub_thm_Usigmarho}\right) = \left( \tikzfig{sub_thm_1},\tikzfig{sub_thm_2},\tikzfig{sub_thm_3} \right) _{\cong}.
  \end{equation}
   Thus, $F \circ G = id$ and $G \circ F = id$ are identity functors, and the two categories are isomorphic. For probability functions, we have
\begin{eqnarray} &&P_{\mathcal{G}(\mathcal{U})}  \left(  \left( \tikzfig{sub_thm_1_state},\tikzfig{sub_thm_2_u},\tikzfig{sub_thm_3_u} \right)_{\cong}  ,  \left( \tikzfig{sub_thm_1_effect_v},\tikzfig{sub_thm_2_v},\tikzfig{sub_thm_3_v} \right) _{\cong}  \right) = P \left(   \tikzfig{new_prob_1}   ,    \tikzfig{sub_thm_comp_3}    \right) \notag\\
&&= \left|\tikzfig{new_prob_2}\right|^2 = \Tr[\rho \sigma], \end{eqnarray}
where 
\begin{equation}
  \rho = \tikzfig{new_prob_3} \text{ and } \sigma = \tikzfig{new_prob_4}.
\end{equation}
\end{proof}
%Since, $\mathcal{Q}(\mathcal{D})$ is now a probabilistic process theory Def. \ref{def:physicaltheory} all the results of section \ref{sec:physicaltheories} and \ref{sec:probphysicaltheories} hold. For instance

%\begin{definition}[\bf $\lambda_{\mathcal{G}(\mathcal{D})}$ function for sub-quotiented theory]
%  For the sub-quotiented theory $(\mathcal{Q}(\mathcal{D}) \subseteq \mathcal{Q}(\mathcal{D}), \mathcal{S}_{\mathcal{G}(\mathcal{D})}^A, \mathcal{E}_{\mathcal{G}(\mathcal{D})}^A)$, we define a function $\lambda_{\mathcal{G}(\mathcal{D})}: \mathcal{D}(I,I) \rightarrow \mathbb{R}_{\geq 0}$ by \[    \lambda_{\mathcal{G}(\mathcal{D})}  \left(   \left( \tikzfig{sub_thm_scalar},\tikzfig{sub_thm_2},\tikzfig{sub_thm_3} \right)_{\cong}   \right) = P_{\mathcal{G}(\mathcal{D})}\left((1_I, 1_I, 1_I)_{\cong}, \left( \tikzfig{sub_thm_scalar},\tikzfig{sub_thm_2},\tikzfig{sub_thm_3} \right)_{\cong}\right) \overset{\text{Def.}}{=} P\left(\tikzfig{sub_thm_scalar_2},\tikzfig{sub_thm_3}\right) \]
%  \label{def:lambda_\mathcal{G}(\mathcal{D})}
%\end{definition}

Let us now outline the explicit consequences of the fact that the outcome of the quotienting procedure $\mathcal{G}$ is simplified probabilistic process theory.
%We show that the computation of the probability function can be given by concatenating states and effects and then applying $\lambda$ function. 
\begin{corollary}
 For the (simplified) probabilistic process theory $(\mathcal{G}(\mathcal{D}), {P}_{\mathcal{G}(\mathcal{D})}^A)$, the probability function $P_{\mathcal{G}(\mathcal{D})}$ can be computed by state-effect composition followed by a monoid homomorphism $\lambda_{\mathcal{G}(\mathcal{D})} : \mathcal{D}(I,I) \rightarrow \mathbb{R}_{\geq 0}$ by \[    \lambda_{\mathcal{G}(\mathcal{D})}  \left(   \left( \tikzfig{sub_thm_scalar},\tikzfig{sub_thm_2},\tikzfig{sub_thm_3} \right)_{\cong}   \right) = P_{\mathcal{G}(\mathcal{D})}\left((1_I, 1_I, 1_I)_{\cong}, \left( \tikzfig{sub_thm_scalar},\tikzfig{sub_thm_2},\tikzfig{sub_thm_3} \right)_{\cong}\right) \overset{\text{Def.}}{=} P\left(\tikzfig{sub_thm_scalar_2},\tikzfig{sub_thm_3}\right) \]
  \label{lem:lambda_is_bornrule}
\end{corollary}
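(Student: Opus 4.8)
The plan is to leverage the fact, just established, that the quotient produces a simplified probabilistic process theory. Because $\mathcal{G}(\mathcal{D})$ is its own physical subcategory, with physical states and effects given precisely by the hom-sets $\mathcal{G}(\mathcal{D})(I,A)$ and $\mathcal{G}(\mathcal{D})(A,I)$, the pair $(\mathcal{G}(\mathcal{D}), \{P^A_{\mathcal{G}(\mathcal{D})}\})$ satisfies Definition \ref{def:physicaltheory}. Hence every result of Section \ref{sec:probphysicaltheories} applies verbatim, and the corollary reduces to instantiating Lemma \ref{lemma:lambda} and Theorem \ref{th:Cmonoidalhomomorphism} for $\mathcal{G}(\mathcal{D})$ and then unpacking the definition of $P_{\mathcal{G}(\mathcal{D})}$ into the claimed closed form.

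First I would fix the scalar unit $(1_I,1_I,1_I)_{\cong}$, whose role as a unit for parallel composition was already verified in Theorem \ref{th:subtheory_ismonoidal}, so that it provides the trivial state and effect required by Lemma \ref{lemma:lambda}. Instantiating that lemma in $\mathcal{G}(\mathcal{D})$ yields the function
\[
  \lambda_{\mathcal{G}(\mathcal{D})}\!\left((\gamma,\rho,\sigma)_{\cong}\right) = P_{\mathcal{G}(\mathcal{D})}\!\left((1_I,1_I,1_I)_{\cong},\, (\gamma,\rho,\sigma)_{\cong}\right),
\]
whose independence of the chosen representative of the scalar follows from associativity axiom \textbf{(I)} for $P_{\mathcal{G}(\mathcal{D})}$, exactly as in the proof of Lemma \ref{lemma:lambda}. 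Instantiating Theorem \ref{th:Cmonoidalhomomorphism} then upgrades $\lambda_{\mathcal{G}(\mathcal{D})}$ from a bare function to a monoid homomorphism into $(\mathbb{R}_{\geq 0}, 1, \times)$, since all three probability axioms have been verified for $P_{\mathcal{G}(\mathcal{D})}$.

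It then remains to evaluate the right-hand side. Feeding the trivial state $(1_I,1_I,1_I)_{\cong}$ against the scalar effect $(\gamma,\rho,\sigma)_{\cong}$ through the defining equation of $P_{\mathcal{G}(\mathcal{D})}$ contracts the ancillae $\rho$ and $\sigma$ onto $\gamma$ and returns the underlying probability $P(\gamma \circ \rho,\, \sigma)$, where $\gamma \circ \rho$ is a physical state (by the composition-closure axiom of probabilistic process theories) and $\sigma$ a physical effect. This is precisely the asserted formula, and it displays the Born-rule shape: a state-effect composition followed by the homomorphism.

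The calculation is routine, so I expect no genuine obstacle, only bookkeeping. The one point needing care is to track the ancilla wires $X, X'$ through the $\mathcal{G}(\mathcal{D})$-composition conventions and confirm they line up with the contraction built into the definition of $P_{\mathcal{G}(\mathcal{D})}$; I would also clarify the stated domain $\mathcal{D}(I,I)$ of $\lambda_{\mathcal{G}(\mathcal{D})}$ against the fact that its argument $(\gamma,\rho,\sigma)_{\cong}$ genuinely lives in $\mathcal{G}(\mathcal{D})(I,I)$. Well-definedness on $\cong$-classes needs no separate argument, being inherited from the already-established well-definedness of $P_{\mathcal{G}(\mathcal{D})}$ on equivalence classes.
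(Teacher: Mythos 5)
Your proposal is correct and follows exactly the route the paper takes: the paper's own proof is just ``Immediate by the results of the previous section,'' i.e.\ instantiate Lemma~\ref{lemma:lambda} and Theorem~\ref{th:Cmonoidalhomomorphism} for the SPPT $(\mathcal{G}(\mathcal{D}), \{P^A_{\mathcal{G}(\mathcal{D})}\})$ and unfold the definition of $P_{\mathcal{G}(\mathcal{D})}$, which is what you spell out in detail. Your side remark that the stated domain $\mathcal{D}(I,I)$ should really be read as $\mathcal{G}(\mathcal{D})(I,I)$ is also a fair catch.
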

\begin{proof}
Immediate by the results of the previous section.
\end{proof}

To be explicit the probability for a state-effect pair 
  \[\left(  \left( \tikzfig{sub_thm_1_state},\tikzfig{sub_thm_2_u},\tikzfig{sub_thm_3_u} \right)_{\cong},  \left( \tikzfig{sub_thm_1_effectvaz},\tikzfig{sub_thm_2_v_b},\tikzfig{sub_thm_3_v_b} \right) _{\cong}  \right)\]
  is given by
\begin{align}
  & P_{\mathcal{G}(\mathcal{D})}  \left(  \left( \tikzfig{sub_thm_1_state},\tikzfig{sub_thm_2_u},\tikzfig{sub_thm_3_u} \right)_{\cong}  ,  \left( \tikzfig{sub_thm_1_effectvaz},\tikzfig{sub_thm_2_v_b},\tikzfig{sub_thm_3_v_b} \right) _{\cong}  \right) \notag\\
  &= P_{\mathcal{G}(\mathcal{D})}  \left( (1_I,1_I,1_I)_{\cong}  ,  \left( \tikzfig{sub_thm_1_effectvaz},\tikzfig{sub_thm_2_v_b},\tikzfig{sub_thm_3_v_b} \right) _{\cong} \circ \left( \tikzfig{sub_thm_1_state},\tikzfig{sub_thm_2_u},\tikzfig{sub_thm_3_u} \right)_{\cong}   \right) \notag\\
  &=  \lambda_{\mathcal{G}(\mathcal{D})}  \left(    \left( \tikzfig{sub_thm_1_effectvaz},\tikzfig{sub_thm_2_v_b},\tikzfig{sub_thm_3_v_b} \right) _{\cong} \circ \left( \tikzfig{sub_thm_1_state},\tikzfig{sub_thm_2_u},\tikzfig{sub_thm_3_u} \right)_{\cong}   \right) \notag \\
  &= \lambda_{\mathcal{G}(\mathcal{D})}\left(\left(\tikzfig{scalar_compuv}, \tikzfig{sub_thm_2_uv}, \tikzfig{sub_thm_3_uv}\right)_{\cong}\right).
  \end{align}

%Finally, since this is a probabilistic process theory Def. \ref{def:physicaltheory} with a $\lambda$ function $\lambda_{\mathcal{G}(\mathcal{D})}$ defined via probability Def. \ref{def:lambda_\mathcal{G}(\mathcal{D})}, the function $\lambda_{\mathcal{G}(\mathcal{D})}$ elevates to a monoid homomorphism. This gives a generalised Born rule for the sub-quotiented theory $\mathcal{Q}(\mathcal{D}) \subseteq \mathcal{Q}(\mathcal{D})$.

%\begin{corollary}[\bf $\lambda_{\mathcal{G}(\mathcal{D})}$ is a monoid homomorphism]
%  The function $\lambda_{\mathcal{G}(\mathcal{D})}: \mathcal{Q}(\mathcal{D})(I,I) \rightarrow \mathbb{R}_{\geq 0} $ constitutes a monoid homomorphism between the monoids $(\mathcal{Q}(\mathcal{D})(I,I), \otimes, (1_I, 1_I, 1_I)_{\cong})$ and $(\mathbb{R}_{\geq 0}, \cdot, 1)$.
%  \label{th:lambda_is_isomomorphism}
%\end{corollary}

%\begin{corollary}
%  A quotiented probabilistic process theory $(\mathcal{Q}(\mathcal{D}) \subseteq \mathcal{Q}(\mathcal{D}), \mathcal{S}_{\mathcal{G}(\mathcal{D})}^A, \mathcal{E}_{\mathcal{G}(\mathcal{D})}^A)$ has a generalised Born rule given by the $\lambda$ function $ \lambda_{\mathcal{G}(\mathcal{D})} $.
%\end{corollary}

%We can show this by proving that $\mathcal{Q}(\mathcal{D})$ the quotienting procedure on $\mathcal{Q}(\mathcal{D})$ yields $\mathcal{Q}(\mathcal{D})$. 
To show that the theory $(\mathcal{G}(\mathcal{D}), {P}_{\mathcal{G}(\mathcal{D})}^A)$ has a generalised Born rule, we only need to confirm that the SPPT at hand is itself the result of a quotienting procedure $\mathcal{Q}$ on an SPPT.
\begin{theorem}[\bf Stability under quotienting]
  The quotienting procedure $\mathcal{Q}$ on the theory $(\mathcal{G}(\mathcal{D}), {P}_{\mathcal{G}(\mathcal{D})}^A)$ yields  $(\mathcal{G}(\mathcal{D}), {P}_{\mathcal{G}(\mathcal{D})}^A)$.  
\end{theorem}
\begin{proof} 
  We simply need to show that if two morphisms in $\mathcal{G}(\mathcal{D})$ are probabilistically equivalent then they are the same. Let us have that two morphisms $(U_1, \rho_1, \sigma_1)_{\cong}$ and $(U_2, \rho_2, \sigma_2)_{\cong}$ are probabilistically equivalent in $\mathcal{G}(\mathcal{D})$. This means that for any state-effect pair $(\tau, \rho^\tau, \sigma^\tau)_{\cong}, (\mu, \rho^{\mu}, \sigma^{\mu})_{\cong}$ we should have 
  \begin{align}
    &P_{\mathcal{G}(\mathcal{D})}\left( [(U_1, \rho_1, \sigma_1)_{\cong} \otimes (1_B, 1_I, 1_I )_{\cong}] \circ (\tau, \rho^\tau, \sigma^\tau)_{\cong}, (\mu, \rho^{\mu}, \sigma^{\mu})_{\cong}  \right) \notag\\
    =& P_{\mathcal{G}(\mathcal{D})}\left( [(U_2, \rho_2, \sigma_2)_{\cong} \otimes (1_B, 1_I, 1_I )_{\cong}] \circ (\tau, \rho^\tau, \sigma^\tau)_{\cong}, (\mu, \rho^{\mu}, \sigma^{\mu})_{\cong} \right), 
  \end{align}
  that is
  \begin{align}
    &P_{\mathcal{G}(\mathcal{D})}\left( \left(\tikzfig{sub_thm_quo_1}, \tikzfig{sub_thm_quo_2}, \tikzfig{sub_thm_quo_3} \right)_{\cong}, \left(\tikzfig{sub_thm_quo_4}, \tikzfig{sub_thm_quo_5}, \tikzfig{sub_thm_quo_6} \right)_{\cong} \right) \notag \\
    =& P_{\mathcal{G}(\mathcal{D})}\left( \left(\tikzfig{sub_thm_quo_7}, \tikzfig{sub_thm_quo_8}, \tikzfig{sub_thm_quo_9} \right)_{\cong}, \left(\tikzfig{sub_thm_quo_4}, \tikzfig{sub_thm_quo_5}, \tikzfig{sub_thm_quo_6} \right)_{\cong} \right).
  \end{align}
  By definition of $P_{\mathcal{G}(\mathcal{D})}$ this requirement is equivalent to 
    \begin{align*}
    &P\left(\tikzfig{sub_thm_quo_10}, \tikzfig{sub_thm_quo_11} \right) 
    = P\left(\tikzfig{sub_thm_quo_12}, \tikzfig{sub_thm_quo_13} \right)\\
    \overset{\mathbf{(I)}}{\iff}&P\left(\tikzfig{sub_thm_quo_14}, \tikzfig{sub_thm_quo_15} \right)\\ 
    \end{align*}
    \begin{align}
    =& P\left(\tikzfig{sub_thm_quo_16}, \tikzfig{sub_thm_quo_17} \right).
  \end{align}
  Setting $Y, Y' = I$ and $Z, Z' = I$ with $\rho^\mu, \rho^\tau = 1_I$ and $\sigma^\mu, \sigma^\tau = 1_I$ in the above equation, we have precisely the condition that $(U_1, \rho_1, \sigma_1)\text{ and } (U_2, \rho_2, \sigma_2)$ are probabilistically equivalent and thus $(U_1, \rho_1, \sigma_1)_{\cong}$ and $(U_2, \rho_2, \sigma_2)_{\cong}$ are the same morphisms in $\mathcal{G}(\mathcal{D})$. Thus, the quotienting procedure $\mathcal{Q}$ on $\mathcal{G}(\mathcal{D})$ yields $\mathcal{G}(\mathcal{D})$ itself. Furthermore, since the probability of a state-effect pair in a quotiented theory is just the probability of the particular instances of state-effect pair in their equivalence classes Eq.\eqref{eq:probabilityfunctionQC}, the probability function is also unchanged by the quotienting procedure. Thus, $\mathcal{Q}(\mathcal{G}(\mathcal{D}), {P}_{\mathcal{G}(\mathcal{D})}^A) = (\mathcal{G}(\mathcal{D}), {P}_{\mathcal{G}(\mathcal{D})}^A)$.
\end{proof}

Since, $(\mathcal{G}(\mathcal{D}), {P}_{\mathcal{G}(\mathcal{D})}^A)$ is stable under quotienting $\mathcal{Q}$, all the results from section \ref{sec:probphysicaltheories} are also immediately valid for $(\mathcal{G}(\mathcal{D}), {P}_{\mathcal{G}(\mathcal{D})}^A).$ In particular $\lambda_{\mathcal{G}(\mathcal{D})}$ elevates to a monoid isomorphism along with its inverse $\theta_{\mathcal{G}(\mathcal{D})} = \lambda^{-1}_{\mathcal{G}(\mathcal{D})}$. Consequently, we have the following:
\begin{theorem}[Generalised Born rule for probabilistic process theories]
The quotient $\mathcal{G}$ of a probabilistic process theory satisfies the generalised Born rule, in other words, $\lambda_{\mathcal{G}(\mathcal{D})} : \mathcal{D}(I,I) \rightarrow \mathbb{R}_{\geq 0}$ is injective.
\end{theorem}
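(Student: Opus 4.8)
The plan is to reduce the general claim to the already-established simplified case by exploiting the two structural theorems proven immediately above. First I would recall that $\mathcal{G}(\mathcal{D})$ together with $P_{\mathcal{G}(\mathcal{D})}$ is itself a (simplified) probabilistic process theory, by the theorem \emph{Quotienting returns an SPPT}. This is the crucial observation, because it means that every result of Section~\ref{sec:probphysicaltheories} applies to $\mathcal{G}(\mathcal{D})$ read as a theory in its own right. In particular, Theorem~\ref{th:Cmonoidalhomomorphism} guarantees that $\lambda_{\mathcal{G}(\mathcal{D})}$ (as introduced in Corollary~\ref{lem:lambda_is_bornrule}) is a monoid homomorphism, and Lemma~\ref{th:scalarequiv} yields, for any scalars $\gamma, \gamma'$ of $\mathcal{G}(\mathcal{D})$, the equivalence $\lambda_{\mathcal{G}(\mathcal{D})}(\gamma) = \lambda_{\mathcal{G}(\mathcal{D})}(\gamma') \iff \gamma \sim \gamma'$, where $\sim$ denotes probabilistic equivalence internal to $\mathcal{G}(\mathcal{D})$.

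The second ingredient I would invoke is the \emph{Stability under quotienting} theorem, which asserts that applying $\mathcal{Q}$ to $\mathcal{G}(\mathcal{D})$ returns $\mathcal{G}(\mathcal{D})$ unchanged. Restated, this says that any two probabilistically equivalent morphisms of $\mathcal{G}(\mathcal{D})$ are already equal; specialising to scalars, $\gamma \sim \gamma'$ forces $\gamma = \gamma'$. Injectivity then follows by chaining the two facts: if $\lambda_{\mathcal{G}(\mathcal{D})}(\gamma) = \lambda_{\mathcal{G}(\mathcal{D})}(\gamma')$ then Lemma~\ref{th:scalarequiv} gives $\gamma \sim \gamma'$, and stability gives $\gamma = \gamma'$. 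This is precisely the injectivity of $\lambda_{\mathcal{G}(\mathcal{D})}$, hence the generalised Born rule. If I wanted the sharper conclusion, I would note that exactly the argument of Theorem~\ref{thm:generalisedBornQC} now applies to the already-quotiented SPPT $\mathcal{G}(\mathcal{D})$, upgrading $\lambda_{\mathcal{G}(\mathcal{D})}$ to a monoid isomorphism with inverse $\theta_{\mathcal{G}(\mathcal{D})}$.

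I do not expect a substantive obstacle at this final step: all of the genuine work has already been discharged in showing that $\mathcal{G}(\mathcal{D})$ is an SPPT and that it is stable under quotienting. The only point demanding care is bookkeeping about \emph{which} theory each lemma is being applied to. Lemma~\ref{th:scalarequiv} and Theorem~\ref{thm:generalisedBornQC} must be read with $\mathcal{G}(\mathcal{D})$ playing the role of the ambient SPPT, so that the relation $\sim$ they refer to is the probabilistic equivalence \emph{inside} $\mathcal{G}(\mathcal{D})$ (the one the stability theorem collapses to equality), and not the equivalence $\cong$ used to build $\mathcal{G}(\mathcal{D})$ from the original data. Once this identification is made explicit, the proof is a one-line composition of the two cited theorems.
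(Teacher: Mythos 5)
Your proposal is correct and follows essentially the same route as the paper: the paper's proof likewise reduces to Theorem~\ref{thm:generalisedBornQC} by combining the facts that $\mathcal{G}(\mathcal{D})$ is an SPPT and that it is stable under the quotienting $\mathcal{Q}$, so that the injectivity (indeed monoid-isomorphism) argument for quotiented SPPTs applies verbatim. Your explicit unpacking via Lemma~\ref{th:scalarequiv} and the stability theorem, together with the remark about which theory's equivalence relation is in play, is just a more detailed rendering of the same argument.
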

\begin{proof}
Follows immediately by the results of Thm. \ref{thm:generalisedBornQC}.
\end{proof}

In summary, the quotienting procedure allows us to systematically construct a simplified probabilistic process theory which satisfies the generalised Born rule from any probabilistic process theory. 
%By quotienting over compositions of statistically equivalent tuples, we recover a symmetric monoidal category in which probabilities are computed via the generalised Born rule, and the correspondence between scalars and probabilities is captured by a monoid homomorphism. 
In other words, the operational content of a theory i.e., its observable statistics, can always be represented compositionally. 

  \section{Probabilities in Summed Probabilistic Process Theories}
  \label{sec:summedtheories}

  %Having established this structure for "pure" theories devoid of statistical redundancy, 
With the aim of strengthening the correspondence between scalars and probabilities and furthermore recovering the completely positive maps from textbook quantum theory, we now consider freely adding noise to probabilistic process theories. We do this by considering weighted sets of morphisms from the process theory $\mathcal{C}$ of an SPPT, thereby building a new process theory $\mathcal{S}(\mathcal{C})$. 
%In the next few definitions, we define the associated summed category $\mathcal{S}(\mathcal{C})$ and show that it is an SMC.

\begin{definition}[\bf Summed Category $\mathcal{S}(\mathcal{C})$]
For the SMC $\mathcal{C}$ of any simplified probabilistic process theory we can construct a new category $\mathcal{S}(\mathcal{C})$ with the same objects as $\mathcal{C}$. Its morphisms are finite sets of ordered pairs with morphisms from $\mathcal{C}$, and non-zero weights from $\mathbb{R}_{> 0}$, i.e., 
\[ \mathcal{S}(\mathcal{C})(A,B) := \left\{ (\mathfrak{f}_i, w_i)_{i \in J} \mid \mathfrak{f}_i \in \mathcal{C}(A,B), w_i \in \mathbb{R}_{>0} \right\}, \]
where $J$ is a finite index set. 
\end{definition}

  \begin{definition}[\bf Sequential composition in $\mathcal{S}(\mathcal{C})$] Given two morphisms $s_1 = \{ (\mathfrak{f}_i, w_i) \}_{sc} \in \mathcal{S}(\mathcal{C})(B,C)$ and $s_2 = \{ (\mathfrak{g}_j, v_j) \}_{sc} \in \mathcal{S}(\mathcal{C})(A,B)$ in $\mathcal{S}(\mathcal{C})$. We define $s_1 \circ_{sc} s_2 = \{ (\mathfrak{f}_i \circ \mathfrak{g}_j, w_i v_j) ~| ~\forall ~ i,j\}_{sc}$.
  \end{definition}

  \begin{definition}[\bf Parallel composition in $\mathcal{S}(\mathcal{C})$] Given two morphisms $s_1 = \{ (\mathfrak{f}_i, w_i) \}_{sc} \in \mathcal{S}(\mathcal{C})(A,B)$ and $s_2 = \{ (\mathfrak{g}_j, v_j) \}_{sc} \in \mathcal{S}(\mathcal{C})(A,B)$ in $\mathcal{S}(\mathcal{C})$. We define $s_1 \otimes_{sc} s_2 = \{ (\mathfrak{f}_i \otimes \mathfrak{g}_j, w_i v_j) ~| ~\forall ~ i,j\}_{sc}$.
  \end{definition}

  \begin{definition}
  We define identity morphism $1_{A_{sc}}$ for an object $A$ as the set of pair $1_{A_{sc}} = \{(1_{A}, 1)\}_{sc}$.
  \end{definition}

  \begin{theorem}
    $\mathcal{S}(\mathcal{C})$ is a symmetric monoidal category.
  \label{th:SCisSMC}
  \end{theorem}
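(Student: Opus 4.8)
The plan is to build every piece of symmetric monoidal structure on $\mathcal{S}(\mathcal{C})$ by transporting the corresponding structure from $\mathcal{C}$, exploiting the fact that both $\circ_{sc}$ and $\otimes_{sc}$ act completely independently on the two coordinates of each pair: the $\mathcal{C}$-component is governed by the SMC structure of $\mathcal{C}$, while the weight is governed by the commutative monoid $(\mathbb{R}_{>0}, \cdot, 1)$. First I would check that $\mathcal{S}(\mathcal{C})$ is a category. Associativity of $\circ_{sc}$ holds because $(s_1 \circ_{sc} s_2) \circ_{sc} s_3$ and $s_1 \circ_{sc} (s_2 \circ_{sc} s_3)$ both produce the set of pairs $\{(\mathfrak{f}_i \circ \mathfrak{g}_j \circ \mathfrak{h}_k, w_i v_j u_k)\}_{i,j,k}$, using associativity of $\circ$ in $\mathcal{C}$ and of multiplication in $\mathbb{R}_{>0}$. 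The unit laws $1_{B_{sc}} \circ_{sc} s = s = s \circ_{sc} 1_{A_{sc}}$ follow because composing with the singleton $\{(1,1)\}$ leaves each pair unchanged.

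Next I would promote $\otimes_{sc}$ to a bifunctor. Preservation of identities is immediate from $1_A \otimes 1_B = 1_{A\otimes B}$ in $\mathcal{C}$, giving $1_{A_{sc}} \otimes_{sc} 1_{B_{sc}} = \{(1_{A\otimes B}, 1)\} = 1_{(A\otimes B)_{sc}}$, and the unit object of $\mathcal{S}(\mathcal{C})$ is the object $I$ with unit morphism $\{(1_I,1)\}$. The interchange law $(s_1 \otimes_{sc} s_2)\circ_{sc}(t_1 \otimes_{sc} t_2) = (s_1 \circ_{sc} t_1)\otimes_{sc}(s_2 \circ_{sc} t_2)$ follows pairwise: each side yields the family $\{((\mathfrak{f}_i \circ \mathfrak{p}_k)\otimes(\mathfrak{g}_j \circ \mathfrak{q}_l), w_i a_k v_j b_l)\}$ after applying the interchange law of $\mathcal{C}$ and commuting the scalar weights.

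I would then define the structural isomorphisms of $\mathcal{S}(\mathcal{C})$ as the weight-one singleton lifts of those of $\mathcal{C}$: $\alpha^{sc} := \{(\alpha_{A,B,C}, 1)\}$, $\beta^{l,sc} := \{(\beta^l_A, 1)\}$, $\beta^{r,sc} := \{(\beta^r_A, 1)\}$, and $\sigma^{sc} := \{(\sigma_{A,B}, 1)\}$. Each is invertible, with inverse the lift of the corresponding inverse in $\mathcal{C}$, since $\{(g,1)\}\circ_{sc}\{(g^{-1},1)\} = \{(1,1)\}$. Naturality of each lifted transformation against arbitrary $\mathcal{S}(\mathcal{C})$-morphisms reduces, pair by pair, to naturality of the corresponding transformation in $\mathcal{C}$, the weight-one factor being inert. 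Finally the pentagon, triangle, and hexagon coherence equations, together with $\sigma^{sc} \circ_{sc} \sigma^{sc} = 1$, hold because every morphism involved is a weight-one singleton, so each equation collapses to the identical coherence equation already satisfied in $\mathcal{C}$.

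The main point requiring care — and the one I would state explicitly — is the set-theoretic nature of morphisms: since a morphism is a \emph{set} of pairs, the operations $\circ_{sc}$ and $\otimes_{sc}$ can identify two a priori distinct index combinations whenever they yield the same pair $(\mathfrak{h}, w)$. This is harmless for the verifications above because every axiom is an equality of morphisms, and for each axiom the two sides generate exactly the same indexed family of pairs (the underlying $\mathcal{C}$-equation makes the morphism components agree, and commutativity of $\mathbb{R}_{>0}$ makes the weights agree); hence any collapse of coincident pairs occurs identically on both sides, and the resulting sets are equal. Thus no multiset bookkeeping is needed, and all the monoidal and symmetry axioms hold on the nose.
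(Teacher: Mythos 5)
Your proof is correct and takes essentially the same route as the paper's, which likewise transports the structure coordinate-wise — the paper verifies only the interchange law explicitly and then appeals to associativity of the cartesian product, associativity of multiplication in $\mathbb{R}_{>0}$, and the SMC axioms of $\mathcal{C}$ for everything else, all of which you spell out in full. Your closing remark about coincident pairs collapsing in the set-of-pairs representation is a subtlety the paper does not address, and your argument that this is harmless for the axioms (each being an equality of sets generated by identical indexed families on both sides) is sound.
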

  \begin{proof}
    Here we provide the proof for one of the identities, the rest follow similarly. Given morphisms $\mathfrak{g}_1 \in \mathcal{S}(\mathcal{C})(B,C), \mathfrak{g}_2 \in \mathcal{S}(\mathcal{C})(E, F), \mathfrak{f}_1 \in \mathcal{S}(\mathcal{C})(A, B),$ and $\mathfrak{f}_2 \in \mathcal{S}(\mathcal{C})(D, E)$, we prove the interchange law:
    \begin{eqnarray}
      &&(\mathfrak{g}_1 \otimes_{sc} \mathfrak{g}_2) \circ_{sc} (\mathfrak{f}_1 \otimes_{sc} \mathfrak{f}_2) \notag\\
      &=& \{((g_1 \otimes g_2) \circ (f_1 \otimes f_2), w_1 w_2 v_1 v_2) | (g_1, w_1) \in \mathfrak{g}_1, (g_2, w_2) \in \mathfrak{g}_2, (f_1, v_1) \in \mathfrak{f}_1, (f_2, v_2) \in \mathfrak{f}_2 \}_{sc} \notag\\
      &=& \{((g_1 \circ f_1) \otimes (g_2 \circ f_2), w_1 w_2 v_1 v_2) | (g_1, w_1) \in \mathfrak{g}_1, (g_2, w_2) \in \mathfrak{g}_2, (f_1, v_1) \in \mathfrak{f}_1, (f_2, v_2) \in \mathfrak{f}_2 \}_{sc} \notag\\
      &=& (\mathfrak{g}_1 \circ_{sc} \mathfrak{f}_1) \otimes_{sc} (\mathfrak{g}_2 \circ_{sc} \mathfrak{f}_2)
    \end{eqnarray}
    
    In general, because of associativity of cartesian product, associativity of product in $\mathbb{R}_{> 0}$ and the fact that $\mathcal{C}$ is a symmetric monoidal category, we have that $\mathcal{S}(\mathcal{C})$ is a symmetric monoidal category.
  \end{proof}

Considering our main examples of interest, morphisms in summed category of $(\mathbf{FHilb}, |\langle -|- \rangle|^2)$ are simply the weighted sets of complex linear maps between finite-dimensional Hilbert spaces. Similarly, morphisms in the summed category of quotiented normalised quantum theory $(\mathcal{K}, \Tr[-\circ-])$ are weighted set of Kraus operators.

The definition of morphisms as finite weighted sets naturally allows for the inclusion of the empty set, which we identify as the zero morphism \cite{Pareigis1970Jan}.
  \begin{definition}[\bf Zero morphism in $\mathcal{S}(\mathcal{C})$] zero morphism $0_{sc}$ in $\mathcal{S}(\mathcal{C})$ is defined as the empty set, denoted $0_{sc} = \{\}$. A unique zero morphism exists for each pair of objects, $0_{sc} = \{\} \in \mathcal{S}(\mathcal{C})(A,B)$ for all $A,B$\footnote{We use the same notation for all zero morphisms}.
  \end{definition}

  \begin{lemma}[\bf Properties of the zero morphism] The zero morphism satisfies the expected properties:
    \begin{itemize}
    \item $0_{sc} \otimes_{sc} f = 0_{sc},  f \otimes_{sc} 0_{sc}  = 0_{sc}~~\forall ~ f \in \mathcal{S}(\mathcal{C})(A,B)$.
    \item $0_{sc} \circ_{sc} f = 0_{sc}, f \circ_{sc} 0_{sc}  = 0_{sc} ~~ \forall ~f \in \mathcal{S}(\mathcal{C})(A,B)$.
  \end{itemize}
  \label{lemma:zeroMorproperties}
  \end{lemma}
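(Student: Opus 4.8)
The plan is to unfold the two definitions of composition in $\mathcal{S}(\mathcal{C})$ and observe that each builds a finite weighted set indexed by \emph{pairs} of elements, one drawn from each argument. Writing $s_1 = \{(\mathfrak{f}_i, w_i)\}_{i \in J_1}$ and $s_2 = \{(\mathfrak{g}_j, v_j)\}_{j \in J_2}$, both operations range over the product index set $J_1 \times J_2$: sequential composition yields $\{(\mathfrak{f}_i \circ \mathfrak{g}_j,\, w_i v_j) \mid (i,j) \in J_1 \times J_2\}_{sc}$, and parallel composition yields the same expression with $\circ$ replaced by $\otimes$. So everything reduces to a statement about this product index set.

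First I would note that the zero morphism $0_{sc} = \{\}$ is precisely the weighted set whose index set is empty, $J = \emptyset$. Then, for any $f \in \mathcal{S}(\mathcal{C})(A,B)$ with index set $J_f$, both $\emptyset \times J_f$ and $J_f \times \emptyset$ are empty. Consequently each of the four composites $0_{sc} \circ_{sc} f$, $f \circ_{sc} 0_{sc}$, $0_{sc} \otimes_{sc} f$, and $f \otimes_{sc} 0_{sc}$ is, by definition, a set indexed over an empty product and hence contains no ordered pairs; it is the empty set.

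Finally I would conclude that each composite equals $\{\} = 0_{sc}$ in the appropriate hom-set, invoking that the empty set is the unique zero morphism of $\mathcal{S}(\mathcal{C})(A,B)$ for every pair of objects. There is no genuine obstacle here: the whole content is that composition is defined element-wise over a product index set, which is annihilated whenever one factor is empty. The only point worth stating explicitly is that all four cases follow by the identical argument, differing only in whether the empty factor sits on the left or the right and whether the operation in question is $\circ$ or $\otimes$; I would present one case in full and remark that the remaining three are verbatim.
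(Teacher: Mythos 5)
Your argument is correct and is essentially the paper's own proof: the paper likewise observes that since the zero morphism contributes no elements, the element-wise composition over pairs produces no pairs and hence yields the empty set, i.e.\ the zero morphism. Your version merely makes the product-index-set bookkeeping explicit, which adds nothing beyond presentation.
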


  \begin{proof}
    In any of the compositions, there are precisely zero elements to be selected from the zero morphism. Hence there are zero pairs inside the composition, creating an empty set, i.e., a zero morphism. 
  \end{proof}

A zero morphism in an SPPT is expected to have zero probability. We show that this is indeed the case due to axioms \textbf{I}, \textbf{II}, and \textbf{III}.

  \begin{theorem}
    Given a probabilistic process theory $(\mathcal{C}, \{P^A\})$ with a unique zero morphism $\mathbf{0}$ for each pair of objects $(A, B)$ satisfying $\mathbf{0} \otimes f = \mathbf{0},  f \otimes \mathbf{0}  = \mathbf{0}~~\forall ~ f \in \mathcal{C}(A,B)$. The probability $P(1_I, \mathbf{0})$ evaluates to zero.
    \label{th:SCzeromorphism}
  \end{theorem}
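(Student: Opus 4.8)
The plan is to reduce everything to the monoid homomorphism $\lambda$ supplied by Theorem \ref{th:Cmonoidalhomomorphism}. Since $P(1_I, \mathbf{0}) = \lambda(\mathbf{0})$ by the definition of $\lambda$ (where $\mathbf{0}$ denotes the zero morphism of $\mathcal{C}(I,I)$), it suffices to prove $\lambda(\mathbf{0}) = 0$. The structure of the argument will mirror that of Lemma \ref{lemma:probabilityunit}: I will first constrain the possible values of $\lambda(\mathbf{0})$, and then invoke the non-triviality axiom \textbf{(III)} to discard the value $1$.

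For the core step I would use only the tensor-absorbing hypothesis, since composition-absorption is not assumed in this statement. Taking $f = \mathbf{0}$ gives $\mathbf{0} \otimes \mathbf{0} = \mathbf{0}$, so applying the homomorphism $\lambda$ yields $\lambda(\mathbf{0}) = \lambda(\mathbf{0})^2$ and hence $\lambda(\mathbf{0}) \in \{0,1\}$. To rule out $1$, take an arbitrary scalar $\gamma \in \mathcal{C}(I,I)$; the hypothesis gives $\mathbf{0} \otimes \gamma = \mathbf{0}$, and applying $\lambda$ gives $\lambda(\mathbf{0}) = \lambda(\mathbf{0})\,\lambda(\gamma)$. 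If $\lambda(\mathbf{0})$ were $1$ this would force $\lambda(\gamma) = 1$ for every scalar $\gamma$; but axiom \textbf{(III)} provides a pair $(\rho',\sigma')$ with $\lambda(\sigma'\circ\rho') = P(\rho',\sigma') \neq 1$, a contradiction. Therefore $\lambda(\mathbf{0}) = 0$, i.e.\ $P(1_I,\mathbf{0}) = 0$. In fact the last step alone suffices: for the particular $\gamma = \sigma'\circ\rho'$ with $\lambda(\gamma)\neq 1$, the relation $\lambda(\mathbf{0})(1 - \lambda(\gamma)) = 0$ already forces $\lambda(\mathbf{0}) = 0$, so the idempotency observation is an optional parallel to Lemma \ref{lemma:probabilityunit}.

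The only point requiring care is unit bookkeeping: one must check that $\mathbf{0}\otimes\gamma$, a priori a morphism $I\otimes I \to I\otimes I$, really is the zero scalar in $\mathcal{C}(I,I)$. This is immediate once we pass to a strict symmetric monoidal category (as justified after the definition of SMC via Mac Lane's strictification, so that $I\otimes I = I$) together with the stated uniqueness of the zero morphism on each pair of objects. Beyond this, the proof is a direct application of the homomorphism property of $\lambda$ and axiom \textbf{(III)}, so I anticipate no substantial obstacle.
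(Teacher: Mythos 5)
Your proof is correct and is essentially the paper's argument: both hinge on $\mathbf{0}\otimes\gamma=\mathbf{0}$ for a scalar $\gamma=\sigma'\circ\rho'$ with $P(\rho',\sigma')\neq 1$ supplied by axiom \textbf{(III)}, yielding $P(1_I,\mathbf{0})=P(1_I,\mathbf{0})\cdot P(\rho',\sigma')$ and hence $P(1_I,\mathbf{0})=0$. The only difference is cosmetic: you route the multiplicativity through the already-established monoid homomorphism $\lambda$ of Theorem~\ref{th:Cmonoidalhomomorphism}, whereas the paper unwinds axioms \textbf{(I)} and \textbf{(II)} directly, and your idempotency step is, as you note, dispensable.
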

  \begin{proof}
    By axiom \textbf{(III)}, let us assume that there exists a state-effect pair $(\rho, \sigma)$ such that $P(\rho, \sigma) \neq 1$. We consider
    \begin{eqnarray}
      P(1_I, \mathbf{0}) 
      &\overset{1_I \otimes 1_I = 1_I, \, \mathbf{0} \otimes \gamma = \mathbf{0}}{=}& P(1_{I} \otimes 1_I, \mathbf{0} \otimes (\sigma \circ \rho)) \notag\\
      &\overset{\bf (I)}{=}& P(1_{I} \otimes \rho, \mathbf{0} \otimes \sigma)  \notag\\
      &\overset{\bf (II)}{=}&  P(1_{I}, \mathbf{0}) P(\rho, \sigma) \notag\\
      &\implies& [1-P(\rho, \sigma)] P(1_{I}, \mathbf{0}) = 0 
      \end{eqnarray}
  Since $P(\rho, \sigma) \neq 1$, we have $[1-P(\rho, \sigma)] P(1_{I}, \mathbf{0}) = 0  \implies  P(1_{I}, \mathbf{0}) = 0$.
  \end{proof}

With the summed category $\mathcal{S}(\mathcal{C})$ defined, we now proceed to define the probability function for it. Much like what we did for the quotiented theory, we define the probability function for the summed category $\mathcal{S}(\mathcal{C})$ by lifting it from the probability function of an SPPT.
% of a quotiented category.

  \begin{definition}[\bf Probabilities in $\mathcal{S}(\mathcal{C})$] Let $\rho_{sc}$ and $\sigma_{sc}$ be the states and effects in the summed category $\mathcal{S}(\mathcal{C})$. We define the family of probability functions $P^A_{\mathcal{S}(\mathcal{C})}:\mathcal{S}(\mathcal{C})(I, A) \times \mathcal{S}(\mathcal{C})(A, I) \rightarrow \mathbb{R}_{\geq 0}$ as follows: 
    \[
    P^A_{\mathcal{S}(\mathcal{C})}\left(\tikzfig{staterhoSC}, \tikzfig{effectsigmaSC}\right) = \sum_{i,j} w_i v_j P_{\mathcal{C}}\left(\tikzfig{staterhoQCi}, \tikzfig{effectsigmaQCj}\right) \quad \text{ for }  \left(\tikzfig{staterhoQCi}, w_i\right) \in \tikzfig{staterhoSC} \text{ and } \left(\tikzfig{effectsigmaQCj}, v_j\right) \in \tikzfig{effectsigmaSC},  
    \]
    \[P^A_{\mathcal{S}(\mathcal{C})}\left(\rho, 0_{sc}\right) = P^A_{\mathcal{S}(\mathcal{C})}\left(0_{sc}, \sigma\right) = 0 ~~\forall \rho, \sigma.\]
  \end{definition}

These weighted sets serve two distinct roles. Intuitively the weighted sets for morphisms in the summed category correspond to imbuing classical noise in an SPPT, where the weights can be thought of as probabilities of the occurrence of those morphisms. Thus, the summed category $\mathcal{S}(\mathcal{C})$ is a noisy version of $\mathcal{C}$. At the same time, weighted sets of scalars in the summed category yield sums of probabilities and therefore an additive structure in the theory.

We now show that the family of probability functions $P^A_{\mathcal{S}(\mathcal{C})}$ satisfy the axioms of a simplified probabilistic process theory.

  \begin{theorem}
    The probability function $P^A_{\mathcal{S}(\mathcal{C})}$ satisfies the axioms \textbf{(I)}, \textbf{(II)}, and \textbf{(III)} in the category $\mathcal{S}(\mathcal{C})$.
  \end{theorem}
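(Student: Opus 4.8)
The plan is to verify each of the three axioms for $P^A_{\mathcal{S}(\mathcal{C})}$ by lifting the corresponding property from the underlying SPPT $(\mathcal{C}, \{P^A\})$ and exploiting the fact that $P^A_{\mathcal{S}(\mathcal{C})}$ is bilinear in its two weighted-set arguments. Throughout I would write a state as the indexed family $\rho_{sc} = \{(\rho_i, w_i)\}_i$, an effect as $\sigma_{sc} = \{(\sigma_j, v_j)\}_j$, and a morphism as $f_{sc} = \{(f_k, u_k)\}_k$, so that by definition $P^A_{\mathcal{S}(\mathcal{C})}(\rho_{sc}, \sigma_{sc}) = \sum_{i,j} w_i v_j\, P_{\mathcal{C}}(\rho_i, \sigma_j)$. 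The empty-set (zero-morphism) cases are then handled automatically, since each relevant sum degenerates to an empty sum, consistent with the convention $P^A_{\mathcal{S}(\mathcal{C})}(\rho, 0_{sc}) = P^A_{\mathcal{S}(\mathcal{C})}(0_{sc}, \sigma) = 0$.

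For axiom \textbf{(I)} I would expand both sides using sequential composition in $\mathcal{S}(\mathcal{C})$. On the left, $f_{sc} \circ_{sc} \rho_{sc} = \{(f_k \circ \rho_i, u_k w_i)\}$, so the probability is the triple sum $\sum_{i,j,k} u_k w_i v_j\, P_{\mathcal{C}}(f_k \circ \rho_i, \sigma_j)$; applying axiom \textbf{(I)} of $\mathcal{C}$ termwise rewrites each summand as $P_{\mathcal{C}}(\rho_i, \sigma_j \circ f_k)$, and recognising $\sigma_{sc} \circ_{sc} f_{sc} = \{(\sigma_j \circ f_k, v_j u_k)\}$ shows the total equals $P^A_{\mathcal{S}(\mathcal{C})}(\rho_{sc}, \sigma_{sc} \circ_{sc} f_{sc})$. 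Axiom \textbf{(II)} follows the same pattern with parallel composition: expanding the two tensored states and the two tensored effects yields a quadruple sum over weight products times $P_{\mathcal{C}}(\rho_i \otimes \rho'_k, \sigma_j \otimes \sigma'_l)$, each of which factorises as $P_{\mathcal{C}}(\rho_i, \sigma_j)\, P_{\mathcal{C}}(\rho'_k, \sigma'_l)$ by axiom \textbf{(II)} of $\mathcal{C}$; since the resulting sum separates into a product of a sum over $(i,j)$ and a sum over $(k,l)$, this is exactly the product of the two marginal probabilities in $\mathcal{S}(\mathcal{C})$.

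For axiom \textbf{(III)} I would simply lift the non-trivial pairs guaranteed in $\mathcal{C}$: taking $(\rho, \sigma)$ with $P_{\mathcal{C}}(\rho, \sigma) \neq 0$ and forming the singletons $\rho_{sc} = \{(\rho, 1)\}$, $\sigma_{sc} = \{(\sigma, 1)\}$ gives $P^A_{\mathcal{S}(\mathcal{C})}(\rho_{sc}, \sigma_{sc}) = P_{\mathcal{C}}(\rho, \sigma) \neq 0$, and a singleton lift of a pair with $P_{\mathcal{C}}(\rho', \sigma') \neq 1$ supplies the second witness (alternatively, pairing anything with $0_{sc}$ gives probability $0 \neq 1$). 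The only point demanding genuine care — and the main obstacle — is the index bookkeeping: because morphisms are literally sets, distinct index pairs could in principle collapse to a single element of a composite set, so I would be explicit that all sums range over the indexing families exactly as written in the definition of $P^A_{\mathcal{S}(\mathcal{C})}$ (i.e.\ the weighted sets are treated as finite indexed families), which keeps the reindexings in \textbf{(I)} and \textbf{(II)} legitimate; everything else reduces to routine rearrangement of finite sums using the bilinearity already built into the definition.
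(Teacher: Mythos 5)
Your proposal is correct and follows essentially the same route as the paper: expand each side of \textbf{(I)} and \textbf{(II)} into the corresponding multi-index sum over representatives, apply the underlying axiom of $(\mathcal{C},\{P^A\})$ termwise, and witness \textbf{(III)} with singleton weighted sets $\{(\rho,1)\}_{sc}$, $\{(\sigma,1)\}_{sc}$. Your added caution about treating the weighted sets as indexed families (so that coincident composites do not collapse and break the reindexing) is a legitimate subtlety that the paper's proof passes over silently, but it does not change the argument.
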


  \begin{proof}
    \textbf{(I)}: Let $\rho_{sc} \in \mathcal{S}(\mathcal{C})(I,A)$, $\sigma_{sc} \in \mathcal{S}(\mathcal{C})(B,I)$ and $f_{sc} \in \mathcal{S}(\mathcal{C})(A,B)$ be the states, effects and morphisms in the summed category $\mathcal{S}(\mathcal{C})$ and let $(\rho_i, w_i) \in \rho_{sc}, (\sigma_j, v_j) \in \sigma_{sc}$ and $(f_k, u_k) \in f_{sc}$. Then
    \begin{eqnarray}
    P^B_{\mathcal{S}(\mathcal{C})}\left(\tikzfig{statefcircrhodoubleSC},\tikzfig{effectsigmadoublesc}\right) &=& P^B_{\mathcal{S}(\mathcal{C})}\left(\left\{\left(\tikzfig{statefkcircrhoidoubleQC}, u_k w_i\right)\right\}_{sc}, \left\{\left(\tikzfig{effectsigmadoubleqcj}, v_j\right)\right\}_{sc}\right)\notag\\
    &\overset{\text{Def.}}{=}& \sum_{k,i,j} u_k w_i v_j P^B \left(\tikzfig{statefkcircrhoidoubleQC}, \tikzfig{effectsigmadoubleqcj}\right) \notag\\
      &\overset{Def. \ref{def:physicaltheory}}{=}& \sum_{k,i,j} u_k w_i v_j P^B\left(\tikzfig{staterhodoubleqci}, \tikzfig{effectsigmajcircfkdoubleQC} \right) \notag\\
      &=& P^B_{\mathcal{S}(\mathcal{C})}\left(\tikzfig{staterhodoublesc},\tikzfig{effectsigmacircfdoubleSC}\right)
    \end{eqnarray}
    \textbf{(II)}:  Let $\rho_{sc} \in \mathcal{S}(\mathcal{C})(I,A), \rho'_{sc} \in \mathcal{S}(\mathcal{C})(I,B)$ and  $\sigma_{sc} \in \mathcal{S}(\mathcal{C})(A,I), \sigma'_{sc} \in \mathcal{S}(\mathcal{C})(B,I)$ be states and effects in the summed category $\mathcal{S}(\mathcal{C})$ and let $(\rho_i, w_i) \in \rho_{sc}, (\rho'_j, w'_j) \in \rho'_{sc}, (\sigma_k, u_k) \in \sigma_{sc}, (\sigma'_l, u'_l) \in \sigma'_{sc}$. Then
    \begin{eqnarray}
    P^{A \otimes B}_{\mathcal{S}(\mathcal{C})}\left(\tikzfig{staterhoSC} \tikzfig{staterhoprimeSC},\tikzfig{effectsigmaSC} \tikzfig{effectsigmaprimeSC}\right) &=& P^{A \otimes B}_{\mathcal{S}(\mathcal{C})}\left(\left\{\bigg(\tikzfig{staterhoQCi} \tikzfig{staterhoQCprimej}, w_i w'_j\bigg)\right\}_{sc}, \left\{\bigg(\tikzfig{effectsigmaQCk} \tikzfig{effectsigmaQCprimel}, u_k u'_l\bigg)\right\}_{sc}\right)\notag\\
    &\overset{\text{Def.}}{=}& \sum_{i,j,k,l} w_i w'_j u_k u'_l P^{A \otimes B}_{}\left(\tikzfig{staterhoQCi} \tikzfig{staterhoQCprimej}, \tikzfig{effectsigmaQCk} \tikzfig{effectsigmaQCprimel}\right) \notag\\
    &\overset{Def. \ref{def:physicaltheory}}{=}& \sum_{i,j,k,l} w_i w'_j u_k u'_l P^{A }_{}\left(\tikzfig{staterhoQCi}, \tikzfig{effectsigmaQCk} \right) \cdot P^{ B}_{}\left(\tikzfig{staterhoQCprimej}, \tikzfig{effectsigmaQCprimel}\right)  \notag\\
      &=&  P^{A}_{\mathcal{S}(\mathcal{C})}\left(\tikzfig{staterhoSC},\tikzfig{effectsigmaSC} \right) \cdot P^{B}_{\mathcal{S}(\mathcal{C})}\left(\tikzfig{staterhoprimeSC},\tikzfig{effectsigmaprimeSC} \right)
    \end{eqnarray}
    \textbf{(III)}: Since, there exists a pair $(\rho, \sigma)$ such that $P_{}(\rho, \sigma) \neq 0$, we can choose $\rho_{sc} \in \mathcal{S}(\mathcal{C})(I,A)$ and $\sigma_{sc} \in \mathcal{S}(\mathcal{C})(B,I)$ be  $\{(\rho, 1)\}_{sc}$ and $\{(\sigma, 1)\}_{sc}$ respectively. Then, we have $P_{\mathcal{S}(\mathcal{C})}(\rho_{sc}, \sigma_{sc}) \neq 0.$ Similarly, there exists a pair $(\rho', \sigma')$ such that $P_{}(\rho', \sigma') \neq 1$, we can choose $\rho'_{sc} \in \mathcal{S}(\mathcal{C})(I,A)$ and $\sigma'_{sc} \in \mathcal{S}(\mathcal{C})(B,I)$ be  $\{(\rho', 1)\}_{sc}$ and $\{(\sigma', 1)\}_{sc}$ respectively. So that $P_{\mathcal{S}(\mathcal{C})}(\rho'_{sc}, \sigma'_{sc}) \neq 1.$
  \end{proof}

  With the above theorem, we have shown that the summed category $\mathcal{S}(\mathcal{C})$ along with the family of probability functions $P^A_{\mathcal{S}(\mathcal{C})},$ which we term as a summed probabilistic process theory satisfies the axioms of an SPPT.

  \begin{corollary}
    $\left(\mathcal{S}(\mathcal{C}), \{P^A_{\mathcal{S}(\mathcal{C})}\}\right)$ is an SPPT.
    \label{cor:SCvalidtheory}
  \end{corollary}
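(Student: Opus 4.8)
The plan is to assemble the corollary directly from the two results established immediately above it. Recall that Definition~\ref{def:physicaltheory} requires exactly three ingredients for a simplified probabilistic process theory: a symmetric monoidal category, the identification of states and effects with the hom-sets into and out of the monoidal unit, and a family of probability functions on these hom-sets satisfying axioms \textbf{(I)}, \textbf{(II)}, and \textbf{(III)}. My strategy is therefore to confirm that each ingredient is already in hand.

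First I would invoke Theorem~\ref{th:SCisSMC}, which establishes that $\mathcal{S}(\mathcal{C})$ is a symmetric monoidal category. Next, I note that by construction the states and effects of $\mathcal{S}(\mathcal{C})$ are precisely the weighted-set morphisms $\mathcal{S}(\mathcal{C})(I,A)$ and $\mathcal{S}(\mathcal{C})(A,I)$, so the domain of each $P^A_{\mathcal{S}(\mathcal{C})}$ matches the form $\mathcal{S}^A \times \mathcal{E}^A$ demanded by the definition. Finally, the theorem immediately preceding the corollary establishes that the family $\{P^A_{\mathcal{S}(\mathcal{C})}\}$ satisfies all three axioms. Combining these three facts yields the corollary at once through Definition~\ref{def:physicaltheory}.

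The only point requiring any care is the treatment of the zero morphism when it appears as a state or effect, since the axiom verification in the preceding theorem was written for weighted-set representatives of positive weight. For this boundary case I would observe that the definition sets $P^A_{\mathcal{S}(\mathcal{C})}(\rho, 0_{sc}) = P^A_{\mathcal{S}(\mathcal{C})}(0_{sc}, \sigma) = 0$, while Lemma~\ref{lemma:zeroMorproperties} ensures that $0_{sc}$ is absorbing under both sequential and parallel composition; these two facts together force both sides of axioms \textbf{(I)} and \textbf{(II)} to vanish whenever a zero morphism is involved, so the axioms hold trivially there. Accordingly there is no genuine obstacle: the corollary is a bookkeeping consequence of the structural theorem and the axiom-verification theorem, with the zero-morphism case amounting to a one-line confirmation.
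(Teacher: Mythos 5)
Your proposal is correct and follows essentially the same route as the paper, which treats the corollary as an immediate consequence of Theorem~\ref{th:SCisSMC} (the SMC structure) together with the preceding theorem verifying axioms \textbf{(I)}--\textbf{(III)} for $P^A_{\mathcal{S}(\mathcal{C})}$. Your additional check of the zero-morphism boundary case is a sensible piece of diligence that the paper leaves implicit, but it does not change the substance of the argument.
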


  Much like an SPPT, the summed category $\mathcal{S}(\mathcal{C})$ also has a monoid homomorphism from its scalars to probabilities. However this time around the homomorphism is a bit different. The scalars in the summed category $\mathcal{S}(\mathcal{C})$ are weighted sets of scalars from the category $\mathcal{C}$ and the probabilities in the summed category $\mathcal{S}(\mathcal{C})$ are sums of probabilities from $\mathcal{C}$. Thus, the monoid homomorphism from scalars to probabilities in the summed category $\mathcal{S}(\mathcal{C})$ also has a sum structure to preserve. We show that this is indeed the case. It is this sum structure, introduced by classical noise, which helps us further confine the possible monoid homomorphisms from scalars to probabilities.

  \begin{definition}
    We define $\lambda_{\mathcal{S}(\mathcal{C})}:\mathcal{S}(\mathcal{C})(I,I) \rightarrow \mathbb{R}_{\geq 0}$ as follows:
    \[
    \lambda_{\mathcal{S}(\mathcal{C})}(\gamma_{sc}) = P_{\mathcal{S}(\mathcal{C})}(1_I, \gamma_{sc}) = \sum_{i} w_i P(1_I, \gamma_i) \text{ for } (\gamma_i, w_i) \in \gamma_{sc}.
    \]
  \end{definition}

An alternative way to define $\lambda_{\mathcal{S}(\mathcal{C})}$ is by lifting the monoid homomorphism $\lambda$ from the category $\mathcal{C}$ to the summed category $\mathcal{S}(\mathcal{C})$. This is done by defining $\lambda_{\mathcal{S}(\mathcal{C})}$ as $\lambda_{\mathcal{S}(\mathcal{C})}(\{(a_i,\alpha_i)\}) = \sum_{i} \alpha_i \lambda(a_i)$. Both definitions are equivalent by  probabilities in the summed category $\mathcal{S}(\mathcal{C})$.

Since a summed theory is an SPPT, we have the familiar monoid homomorphism from scalars to probabilities.

  \begin{theorem}
    There exists a monoid homomorphism $\lambda_{\mathcal{S}(\mathcal{C})}$ from $\mathcal{S}(\mathcal{C})(I, I)$ to $(\mathbb{R}_{\geq 0}, 1, \times)$.
    \label{th:SCmonoidalhomo}
  \end{theorem}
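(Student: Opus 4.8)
The plan is to avoid re-deriving anything from scratch and instead to recognise that this statement is an immediate consequence of the machinery already set up for arbitrary SPPTs. First I would invoke Corollary \ref{cor:SCvalidtheory}, which has already established that $\left(\mathcal{S}(\mathcal{C}), \{P^A_{\mathcal{S}(\mathcal{C})}\}\right)$ is itself a simplified probabilistic process theory; in particular its probability functions satisfy the three axioms \textbf{(I)}, \textbf{(II)}, and \textbf{(III)}. Since Theorem \ref{th:Cmonoidalhomomorphism} is stated for \emph{any} SPPT, it applies verbatim with $\mathcal{S}(\mathcal{C})$ in place of $\mathcal{C}$, producing a monoid homomorphism from $\big(\mathcal{S}(\mathcal{C})(I,I), 1_{I_{sc}}, \otimes_{sc}\big)$ into $\big(\mathrm{Range}(P_{\mathcal{S}(\mathcal{C})}) \subseteq \mathbb{R}_{\geq 0}, 1, \times\big)$, and hence into $(\mathbb{R}_{\geq 0}, 1, \times)$.

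The one thing that must be checked is that the homomorphism delivered abstractly by Theorem \ref{th:Cmonoidalhomomorphism} coincides with the map $\lambda_{\mathcal{S}(\mathcal{C})}$ defined just above it. This is purely a matter of matching definitions: the $\lambda$ of Lemma \ref{lemma:lambda} sends a scalar to its probability paired with the unit, $\lambda(\gamma) = P(1_I, \gamma)$, while by definition $\lambda_{\mathcal{S}(\mathcal{C})}(\gamma_{sc}) = P_{\mathcal{S}(\mathcal{C})}(1_{I_{sc}}, \gamma_{sc})$, so the two are the same map by construction. For completeness I would then spell out the two homomorphism conditions directly in the summed setting. Identity preservation follows since $1_{I_{sc}} = \{(1_I, 1)\}_{sc}$, whence $\lambda_{\mathcal{S}(\mathcal{C})}(1_{I_{sc}}) = 1 \cdot P(1_I, 1_I) = 1$ by Lemma \ref{lemma:probabilityunit}. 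Multiplicativity follows from axiom \textbf{(II)} for $P_{\mathcal{S}(\mathcal{C})}$ (already verified), since $\lambda_{\mathcal{S}(\mathcal{C})}(\gamma_{sc} \otimes_{sc} \gamma'_{sc}) = P_{\mathcal{S}(\mathcal{C})}(1_{I_{sc}}, \gamma_{sc} \otimes_{sc} \gamma'_{sc})$ factorises as $P_{\mathcal{S}(\mathcal{C})}(1_{I_{sc}}, \gamma_{sc}) \cdot P_{\mathcal{S}(\mathcal{C})}(1_{I_{sc}}, \gamma'_{sc}) = \lambda_{\mathcal{S}(\mathcal{C})}(\gamma_{sc}) \cdot \lambda_{\mathcal{S}(\mathcal{C})}(\gamma'_{sc})$.

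Because the statement reduces to an application of an already-proven theorem together with a definitional identification, I do not expect any genuine obstacle; the real work was front-loaded into Corollary \ref{cor:SCvalidtheory}, i.e. verifying that $\mathcal{S}(\mathcal{C})$ is an SPPT at all. The only mildly delicate point is to confirm that the explicit weighted-sum formula $\lambda_{\mathcal{S}(\mathcal{C})}(\{(\gamma_i, w_i)\}) = \sum_i w_i P(1_I, \gamma_i) = \sum_i w_i \lambda(\gamma_i)$ agrees with the abstract homomorphism. This holds because $P_{\mathcal{S}(\mathcal{C})}$ was defined precisely as this weighted sum of underlying probabilities, and on a singleton weight-one set it collapses to the original $\lambda$ on $\mathcal{C}$; I would note in passing that this additive structure on the image is exactly what is exploited in the subsequent strengthening from a bare monoid homomorphism to a semiring homomorphism.
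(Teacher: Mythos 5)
Your proposal is correct and follows essentially the same route as the paper: invoking Corollary \ref{cor:SCvalidtheory} to establish that $(\mathcal{S}(\mathcal{C}), \{P^A_{\mathcal{S}(\mathcal{C})}\})$ is an SPPT and then applying Theorem \ref{th:Cmonoidalhomomorphism}. The extra definitional matching and explicit verification of the two homomorphism conditions that you include is more detail than the paper gives, but it is consistent with and subsumed by the same argument.
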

  \begin{proof}
    In section \ref{sec:probphysicaltheories} we have shown that such a monoid homomorphism exists for all valid simplified probabilistic process theories. Hence, by corollary \ref{cor:SCvalidtheory}, we have that $\lambda_{\mathcal{S}(\mathcal{C})}$ is a monoid homomorphism from $\mathcal{S}(\mathcal{C})(I, I)$ to $(\mathbb{R}_{\geq 0}, 1, \times)$. 
  \end{proof}

  Before presenting a conservation of the sum structure, we begin by formalising it. To this end we define an additive union operation $\cup$ in the summed category $\mathcal{S}(\mathcal{C})$.

  \begin{definition}
    For morphisms $\{(f_i, w_i)\}_{sc} = f_{sc} \in \mathcal{S}(\mathcal{C})(A,B)$ and $\{(g_j, v_j)\}_{sc} = g_{sc} \in \mathcal{S}(\mathcal{C})(C,D)$, we define $f_{sc}  \cup g_{sc}$ as the additive union morphism, with $f_{sc}  \cup g_{sc} = \{(f_i, w_i), (g_j, v_j) ~\forall i, j \}_{sc}$
  \end{definition}

  It is the additive union along with the zero morphism that forms the sum structure in summed probabilistic process theories. The additive union is commutative and associative and the zero morphism serves as the unit. Furthermore, the distributive property holds between the additive union and the monoidal product. We now show that all these properties hold for scalars in the summed category $\mathcal{S}(\mathcal{C})$ and that these form a semiring.

  \begin{theorem}
    $\mathcal{S}(\mathcal{C})(I, I)$ is a semiring. 
    \label{th:SCsemiring}
  \end{theorem}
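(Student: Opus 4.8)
The plan is to verify directly that $\mathcal{S}(\mathcal{C})(I,I)$, equipped with additive union $\cup$ (unit $0_{sc}=\{\}$) as addition and the scalar product $\otimes_{sc}$ (unit $1_{I_{sc}}=\{(1_I,1)\}_{sc}$) as multiplication, satisfies every semiring axiom. Since these operations are defined purely combinatorially on finite weighted sets, the whole argument reduces to unfolding definitions, and I would organise it into four checks: the additive commutative monoid, the multiplicative monoid, distributivity, and annihilation by $0_{sc}$.

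First I would establish that $(\mathcal{S}(\mathcal{C})(I,I), \cup, 0_{sc})$ is a commutative monoid. Associativity and commutativity of $\cup$ are inherited from the associativity and commutativity of set union on the underlying weighted pairs, and the empty set $0_{sc}=\{\}$ is a two-sided unit since adjoining no pairs leaves a weighted set unchanged. The multiplicative part $(\mathcal{S}(\mathcal{C})(I,I), \otimes_{sc}, 1_{I_{sc}})$ is already a monoid as a consequence of Theorem~\ref{th:SCisSMC}: because $\mathcal{S}(\mathcal{C})$ is a symmetric monoidal category, its endomorphisms of the unit object form a (commutative) monoid under the scalar product, with $1_{I_{sc}}$ as unit. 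Here I would note that for scalars the sequential and parallel compositions coincide, so it is immaterial which we take as the multiplication.

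The only part requiring genuine computation is distributivity. Writing $f_{sc}=\{(f_i,w_i)\}_{sc}$, $g_{sc}=\{(g_j,v_j)\}_{sc}$, and $h_{sc}=\{(h_k,u_k)\}_{sc}$, I would unfold both $f_{sc}\otimes_{sc}(g_{sc}\cup h_{sc})$ and $(f_{sc}\otimes_{sc}g_{sc})\cup(f_{sc}\otimes_{sc}h_{sc})$ and observe that each equals the single set $\{(f_i\otimes g_j, w_iv_j)\}\cup\{(f_i\otimes h_k, w_iu_k)\}$; right distributivity follows by the symmetric computation. Finally, annihilation is exactly Lemma~\ref{lemma:zeroMorproperties}, giving $0_{sc}\otimes_{sc} f_{sc} = f_{sc}\otimes_{sc} 0_{sc} = 0_{sc}$ for all scalars $f_{sc}$.

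I do not expect a real obstacle, since the content is entirely the bookkeeping of finite weighted sets. The one place to take care is the interaction between the set-theoretic collapse of repeated weighted pairs and the distributive law: if two products happen to yield the same morphism with the same weight, set union identifies them. This is harmless because $\cup$ is applied symmetrically on both sides of each distributive identity, so the two sides collapse identically and remain equal as sets. I would flag this explicitly rather than silently treating the weighted sets as formal sums.
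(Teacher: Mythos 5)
Your proposal is correct and follows essentially the same route as the paper's proof: verify the additive commutative monoid structure of $(\mathcal{S}(\mathcal{C})(I,I),\cup,0_{sc})$, inherit the multiplicative monoid from the SMC structure, check distributivity by unfolding the weighted-set definitions, and invoke Lemma~\ref{lemma:zeroMorproperties} for annihilation. Your explicit flag about the set-theoretic collapse of repeated weighted pairs is a point the paper's proof passes over silently, and is worth keeping.
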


  \begin{proof}
     We already have $(\mathcal{S}(\mathcal{C})(I, I), 1_{I_{sc}}, \otimes) $ as a monoid. Furthermore $(\mathcal{S}(\mathcal{C})(I, I), 0_{sc}, \cup)$ is a commutative monoid from the following:

    \noindent Associativity: Given morphisms $f_{sc}, g_{sc}, h_{sc} \in \mathcal{S}(\mathcal{C})(I,I)$, we have
    \begin{eqnarray}
      (f_{sc} \cup g_{sc}) \cup h_{sc} &=& \{(f_i, w_i), (g_j, v_j), (h_k, u_k) ~~\forall~ i,j,k\}_{sc} \notag\\
      &=& f_{sc} \cup (g_{sc} \cup h_{sc}).
    \end{eqnarray}
    Identity: The zero morphism (the empty set) serves as the identity element for the additive union $\cup$.

    \noindent Commutativity: Given morphisms $f_{sc}, g_{sc} \in \mathcal{S}(\mathcal{C})(I,I)$, we have
    \begin{eqnarray}
      f_{sc} \cup g_{sc}  &=& \{(f_i, w_i), (g_j, v_j) ~\forall i,j\}_{sc} \notag\\
      &=&  g_{sc} \cup f_{sc}. 
    \end{eqnarray}
    By the properties of zero morphism, lemma \ref{lemma:zeroMorproperties}, we have $f_{sc} \otimes_{sc} 0_{sc} =  0_{sc} \otimes_{sc} f_{sc} = 0_{sc}.$
    Lastly, given morphisms $f_{sc}, g_{sc}, h_{sc} \in \mathcal{S}(\mathcal{C})(I,I)$, we have
    \begin{eqnarray}
      f_{sc} \otimes_{sc} (g_{sc} \cup h_{sc}) &=& \{(f_i \otimes g_j, w_i v_j), (f_k \otimes h_l, w_k u_l)\}_{sc} \notag\\
      &=& (f_{sc} \otimes_{sc} g_{sc}) \cup (f_{sc} \otimes_{sc} h_{sc})
    \end{eqnarray}
    similarly, 
      \begin{eqnarray}
      (g_{sc} \cup h_{sc}) \otimes_{sc} f_{sc}  &=& \{( g_j  \otimes f_i, v_j w_i), ( h_l  \otimes f_k, w_k u_l )\}_{sc} \notag\\
      &=& ( g_{sc} \otimes_{sc} f_{sc}) \cup (h_{sc}  \otimes_{sc} f_{sc} )
    \end{eqnarray}
    Hence, $\mathcal{S}(\mathcal{C})(I, I)$ is indeed a semiring.
  \end{proof}

  With all the constituents at hand, we can prove the proposed semiring homomorphism from scalars to probabilities in the summed theory $(\mathcal{S}(\mathcal{C}),\{P_{\mathcal{S}(\mathcal{C})}^A\})$. The core of the proof is to show that the monoid homomorphism $\lambda_{\mathcal{S}(\mathcal{C})}$ also preserves the additive union. 

  \begin{theorem}[\bf Born rule as semiring homomorphism in summed theories]
    $\lambda_{\mathcal{S}(\mathcal{C})}$ is a semiring homomorphism from $(\mathcal{S}(\mathcal{C})(I, I), 0_{sc}, \cup, 1_{I_{sc}}, \otimes_{\mathcal{S}} )$ to $(Range(P_{\mathcal{S}(\mathcal{C})}) \subseteq \mathbb{R}_{\geq 0}, 0, +, 1, \times)$ 
    \label{th:SCsemiringhomomorphism}
  \end{theorem}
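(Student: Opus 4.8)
The plan is to exploit the fact that the multiplicative half of the claim is already secured, so that only the additive half requires genuine verification. By Theorem~\ref{th:SCmonoidalhomo}, $\lambda_{\mathcal{S}(\mathcal{C})}$ is already a monoid homomorphism from $(\mathcal{S}(\mathcal{C})(I,I), 1_{I_{sc}}, \otimes_{\mathcal{S}})$ to $(\mathrm{Range}(P_{\mathcal{S}(\mathcal{C})}), 1, \times)$, so $\lambda_{\mathcal{S}(\mathcal{C})}(1_{I_{sc}}) = 1$ and multiplicativity $\lambda_{\mathcal{S}(\mathcal{C})}(a_{sc} \otimes_{\mathcal{S}} b_{sc}) = \lambda_{\mathcal{S}(\mathcal{C})}(a_{sc}) \cdot \lambda_{\mathcal{S}(\mathcal{C})}(b_{sc})$ come for free. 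It therefore only remains to check that $\lambda_{\mathcal{S}(\mathcal{C})}$ sends the additive unit $0_{sc}$ to $0$ and converts the additive union $\cup$ into ordinary addition.

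First I would dispatch the additive unit. Since $0_{sc}$ is the empty set, the defining sum $\lambda_{\mathcal{S}(\mathcal{C})}(\gamma_{sc}) = \sum_i w_i P(1_I, \gamma_i)$ ranges over an empty index set and hence evaluates to $0$; alternatively this is exactly the content of Theorem~\ref{th:SCzeromorphism}, which guarantees $P(1_I, \mathbf{0}) = 0$.

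The core step is additivity over $\cup$. Writing scalars $f_{sc} = \{(f_i, w_i)\}_{sc}$ and $g_{sc} = \{(g_j, v_j)\}_{sc}$ in $\mathcal{S}(\mathcal{C})(I,I)$, their additive union is $f_{sc} \cup g_{sc} = \{(f_i, w_i), (g_j, v_j)\}_{sc}$, so the defining sum splits across the two index families:
\begin{align*}
\lambda_{\mathcal{S}(\mathcal{C})}(f_{sc} \cup g_{sc}) &= \sum_i w_i P(1_I, f_i) + \sum_j v_j P(1_I, g_j) \\
&= \lambda_{\mathcal{S}(\mathcal{C})}(f_{sc}) + \lambda_{\mathcal{S}(\mathcal{C})}(g_{sc}).
\end{align*}
Combining this with the multiplicative properties inherited from Theorem~\ref{th:SCmonoidalhomo}, and with the fact established in Theorem~\ref{th:SCsemiring} that the source $(\mathcal{S}(\mathcal{C})(I,I), 0_{sc}, \cup, 1_{I_{sc}}, \otimes_{\mathcal{S}})$ is a genuine semiring, yields that $\lambda_{\mathcal{S}(\mathcal{C})}$ is a semiring homomorphism.

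The one point that needs care---and which I expect to be the only genuine obstacle---is ensuring that the additive union genuinely behaves as a disjoint (multiset) union of the underlying indexed families rather than a set-theoretic union that would silently identify coincident pairs $(f_i, w_i) = (g_j, v_j)$; only under the former reading does the sum split cleanly as written above. I would make this explicit by treating morphisms of $\mathcal{S}(\mathcal{C})$ as finite \emph{indexed} families, so that $\cup$ concatenates index sets and the two partial sums never overlap, thereby guaranteeing that no weight is dropped or double-counted.
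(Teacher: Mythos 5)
Your proposal is correct and follows essentially the same route as the paper: multiplicativity and unit preservation are imported from Theorem~\ref{th:SCmonoidalhomo}, the additive unit is handled via Theorem~\ref{th:SCzeromorphism}, and additivity over $\cup$ is obtained by splitting the defining sum across the two index families. The only difference is cosmetic --- the paper additionally writes out preservation of distributivity explicitly (which, as you implicitly observe, is automatic once both monoid structures are preserved), and your multiset-union caveat is a sensible precaution that the paper leaves implicit.
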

  \begin{proof}
  From theorem \ref{th:SCmonoidalhomo}, $\lambda_{\mathcal{S}(\mathcal{C})}$ is already a monoid homomorphism. We have to show that it is a homomorphism between the additive structure $(\mathcal{S}(\mathcal{C})(I,I), \cup, 0_{\mathcal{S}(\mathcal{C})}) \rightarrow (\mathbb{R}_{\geq 0}, +, 0).$ From theorem \ref{th:SCzeromorphism} we have
  \begin{equation}
    \lambda_{\mathcal{S}(\mathcal{C})}(0_{sc}) = 0.
  \end{equation} 
  Furthermore, given morphisms $\{(\gamma_i, w_i)\}_{sc}, \{(\xi_j, v_j)\}_{sc} \in \mathcal{S}(\mathcal{C})(I,I)$
  \begin{eqnarray}
    \lambda_{\mathcal{S}(\mathcal{C})}\left(\left\{(\gamma_i, w_i)\right\}_{sc} \cup \left\{(\xi_j, v_j)\right\}\right) &=& \lambda_{\mathcal{S}(\mathcal{C})}\left(\left\{(\gamma_i, w_i), (\xi_j, v_j)\right\}_{sc}\right) \notag\\
      &{=}& P_{\mathcal{S}(\mathcal{C})}\left(1_{I_{sc}}, \left\{(\gamma_i, w_i), (\xi_j, v_j)\right\}_{sc}\right) \notag\\
      &\overset{\text{Def.}}{=}& \sum_{i} w_i P_{\mathcal{Q}(\mathcal{C})}(1_{I}, \gamma_i) + \sum_{j}v_j P_{\mathcal{Q}(\mathcal{C})}(1_{I}, \xi_j) \notag\\
      &\overset{\text{Def.}}{=}& \sum_{i} w_i \lambda_{\mathcal{Q}(\mathcal{C})}(\gamma_i) + \sum_{j}v_j \lambda_{\mathcal{Q}(\mathcal{C})}(\xi_j) \notag\\
      &{=}& \lambda_{\mathcal{S}(\mathcal{C})}(\{(\gamma_i, w_i)\}_{sc}) + \lambda_{\mathcal{S}(\mathcal{C})}(\{(\xi_j, v_j)\}_{sc}).
  \end{eqnarray}

  \end{proof}

 The above theorem shows that the scalars and probabilities in summed theories have a semiring homomorphism $$P_{\mathcal{S}(\mathcal{C})}\left(\tikzfig{staterhoSC}, \tikzfig{effectsigmaSC}\right) =P_{\mathcal{S}(\mathcal{C})}\left(1_{I_{sc}}, \tikzfig{compositionrhosigmaSC}\right) = \lambda_{\mathcal{S_C}}\left(\tikzfig{compositionrhosigmaSC}\right).$$ 
 Note that for a particular theory in consideration, its summed theory may contain significant amount of redundancy. For example, in the theory $\mathcal{S}(\mathcal{Q}(\mathbf{FHilb}, |\langle -|- \rangle|^2))$, there are instances of different Kraus decompositions which are statistically equivalent e.g. $$\{(\ketbra{0}, 1/2), (\ketbra{1}, 1/2)\}_{sc}, \{(\ketbra{+}, 1/2), (\ketbra{-}, 1/2)\}_{sc}.$$ In the typical density matrix formalism of quantum theory all of these associated processes have the same density matrix or Choi matrix. However, we have not countered such a redundancy in our construction so far. We do that by quotienting again to create noisy probabilistic process theories.

  \section{Probabilities in Noisy probabilistic process theories}
  \label{sec:noisyphysicaltheories}

We formalise the final step of the series of constructions of this paper, where we perform a combination of summing and quotient. We call the resulting structure a noisy category $\mathcal{N}(\mathcal{C})$. 

\begin{definition}[\bf Noisy Category $\mathcal{N}(\mathcal{C})$]
We define the noisy category $\mathcal{N}(C)$ as $\mathcal{Q}(\mathcal{S}(\mathcal{C}))$. In other words, its morphisms are the equivalence classes of morphisms from $\mathcal{S}(\mathcal{C})$ under the probabilistic equivalence relation from Def.~\ref{def:probequivmorphisms}.
\end{definition}

By construction, since $\mathcal{N}(\mathcal{C})$ is a quotient $\mathcal{Q}$ of a probabilistic process theory $(\mathcal{S}(\mathcal{C}), \{P^A\})$, all the results from sections \ref{sec:probphysicaltheories} are true for category $\mathcal{N}(\mathcal{C})$, e.g., it is an SMC and has well-defined probability functions lifted from $(\mathcal{S}(\mathcal{C}), \{P^A\})$ that satisfy the axioms \textbf{(I)}, \textbf{(II)}, and \textbf{(III)} making it a valid simplified probabilistic process theory. Furthermore, there exists a monoid homomorphism $\lambda_{\mathcal{N}(\mathcal{C})}$ from scalars to probabilities in the noisy category $\mathcal{N}(\mathcal{C})$. 
In the following, we establish that constructing the generalised Born rule $\mathcal{G}$ and then adding noise $\mathcal{N}$ to textbook quantum mechanics, that is, constructing $\mathcal{N}(\mathcal{G}(\mathcal{U} \subseteq \mathbf{FHilb}, \{\ket{\psi}\}, \{\bra{\phi}\}, |\langle - | - \rangle|^2))  = \mathcal{N}(\mathcal{K}, \Tr[- \circ - ])$ yields a theory of completely positive maps. Similarly, adding noise to $\mathbf{FHilb}$, that is, $\mathcal{N}(\mathbf{FHilb}, \{|\langle -|- \rangle|^2\})$ returns a theory of CP maps. In other words, adding noise to textbook quantum theories returns quantum information theory.

  \begin{lemma} The statistics of morphisms is completely characterised by their action on state-effect pairs composed of single elements in their weighted sets. 
    \begin{align}
      \lambda_{\mathcal{S}(\mathcal{C})} \left( 
        \tikzfig{compositionsigmaprimefrhoprimedoubleSC}
      \right) 
      &= 
      \lambda_{\mathcal{S}(\mathcal{C})} \left( 
        \tikzfig{compositionsigmaprimefprimerhoprimedoubleSC}
      \right) \qquad\forall\, \sigma'_{sc},\, \rho'_{sc} \notag
      \\
      \iff\lambda_{\mathcal{S}(\mathcal{C})} \left( 
        \tikzfig{compositionsigmafrhodoubleSC}
      \right) 
      &= 
      \lambda_{\mathcal{S}(\mathcal{C})} \left( 
        \tikzfig{compositionsigmafprimerhodoubleSC}
      \right) \qquad\forall\, \sigma_{sc} = \{(\sigma,1)\}_{sc},\,
              \rho_{sc} = \{(\rho,1)\}_{sc}
    \end{align}
    \label{lem:characterisebyrank1}
  \end{lemma}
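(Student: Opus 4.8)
The plan is to handle the two implications separately. The forward direction ($\Rightarrow$) is immediate, since the singleton weighted sets $\sigma_{sc} = \{(\sigma,1)\}_{sc}$ and $\rho_{sc} = \{(\rho,1)\}_{sc}$ are special instances of the universally quantified $\sigma'_{sc}, \rho'_{sc}$; restricting the hypothesis to these particular arguments gives exactly the right-hand equality. All the content therefore lies in the reverse direction ($\Leftarrow$), and the key tool will be the additivity of $\lambda_{\mathcal{S}(\mathcal{C})}$ over the weighted-set structure, i.e. the semiring-homomorphism property established in Theorem \ref{th:SCsemiringhomomorphism} together with the explicit formula $\lambda_{\mathcal{S}(\mathcal{C})}(\{(a_i,\alpha_i)\}_{sc}) = \sum_i \alpha_i \lambda(a_i)$.

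For the reverse direction I would take arbitrary weighted-set arguments $\rho'_{sc} = \{(\rho_a, w_a)\}_{sc}$ and $\sigma'_{sc} = \{(\sigma_b, v_b)\}_{sc}$ and expand the scalar $\sigma'_{sc} \circ (f \otimes 1_C) \circ \rho'_{sc}$ directly from the definition of sequential composition in $\mathcal{S}(\mathcal{C})$: since composition takes Cartesian products of the underlying morphisms and products of the corresponding weights, this scalar is the weighted set $\{(\sigma_b \circ (f_k \otimes 1_C) \circ \rho_a,\; v_b u_k w_a)\}$ indexed over $a$, $b$, and the elements $(f_k, u_k)$ of $f$. Applying $\lambda_{\mathcal{S}(\mathcal{C})}$ and regrouping the sum by $(a,b)$ yields
\[ \lambda_{\mathcal{S}(\mathcal{C})}\!\left(\sigma'_{sc} \circ (f \otimes 1_C) \circ \rho'_{sc}\right) = \sum_{a,b} w_a v_b \,\lambda_{\mathcal{S}(\mathcal{C})}\!\left(\{(\sigma_b,1)\}_{sc} \circ (f \otimes 1_C) \circ \{(\rho_a,1)\}_{sc}\right), \]
where I recognise the inner factor $\sum_k u_k \lambda(\sigma_b \circ (f_k \otimes 1_C) \circ \rho_a)$ as precisely $\lambda_{\mathcal{S}(\mathcal{C})}$ evaluated on the singletons $\{(\sigma_b,1)\}_{sc}$ and $\{(\rho_a,1)\}_{sc}$.

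I would then invoke the single-element hypothesis term by term: for each fixed pair $(a,b)$ the inner factor for $f$ equals that for $f'$, and since the outer weights $w_a v_b$ are common to both expansions, summing back recovers the equality of $\lambda_{\mathcal{S}(\mathcal{C})}$ on the general weighted-set arguments for $f$ and $f'$. The only delicate point --- rather than a genuine obstacle --- is the bookkeeping: one must carry the auxiliary system $C$ (the ``double'' wire appearing in the probabilistic-equivalence relation of Def.~\ref{def:probequivmorphisms}) uniformly through every step, and check that the regrouping of the triple-indexed sum into an outer $(a,b)$ sum of singleton contributions is legitimate. Both are guaranteed by the fact that $\lambda_{\mathcal{S}(\mathcal{C})}$ distributes over finite weighted unions, so no idea beyond careful index management is required.
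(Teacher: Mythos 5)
Your proposal is correct and follows essentially the same route as the paper: the forward direction by specialisation to singletons, and the converse by expanding the triple-indexed weighted sum, regrouping it as an outer sum over the state/effect indices of inner singleton contributions, and applying the singleton hypothesis term by term before resumming. The paper's proof carries out exactly this expansion (writing the general case as $\sum_{j,i,k} u_j w_i v_k\,\lambda(\cdots)$ and deducing it from the single-element identity), so no further comment is needed.
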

  \begin{proof}
    The forward implication $\implies$ is straightforward. For the converse implication $\impliedby$ consider $\{(\mathfrak{f}_i, w_i)\}_{sc} = f_{sc}$ and $\{(\mathfrak{f}'_i, w'_i)\}_{sc} = f'_{sc}$. Then, we have
    \begin{align}
      \lambda_{\mathcal{S}(\mathcal{C})} \left( 
        \tikzfig{compositionsigmafrhodoubleSC}
      \right) 
      &= 
      \lambda_{\mathcal{S}(\mathcal{C})} \left( 
        \tikzfig{compositionsigmafprimerhodoubleSC}
      \right) \qquad\forall\, \sigma_{sc} = \{(\sigma,1)\}_{sc},\,
              \rho_{sc} = \{(\rho,1)\}_{sc} \notag
      \\
      \iff\lambda_{\mathcal{S}(\mathcal{C})} \left( \left\{ \left(
        \tikzfig{compositionsigmafirhodoubleQC} , w_i \right) \right\}_{sc}
      \right) 
      &= 
      \lambda_{\mathcal{S}(\mathcal{C})} \left( \left\{ \left(
        \tikzfig{compositionsigmafiprimerhodoubleQC}, w'_i \right) \right\}_{sc}
      \right) \qquad \forall\, \sigma \in \mathcal{C}(B,I), \rho \in \mathcal{C}(I,A) \notag
      \\
      \iff\sum_i w_i\, \lambda \left(\tikzfig{compositionsigmafirhodoubleQC}
      \right)
      &=
      \sum_i w'_i\, \lambda \left(\tikzfig{compositionsigmafiprimerhodoubleQC} 
      \right) \qquad \forall\, \sigma \in \mathcal{C}(B,I), \rho \in \mathcal{C}(I,A).
      \label{eq:lemma72}
    \end{align}
 Now, consider general states and effects $\sigma'_{sc} = \{(\sigma'_{j}, u_j)\}_{sc}$ and $\rho'_{sc} = \{(\rho'_k, v_k)\}_{sc}$. For which 
  \begin{align}
      \lambda_{\mathcal{S}(\mathcal{C})} \left( 
        \tikzfig{compositionsigmaprimefrhoprimedoubleSC}
      \right) 
      &= 
      \lambda_{\mathcal{S}(\mathcal{C})} \left( 
        \tikzfig{compositionsigmaprimefprimerhoprimedoubleSC}
      \right) \qquad\forall\, \sigma'_{sc},\, \rho'_{sc} \notag \\
      \iff\lambda_{\mathcal{S}(\mathcal{C})} \left( \left\{ \left(
        \tikzfig{compositionsigmajprimefirhokprimedoubleQC} , u_j w_i v_k \right) \right\}_{sc}
      \right) 
      &= 
      \lambda_{\mathcal{S}(\mathcal{C})} \left( \left\{ \left(
        \tikzfig{compositionsigmajprimefiprimerhokprimedoubleQC}, u_j w'_i v_k \right) \right\}_{sc}
      \right) \notag\\
      &\quad \forall ~\sigma'_{sc} = \{(\sigma'_{j}, u_j)\}_{sc} \text{ and } \rho'_{sc} = \{(\rho'_k, v_k)\}_{sc}
      \notag\\
      \iff
      \sum_{j,i,k} u_j w_i v_k\, \lambda \left(
        \tikzfig{compositionsigmajprimefirhokprimedoubleQC}
      \right)
      &=
      \sum_{j,i,k} u_j w'_i v_k\, \lambda \left(
    \tikzfig{compositionsigmajprimefiprimerhokprimedoubleQC}
      \right)\notag\\
      &\quad \forall ~\sigma'_{sc} = \{(\sigma'_{j}, u_j)\}_{sc} \text{ and } \rho'_{sc} = \{(\rho'_k, v_k)\}_{sc},
    \end{align}
    but 
    \begin{align}
      \sum_i w_i\, \lambda \left(\tikzfig{compositionsigmafirhodoubleQC}
      \right)
      &=
      \sum_i w'_i\, \lambda \left(\tikzfig{compositionsigmafiprimerhodoubleQC} 
      \right) \notag\\
      & \forall\, \sigma \in \mathcal{C}(B,I), \rho \in \mathcal{C}(I,A) \notag \\
        \implies
      \sum_{j,i,k} u_j w_i v_k\, \lambda \left(
        \tikzfig{compositionsigmajprimefirhokprimedoubleQC}
      \right)
      &=
      \sum_{j,i,k} u_j w'_i v_k\, \lambda \left(
    \tikzfig{compositionsigmajprimefiprimerhokprimedoubleQC}
      \right)\\
      &\quad \forall ~\sigma'_{sc} = \{(\sigma'_{j}, u_j)\}_{sc} \text{ and } \rho'_{sc} = \{(\rho'_k, v_k)\}_{sc}. \notag
    \end{align}
    Hence, the converse implication is true. 
  \end{proof}

 This is lemma analogous to maps being characterised in terms of rank one states and effects in classical and quantum theories.
  We will now consider the construction of the category $\mathbf{CP}$. 
  %Besides describing an alternative to the known \mathbf{CP} construction, our formalism provides an insight to the nature of noise in quantum theory. Our construction shows that a noisy theory is more than a lack of knowledge of the system, i.e., creating weighted sets of morphisms (I DONT AGREE, OR AT LEAST THINK IT IS OBVIOUS THAT LACK OF KNOWLEDGE IS EXACTLY WEIGHTED SETS). It also involves taking one more quotient such that those morphisms with same statistical representation (e.g. Choi matrices in quantum theory) are now equivalent. Indeed, had this process been only the addition of noise or lack of knowledge on the observer's part then there was no reason for  different Kraus decompositions to represent the same channel in quantum theory. This procedure also describes a general, bare-bones method for obtaining noisy theories out of probabilistic process theories. Presenting the kinds of models valid theories should exhibit.  We begin with the simple case of states and build from it the case for morphisms. 
  Besides describing an alternative to the known $\mathbf{CP}$ construction, since the constructions apply to alternative Born rules on textbook quantum mechanics, this procedure also describes a potential method for obtaining information theories out of probabilistic process theories associated to for instance alternative quantum mechanical Born rules.

  \begin{theorem}[\bf Characterisation of equivalent states in summed $\{\mathbf{FHilb}, |\langle-|-\rangle|^2\}$] Given the original theory $\{\mathcal{C}, P^A_{\mathcal{C}}\}$ is $\{\mathbf{FHilb}, |\langle-|-\rangle|^2\}$, the states in the noisy theory correspond to unnormalised density matrices. 
  \end{theorem}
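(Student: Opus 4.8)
The plan is to compute the statistics of a summed state explicitly, recognise that they are governed by a single positive operator, and then invoke the earlier characterisation of state equivalence to collapse the redundant descriptions onto that operator. A state in $\mathcal{S}(\mathcal{C})$ is a finite weighted set $\rho_{sc} = \{(\ket{\psi_i}, w_i)\}_{sc}$ of elements of $\textbf{FHilb}(I,A)$, and an effect is likewise a weighted set of bras $\sigma_{sc} = \{(\bra{\phi_j}, v_j)\}_{sc}$. First I would unfold the definition of $P^A_{\mathcal{S}(\mathcal{C})}$ together with the Born rule of $\textbf{FHilb}$ to obtain
\[ P^A_{\mathcal{S}(\mathcal{C})}(\rho_{sc}, \sigma_{sc}) = \sum_{i,j} w_i v_j |\langle \phi_j | \psi_i \rangle|^2 = \sum_j v_j \langle \phi_j | M | \phi_j \rangle, \qquad M := \sum_i w_i \ketbra{\psi_i}{\psi_i}. \]
This exhibits $M$ as a positive semidefinite operator, i.e.\ an unnormalised density matrix, through which all probabilities factor.

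Next I would characterise when two summed states are identified in the noisy theory. Since $(\mathcal{S}(\mathcal{C}), \{P^A_{\mathcal{S}(\mathcal{C})}\})$ is an SPPT by Corollary~\ref{cor:SCvalidtheory}, Lemma~\ref{th:stateequiv} applies directly: two states are probabilistically equivalent if and only if they assign the same value to $\lambda_{\mathcal{S}(\mathcal{C})}(\sigma_{sc} \circ \rho_{sc})$ for every local effect $\sigma_{sc}$, with no ancilla required. Specialising $\sigma_{sc}$ to single bras $\{(\bra{\phi}, 1)\}_{sc}$, the equivalence condition reduces to $\langle \phi | M | \phi \rangle = \langle \phi | M' | \phi \rangle$ for all $\ket{\phi}$, where $M, M'$ are the operators attached to the two states. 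By the polarisation identity over a complex Hilbert space, equality of this quadratic form for every $\ket{\phi}$ forces $M = M'$, and the converse is immediate; hence $\rho_{sc} \sim \rho'_{sc}$ if and only if $M = M'$.

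Finally I would show that the assignment $\rho_{sc} \mapsto M$ descends to a bijection between the states of $\mathcal{N}(\mathcal{C}) = \mathcal{Q}(\mathcal{S}(\mathcal{C}))$ and unnormalised density matrices. Injectivity on equivalence classes is exactly the previous paragraph; surjectivity follows from the spectral theorem, since any positive semidefinite $M = \sum_i \lambda_i \ketbra{e_i}{e_i}$ with $\lambda_i > 0$ is realised by the weighted set $\{(\ket{e_i}, \lambda_i)\}_{sc}$, while the zero operator is realised by the empty set $0_{sc}$. This delivers the claimed correspondence and in particular explains why the Kraus-decomposition redundancies noted at the end of the previous section (e.g.\ $\{(\ket{0},1/2),(\ket{1},1/2)\}_{sc}$ and $\{(\ket{+},1/2),(\ket{-},1/2)\}_{sc}$) are collapsed, both mapping to $\tfrac{1}{2}\mathbb{1}$.

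I expect the only genuine subtlety to be the reduction from the full probabilistic-equivalence relation, which quantifies over arbitrary ancilla systems, to testing against local rank-one effects. This is not re-derived here but imported through Lemma~\ref{th:stateequiv} (itself resting on Lemma~\ref{th:localeffectimpliesstateequiv}); the remaining content is the linear-algebraic fact that an operator is determined by its diagonal quadratic form, which is precisely where the explicit factorisation through $M$ does the work.
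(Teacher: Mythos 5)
Your proposal is correct and follows essentially the same route as the paper: reduce probabilistic equivalence to testing against single rank-one effects (the paper does this via Lemma~\ref{th:stateequiv} and Lemma~\ref{lem:characterisebyrank1}, which you import in the same way), then observe that equality of the quadratic forms $\langle\phi|M|\phi\rangle$ for all $\ket{\phi}$ forces equality of the associated positive operators. Your explicit surjectivity argument via the spectral theorem is a small but welcome addition that the paper leaves implicit in the word ``correspond.''
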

  \begin{proof}
  Consider states
  \begin{align*}
      \tikzfig{staterhoSC} = \left\{\left(\tikzfig{staterhoQCi}, w_i\right)\right\}_{sc} 
      &\sim \tikzfig{staterhoprimeSC} = \left\{\left(\tikzfig{staterhoQCprimej}, w'_j\right)\right\}_{sc}
      \\
      \overset{Thm. \ref{th:stateequiv}}{\iff}\quad 
      \lambda_{\mathcal{S}(\mathcal{C})}\left(\tikzfig{compositionrhosigmaSC}\right)
      &= 
  \lambda_{\mathcal{S}(\mathcal{C})}\left(\tikzfig{compositionrhoprimesigmaSC}\right)
      \quad \forall ~\tikzfig{effectsigmaSC}
      \\
      \overset{Lem. \ref{lem:characterisebyrank1}}{\iff}\quad 
      \lambda_{\mathcal{S}(\mathcal{C})}\left(\tikzfig{compositionrhosigmaSC}\right)
      &= 
  \lambda_{\mathcal{S}(\mathcal{C})}\left(\tikzfig{compositionrhoprimesigmaSC}\right)
      \quad \forall ~ \tikzfig{effectsigmaSC} = \left\{\left(\tikzfig{effectsigmaQC},1\right)\right\}_{sc}
      \\
      \overset{\text{Def.}}{\iff}\quad 
      \lambda_{\mathcal{S}(\mathcal{C})}\left(
        \left\{\left(\tikzfig{compositionrhoisigmaQC}, w_i\right)\right\}_{sc}
      \right)
      &= 
      \lambda_{\mathcal{S}(\mathcal{C})}\left(
        \left\{\left(\tikzfig{compositionrhoiprimesigmaQC}, w'_j\right)\right\}_{sc}
      \right) \quad \forall ~ \tikzfig{effectsigma}  \\
      \iff\quad 
      \sum_i w_i\, \lambda\left( \tikzfig{compositionrhoisigmaQC}\right)
      &= 
      \sum_j w'_j\, \lambda\left( \tikzfig{compositionrhoiprimesigmaQC}\right) \quad \forall~ \tikzfig{effectsigma}\\ 
      \iff\quad 
      \sum_i w_i\, \langle \sigma | \rho_i \rangle \langle \rho_i | \sigma \rangle
      &= 
      \sum_j w'_j\, \langle \sigma | \rho'_j \rangle \langle \rho'_j | \sigma \rangle
      \\
      \iff\quad 
    \left\langle \sigma \left| 
      \sum_i w_i\, |\rho_i\rangle \langle \rho_i| 
    \right. \right. & \left. \left.
      - \sum_j w'_j\, |\rho'_j\rangle \langle \rho'_j|
    \right| \sigma \right\rangle 
    = 0 \quad \forall\, \ket{\phi} \\
      \iff\quad 
      \sum_i w_i\, |\rho_i\rangle \langle \rho_i| 
      &= 
      \sum_j w'_j\, |\rho'_j\rangle \langle \rho'_j|.
    \end{align*}
    $\therefore$ Only those states are equivalent which have equal density matrices. Quotienting with respect to these, each state in the noisy theory corresponds to an unnormalised density matrix.
  \end{proof}

  \begin{theorem}[\bf Characterisation of equivalent morphisms in summed  $\{\mathbf{FHilb}, |\langle-|-\rangle|^2\}$] Given the original theory $\{\mathcal{C}, P^A_{\mathcal{C}}\}$ is $\{\mathbf{FHilb}, |\langle-|-\rangle|^2\}$, the morphisms in the noisy theory correspond to unnormalised positive Choi matrices i.e., completely positive maps.
    \label{thm:CPmorphisms} 
  \end{theorem}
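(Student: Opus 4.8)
The plan is to mirror the preceding state characterisation, now carrying the ancilla wire $C$ that is responsible for complete positivity. Starting from two summed morphisms $f_{sc} = \{(\mathfrak{f}_i, w_i)\}_{sc}$ and $f'_{sc} = \{(\mathfrak{f}'_j, w'_j)\}_{sc}$ in $\mathcal{S}(\textbf{FHilb})$, I would first unfold probabilistic equivalence $f_{sc} \sim f'_{sc}$ via Def.~\ref{def:probequivmorphisms}, then use Lemma~\ref{lem:characterisebyrank1} to restrict to single-element states $\rho_{sc} = \{(\rho,1)\}_{sc}$ and effects $\sigma_{sc} = \{(\sigma,1)\}_{sc}$ while keeping an arbitrary ancilla $C$. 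Expanding $\lambda_{\mathcal{S}(\mathcal{C})}$ by definition and inserting the explicit Born rule, equivalence becomes
\[\sum_i w_i\,\bigl|\bra{\sigma}(\mathfrak{f}_i\otimes1_C)\ket{\rho}\bigr|^2 = \sum_j w'_j\,\bigl|\bra{\sigma}(\mathfrak{f}'_j\otimes1_C)\ket{\rho}\bigr|^2 \qquad \forall\, \rho\in\textbf{FHilb}(I,A\otimes C),\ \sigma,\ C.\]

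Next I would read each side as a quadratic form of a Hermitian operator on $B\otimes C$, writing $\bra{\sigma}M_f\ket{\sigma}$ with $M_f = \sum_i w_i(\mathfrak{f}_i\otimes1_C)\ketbra{\rho}(\mathfrak{f}_i^\dagger\otimes1_C)$, and similarly $M_{f'}$. Since effects are unrestricted in the simplified theory, $\ket{\sigma}$ ranges over all of $B\otimes C$, so agreement of the two quadratic forms on every $\sigma$ forces $M_f = M_{f'}$, i.e.\ $(\Phi_f\otimes\mathrm{id}_C)(\ketbra{\rho}) = (\Phi_{f'}\otimes\mathrm{id}_C)(\ketbra{\rho})$ for all $\rho$ and $C$, where $\Phi_f(\cdot) = \sum_i w_i\,\mathfrak{f}_i(\cdot)\mathfrak{f}_i^\dagger$ is the completely positive map associated to the weighted set. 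Specialising to $C = A$ and $\ket{\rho} = \ket{\Omega} = \sum_k \ket{k}\ket{k}$, the unnormalised maximally entangled vector, collapses this to equality of Choi matrices; since the Choi correspondence is faithful, $f_{sc} \sim f'_{sc} \iff \Phi_f = \Phi_{f'}$.

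Finally I would close the correspondence in both directions. The Choi matrix of such a $\Phi_f$ is a sum of positive rank-one terms weighted by $w_i>0$, hence an unnormalised positive semidefinite matrix; conversely, every positive semidefinite (Choi) matrix factors as $\sum_i \ket{v_i}\bra{v_i}$, which reshapes to Kraus operators $K_i$ and so arises as $\Phi_f$ for the weighted set $\{(K_i,1)\}_{sc}$. Quotienting $\mathcal{S}(\textbf{FHilb})$ by $\sim$ therefore puts the morphisms of $\mathcal{N}(\textbf{FHilb})$ in bijection with unnormalised positive Choi matrices, i.e.\ with completely positive maps, as claimed.

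The step I expect to need the most care is the passage through the ancilla. Lemma~\ref{lem:characterisebyrank1} lets me discard the weighted-set structure of states and effects, but the system $C$ must be retained throughout, since it is precisely quantification over all $C$ (equivalently, evaluation on entangled $\rho$) that upgrades equality of the induced positive functionals to equality of Choi matrices, and hence of the completely positive maps themselves; dropping $C$ would only identify maps agreeing on product inputs. I would also verify that the Riesz representatives of $\sigma$ genuinely sweep out all of $B\otimes C$, so that equality of the \emph{operators} $M_f, M_{f'}$ — not merely of some diagonal — is warranted before invoking faithfulness of the Choi correspondence.
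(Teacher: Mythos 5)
Your proposal is correct and follows essentially the same route as the paper: reduce to singleton states and effects via Lemma~\ref{lem:characterisebyrank1}, pass from equality of the quadratic forms over all $\sigma$ to equality of the operators $\sum_i w_i(\mathfrak{f}_i\otimes 1)\ketbra{\rho}(\mathfrak{f}_i^\dagger\otimes 1)$ for all $\rho$, identify this with equality of Choi matrices, and conclude positivity via the factorisation $C = MM^\dagger$ and Choi's theorem. Your explicit converse step (every positive semidefinite Choi matrix factors into Kraus operators and hence is realised by some weighted set) is a welcome addition that the paper's proof leaves implicit when claiming the correspondence is onto all completely positive maps.
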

  \begin{proof}
    Consider morphisms
    \begin{align*}
      \tikzfig{morphismfSC} = \left\{\left(\tikzfig{morphismfiQC}, w_i\right)\right\}_{sc} 
      &\sim 
      \tikzfig{morphismfprimeSC} = \left\{\left(\tikzfig{morphismfprimeiQC}, w'_i\right)\right\}_{sc} 
      \\
      \overset{\text{Def.}}{\iff}\quad 
      \lambda_{\mathcal{S}(\mathcal{C})} \left(
        \tikzfig{compositionsigmaprimefrhoprimedoubleSC}
      \right)
      &=
      \lambda_{\mathcal{S}(\mathcal{C})} \left(
        \tikzfig{compositionsigmaprimefprimerhoprimedoubleSC}
      \right)
      \\
      &\quad \forall\, \sigma'_{sc},\, \rho'_{sc}
      \\
      \overset{Lem.\ref{lem:characterisebyrank1}}{\iff}\quad 
      \lambda_{\mathcal{S}(\mathcal{C})} \left(
        \tikzfig{compositionsigmafrhodoubleSC}
      \right)
      &=
      \lambda_{\mathcal{S}(\mathcal{C})} \left(
        \tikzfig{compositionsigmafprimerhodoubleSC}
      \right)
      \\
      &\quad \forall\, \sigma_{sc} = \{(\sigma,1)\}_{sc},\,
      \rho_{sc} = \{(\rho,1)\}_{sc}
      \\
      \iff\quad 
      \lambda_{\mathcal{S}(\mathcal{C})} \left(
        \left\{\left(\tikzfig{compositionsigmafirhodoubleQC}, w_i\right)\right\}_{sc}
      \right)
      &=
      \lambda_{\mathcal{S}(\mathcal{C})} \left(
        \left\{\left(\tikzfig{compositionsigmafiprimerhodoubleQC}, w'_i\right)\right\}_{sc}
      \right)
      \\
      &\quad \forall\, \sigma,\, \rho
      \\
      \iff\quad 
      \sum_i w_i\, \lambda\left(
        \tikzfig{compositionsigmafirhodoubleQC}
      \right)
      &=
      \sum_i w'_i\,  \lambda\left(
        \tikzfig{compositionsigmafiprimerhodoubleQC}
      \right)\\
      \iff\quad 
      \sum_i w_i\, \langle \sigma | (\mathfrak{f}_i \otimes 1_A) | \rho \rangle 
      \langle \rho | (\mathfrak{f}_i^\dagger \otimes 1_A) | \sigma \rangle
      &=
      \sum_i w'_i\, \langle \sigma | (\mathfrak{f}'_i \otimes 1_A) | \rho \rangle 
      \langle \rho | ({\mathfrak{f}'}_i^\dagger \otimes 1_A) | \sigma \rangle
      \\
      \iff\quad 
      \left\langle \sigma \middle| 
        \sum_i w_i\, (\mathfrak{f}_i \otimes 1_A) | \rho \rangle \langle \rho | (\mathfrak{f}_i^\dagger \otimes 1_A)
        \right. &- \left.
        \sum_i w'_i\, (\mathfrak{f}'_i \otimes 1_A) | \rho \rangle \langle \rho | ({\mathfrak{f}'}_i^\dagger \otimes 1_A)
      \middle| \sigma \right\rangle
      = 0
      \\
      \iff\quad 
      \sum_i (\sqrt{w_i} \mathfrak{f}_i \otimes 1_A) | \rho \rangle \langle \rho | (\sqrt{w_i} \mathfrak{f}_i^\dagger \otimes 1_A)
      &=
      \sum_i (\sqrt{w'_i} \mathfrak{f}'_i \otimes 1_A) | \rho \rangle \langle \rho | (\sqrt{w'_i} {\mathfrak{f}'}_i^\dagger \otimes 1_A)
      \\
      &\quad \forall\, \ket{\rho}
      \\
      \iff\quad 
      \{\sqrt{w_i} \mathfrak{f}_i\} \text{ and } \{\sqrt{w'_i} \mathfrak{f}'_i\}
      &\text{ have the same Choi matrix.}
    \end{align*}

    $\therefore$ Only those morphisms are equivalent which have equal Choi matrices.
    Furthermore, the Choi matrices are positive since 
    \[
      C = \sum_i (\sqrt{w_i}\mathfrak{f}_i \otimes 1_A)\ket{n}\rangle\rangle\langle\langle n|(\sqrt{w_i}\mathfrak{f}_i^\dagger \otimes 1_A) = M M^\dagger,
    \]
    where 
    \[
      M = \sum_i (\sqrt{w_i}\mathfrak{f}_i \otimes 1_A)| n \rangle\rangle\langle\langle n|/\sqrt{d}, 
    \]
    $d$ is the dimension of the output space, and 
    $\ket{n}\rangle = \sum_j \ket{j} \ket{j}$ is the unnormalised maximally entangled state.  By Choi’s theorem \cite{Choi1975Jun}, the matrices correspond to completely positive maps.
  \end{proof}

  \begin{theorem}[\bf Characterisation of probability functions in noisy $\{\mathbf{FHilb}, |\langle-|-\rangle|^2\}$]
    The probability functions in a noisy probabilistic process theory constructed from  the probabilistic process theory $\{\mathbf{FHilb}, |\langle - | - \rangle|^2\}$  is given by $P_{\mathcal{N}(\mathcal{C})}(\rho_{nc}, \sigma_{nc}) = \Tr[\rho \sigma]$, where $\rho$ and $\sigma$ are the corresponding density matrices.
  \end{theorem}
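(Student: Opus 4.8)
The plan is to reduce the statement to a short direct computation, leaning on the two preceding characterisation theorems and on the fact that the probability function of $\mathcal{N}(\mathcal{C}) = \mathcal{Q}(\mathcal{S}(\mathcal{C}))$ is simply lifted, on representatives, from $P_{\mathcal{S}(\mathcal{C})}$. First I would fix representatives: choose a weighted set $\rho_{sc} = \{(\rho_i, w_i)\}_{sc}$ lying in the class $\rho_{nc}$ and a weighted set $\sigma_{sc} = \{(\sigma_j, v_j)\}_{sc}$ lying in the class $\sigma_{nc}$, where the $\rho_i$ and $\sigma_j$ are unit vectors/covectors of $\textbf{FHilb}$. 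By the state-characterisation theorem, and by the analogous statement for effects (an effect $A \to I$ being the specialisation of the morphism/Choi-matrix characterisation to codomain $I$, hence a positive operator on $A$), these representatives correspond to the density matrices $\rho = \sum_i w_i \ketbra{\rho_i}$ and $\sigma = \sum_j v_j \ketbra{\sigma_j}$.

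Next I would unfold the probability function. Because $\mathcal{N}(\mathcal{C})$ is the quotient $\mathcal{Q}$ of the summed theory, its probability evaluates on any representatives through the definition of $P_{\mathcal{S}(\mathcal{C})}$, and for $\mathcal{C} = \textbf{FHilb}$ the base probability is the Born rule $P(\rho_i, \sigma_j) = |\langle \sigma_j | \rho_i \rangle|^2$. Hence
\[ P_{\mathcal{N}(\mathcal{C})}(\rho_{nc}, \sigma_{nc}) = P_{\mathcal{S}(\mathcal{C})}(\rho_{sc}, \sigma_{sc}) = \sum_{i,j} w_i v_j \, |\langle \sigma_j | \rho_i \rangle|^2. \]
The remaining step is a matrix identity. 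Expanding the trace of the product of the two density matrices and using cyclicity,
\[ \Tr[\rho \sigma] = \sum_{i,j} w_i v_j \, \Tr[\ketbra{\rho_i}\ketbra{\sigma_j}] = \sum_{i,j} w_i v_j \, \langle \rho_i | \sigma_j \rangle \langle \sigma_j | \rho_i \rangle = \sum_{i,j} w_i v_j \, |\langle \sigma_j | \rho_i \rangle|^2, \]
which coincides with the previous display, giving the claimed $P_{\mathcal{N}(\mathcal{C})}(\rho_{nc}, \sigma_{nc}) = \Tr[\rho \sigma]$.

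I expect no serious obstacle here, since the substantive work has already been done in the two characterisation theorems; the proof is essentially bookkeeping. The one point that merits a line of care is well-definedness, namely that the value does not depend on the chosen representatives of $\rho_{nc}$ and $\sigma_{nc}$. This is guaranteed abstractly by the general $\mathcal{Q}$ construction, but it is in fact manifest in the formula above: the right-hand side $\Tr[\rho\sigma]$ depends only on the density matrices $\rho$ and $\sigma$, and these are precisely the invariants that the characterisation theorems attach to the equivalence classes $\rho_{nc}$ and $\sigma_{nc}$. I would close by remarking that this exhibits the generalised Born rule for $\mathcal{N}(\textbf{FHilb})$ in its familiar quantum-information form, consistent with the identification of $\mathcal{N}(\textbf{FHilb})$ with the completely positive maps.
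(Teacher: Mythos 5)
Your proposal is correct and follows essentially the same route as the paper: pick weighted-set representatives, unfold $P_{\mathcal{N}(\mathcal{C})}$ through $P_{\mathcal{S}(\mathcal{C})}$ to get $\sum_{i,j} w_i v_j |\langle \sigma_j | \rho_i \rangle|^2$, and identify this with $\Tr[\rho\sigma]$ for the associated density matrices. Your added remark on representative-independence is a sensible touch the paper leaves implicit, but the substance of the argument is the same.
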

  \begin{proof}
  Let  $(\rho_i, w_i) \in \rho_{sc} \in \rho_{nc}$. Similarly, $(\sigma_i, v_i) \in \sigma_{sc} \in \sigma_{nc}.$ Then, $P_{\mathcal{N}(\mathcal{C})}(\rho_{nc}, \sigma_{nc}) = P_{\mathcal{S}(\mathcal{C})}(\rho_{sc}, \sigma_{sc}) = \sum_{i,j} w_i v_j P(\rho_i, \sigma_j)  = \sum_{i,j} w_i v_j |\langle \sigma_j | \rho_i \rangle |^2$. If we define density matrices for $\rho_{nc}$, and $\sigma_{nc}$ as $ \rho = \sum_i w_i \ket{\rho_i}\bra{\rho_i}$ and $\sigma = \sum_j v_j \ket{\sigma_j}\bra{\sigma_j}$ respectively then, $\sum_{i,j} w_i v_j |\langle \sigma_j | \rho_i \rangle |^2 =  \Tr(\rho \sigma)$ and hence the desired relation is derived. 
  \end{proof}

  \begin{corollary}
    For the SPPT $\{\mathbf{FHilb}, |\langle - | - \rangle|^2\}$, The corresponding noisy probabilistic process theory is $\{\mathbf{CP}, \Tr(- \circ -)\}$. 
  \end{corollary}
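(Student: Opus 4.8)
The plan is to assemble the three preceding characterisation theorems into a single statement of equivalence of theories, by exhibiting an explicit symmetric monoidal isomorphism between $\mathcal{N}(\textbf{FHilb})$ and $\textbf{CP}$ that furthermore intertwines the two probability functions. Following the template of Theorem~\ref{th:quotiented_quantum_is_standard_quantum}, I would define a functor $F : \mathcal{N}(\textbf{FHilb}) \to \textbf{CP}$ which is the identity on objects and sends the probabilistic-equivalence class of a weighted set $\{(\mathfrak{f}_i, w_i)\}_{sc}$ to the completely positive map $\rho \mapsto \sum_i w_i\, \mathfrak{f}_i\, \rho\, \mathfrak{f}_i^\dagger$, equivalently the CP map whose Kraus operators are $\{\sqrt{w_i}\,\mathfrak{f}_i\}$.

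First I would argue that $F$ is well-defined and injective on morphisms: by Theorem~\ref{thm:CPmorphisms} two weighted sets become identified in $\mathcal{N}(\textbf{FHilb}) = \mathcal{Q}(\mathcal{S}(\textbf{FHilb}))$ precisely when their associated families $\{\sqrt{w_i}\mathfrak{f}_i\}$ share a Choi matrix, which is exactly the condition that they define the same completely positive map. Surjectivity follows from Choi's theorem and the Kraus decomposition, so a candidate inverse $G$ sends a CP map with chosen Kraus operators $\{K_i\}$ to the class of the weighted set $\{(K_i,1)\}_{sc}$; the previous paragraph already shows $F\circ G = \mathrm{id}$ and $G\circ F = \mathrm{id}$.

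Next I would verify that $F$ is a strict symmetric monoidal functor. Because sequential and parallel composition in $\mathcal{S}(\textbf{FHilb})$ act element-wise, multiplying weights and composing/tensoring the underlying linear maps, the image under $F$ of a composite $\{(\mathfrak{f}_i \circ \mathfrak{g}_j, w_i v_j)\}_{sc}$ is the CP map with Kraus operators $\{\sqrt{w_i v_j}\,\mathfrak{f}_i\mathfrak{g}_j\}$, which equals the composite $F(\{(\mathfrak{f}_i,w_i)\}_{sc}) \circ F(\{(\mathfrak{g}_j,v_j)\}_{sc})$ of CP maps; the same bookkeeping handles $\otimes$, the unit $\{(1_A,1)\}_{sc}$, and the swap. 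Finally, the probability is supplied directly by the preceding characterisation theorem, which gives $P_{\mathcal{N}(\textbf{FHilb})}(\rho_{nc},\sigma_{nc}) = \Tr[\rho\sigma]$ for the density matrices $\rho,\sigma$ associated to the noisy state and effect; this is exactly the $\textbf{CP}$ probability function $\Tr(-\circ-)$, so $F$ intertwines the two and the equivalence of theories follows.

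The functoriality checks are routine given the element-wise definition of composition. The one step requiring genuine care is the interplay of well-definedness and surjectivity, namely that different Kraus decompositions of the same CP map must land in the same equivalence class of $\mathcal{N}(\textbf{FHilb})$ for $G$ to be well-defined. I expect this to be the only potential obstacle, but it reduces immediately to the ``same Choi matrix'' criterion established in Theorem~\ref{thm:CPmorphisms}, so no new argument is needed and the corollary is obtained by collecting the already-proven facts.
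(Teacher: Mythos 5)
Your proposal is correct and follows essentially the same route as the paper, which states this corollary without a separate proof precisely because it is the assembly of the three preceding characterisation theorems (states as density matrices, morphisms as CP maps via equal Choi matrices, and probabilities as $\Tr[\rho\sigma]$). The extra functorial bookkeeping you supply (well-definedness and bijectivity of $F$ via the Choi-matrix criterion, surjectivity via Kraus decompositions, and element-wise compatibility with $\circ$ and $\otimes$) is exactly the content the paper leaves implicit, mirroring the template of its Theorem~\ref{th:quotiented_quantum_is_standard_quantum}.
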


  Thus, we have achieved a construction of the category $\mathbf{CP}$ of completely positive maps from simple constraints on probability functions which do not utilise the adjoint structure or notion of discard. The characterisation of $\mathcal{N}(\mathcal{K}, \Tr[-\circ -])$ follows identically to Thm. \ref{thm:CPmorphisms} with the only difference being $\mathfrak{f}_i$ are now rank-1 CPTNI maps instead of rank-1 CP maps. However, since we are allowed to have arbitrary positive weights $w_i$, the resulting morphisms are still general CP maps.

  \begin{corollary}
    For the SPPT $\{\mathcal{K}, Tr[- \circ -]\}$, The corresponding noisy probabilistic process theory is $\{\mathbf{CP}, \Tr(- \circ -)\}$. 
  \end{corollary}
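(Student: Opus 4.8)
The plan is to run the argument of Theorem~\ref{thm:CPmorphisms} essentially verbatim, since by definition $\mathcal{N}(\mathcal{K}) = \mathcal{Q}(\mathcal{S}(\mathcal{K}))$ and the only structural change from the $\{\textbf{FHilb}, |\langle-|-\rangle|^2\}$ case is the identity of the underlying rank-one maps being summed. Concretely, a morphism of $\mathcal{S}(\mathcal{K})$ is a finite weighted set $\{(\mathfrak{f}_i, w_i)\}_{sc}$ in which each $\mathfrak{f}_i = f_i(\cdot)f_i^\dagger$ is now a rank-1 \emph{CPTNI} map (so each $f_i$ is a contraction) rather than an arbitrary rank-1 CP map. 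First I would invoke Lemma~\ref{lem:characterisebyrank1} to reduce probabilistic equivalence of two such weighted sets to their agreement on single-element (rank-one) state-effect pairs, and then unfold the definition of $P_{\mathcal{S}(\mathcal{K})}$ together with the Born rule on $\mathcal{K}$, which is $\Tr[-\circ-]$, exactly as in the chain of equivalences of Theorem~\ref{thm:CPmorphisms}.

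This chain terminates in the condition $\sum_i w_i\, \mathfrak{f}_i = \sum_i w'_i\, \mathfrak{f}'_i$ as maps on density matrices, equivalently that $\{\sqrt{w_i}\,f_i\}$ and $\{\sqrt{w'_i}\,f'_i\}$ share the same positive Choi matrix. The one genuinely new point, and the content of the Corollary over and above Theorem~\ref{thm:CPmorphisms}, is that the freedom to choose \emph{arbitrary} positive weights $w_i$ exactly dissolves the trace-non-increasing constraint on the individual $f_i$: any Kraus operator $g$ of arbitrary operator norm may be written $g = \sqrt{w}\,f$ with $f = g/\|g\|$ rank-1 CPTNI and $w = \|g\|^2 > 0$. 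Hence I would establish surjectivity onto $\textbf{CP}$ by taking any completely positive map $\Phi = \sum_k g_k(\cdot)g_k^\dagger$, whose Kraus decomposition exists by Choi's theorem, and realising it as the equivalence class of the weighted set $\{(f_k(\cdot)f_k^\dagger, \|g_k\|^2)\}_{sc}$ with $f_k = g_k/\|g_k\|$. The Choi-matrix criterion then gives injectivity of the induced map on equivalence classes, so the morphisms of $\mathcal{N}(\mathcal{K})$ are in bijection with the completely positive maps.

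Finally, I would compute the probability function in parallel with the noisy $\textbf{FHilb}$ case: expanding $P_{\mathcal{N}(\mathcal{K})}(\rho_{nc},\sigma_{nc}) = P_{\mathcal{S}(\mathcal{K})}(\rho_{sc},\sigma_{sc}) = \sum_{i,j} w_i v_j\, \Tr[\rho_i \sigma_j]$ and recognising the associated density matrices $\rho = \sum_i w_i \rho_i$ and $\sigma = \sum_j v_j \sigma_j$ collapses this, by linearity of the trace, to $\Tr[\rho\sigma]$, identifying the theory as $\{\textbf{CP}, \Tr(-\circ-)\}$. I expect the main obstacle to be the surjectivity bookkeeping: one must check carefully that every CP map, including those whose natural Kraus operators are not contractions, is reached, and that the rescaling $g = \sqrt{w}\,f$ is consistent with the global-phase quotient already built into $\mathcal{K}$, so that distinct weighted presentations of a single Choi matrix are correctly identified. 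Everything else is a transcription of Theorem~\ref{thm:CPmorphisms}.
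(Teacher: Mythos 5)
Your proposal is correct and follows essentially the same route as the paper, which states that the characterisation of $\mathcal{N}(\mathcal{K}, \Tr[-\circ-])$ "follows identically" to the noisy \textbf{FHilb} case, the only difference being that the $\mathfrak{f}_i$ are now rank-1 CPTNI maps, with arbitrary positive weights absorbing the trace-non-increasing constraint. Your explicit rescaling $g = \sqrt{w}\,f$ with $f = g/\|g\|$ just spells out the paper's one-line justification in more detail.
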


  %It provides alternative physical insights as to how CPM may come about as a result of quotienting, classical noise and quotienting again. 
  %In particular, the final quotienting tends not to produce a different category for classical theories but produces a different category for the quantum theory. This is because quantum theory presents a certain kind of redundancy not found in classical ones, where different descriptions in the summed theory e.g., different Kraus decompositions, may correspond to the same description in the noisy theory. 
  Before proceeding with the final result of this section: the existence of Born rule in noisy probabilistic process theories, we need to borrow some additional structure from the summed theory.

  \begin{definition}[\bf Additive union in noisy theory]
    A noisy category $\mathcal{N}(\mathcal{C})$ inherits the additive union from the summed theory $\mathcal{S}(\mathcal{C})$. For morphisms $[f]_{nc} \text{ and } [g]_{nc} \in \mathcal{N}(\mathcal{C})(A,B)$, where $[f]_{nc} \text{ and } [g]_{nc}$ are equivalence classes of morphisms $f \text{ and } g \in \mathcal{S}(\mathcal{C})(A,B)$ respectively, we define $[f]_{nc} \cup_{nc} [g]_{nc} := [f \cup_{sc} g]_{nc}$. 
  \end{definition}

  \begin{lemma}[\bf Consistency of the additive union]
    The definition of additive union in $\mathcal{N}(\mathcal{C})$ is well-formed, that is  $f'_{sc} \sim f_{sc}$ and $g'_{sc} \sim g_{sc}$ implies $f_{sc} \cup_{sc} g_{sc} \sim f'_{sc} \cup_{sc} g'_{sc}$. 
  \end{lemma}
  \begin{proof}
    Assuming $f'_{sc} \sim f_{sc}$ and $g'_{sc} \sim g_{sc},$ 
    \begin{align}
      \lambda_{\mathcal{S}(\mathcal{C})}\left(\tikzfig{compositionsigmafuniongrhodoubleSC}\right) = \sum_{i,j,k} w_i v_j u_k \lambda\left(\tikzfig{compositionsigmakfirhojdoubleQC}\right) + \sum_{i,j,k} t_i v_j u_k \lambda\left(\tikzfig{compositionsigmakgirhojdoubleQC}\right) \notag
    \end{align}
    \begin{align}
      &= \sum_{i,j,k} w_i v_j u_k \lambda\left(\tikzfig{compositionsigmakfprimeirhojdoubleQC}\right) + \sum_{i,j,k} t_i v_j u_k \lambda\left(\tikzfig{compositionsigmakgprimeirhojdoubleQC}\right) \notag\\
      &=  \lambda_{\mathcal{S}(\mathcal{C})}\left(\tikzfig{compositionsigmafprimeuniongprimerhodoubleSC}\right)  \quad \forall \rho_{sc}, \sigma_{sc} \notag\\
      & \implies f_{sc} \cup g_{sc} \sim f'_{sc} \cup g'_{sc}.
    \end{align}
  \end{proof}

  \begin{lemma}[\bf Identity of additive union]
    The zero morphism $0_{nc}$, defined as the equivalence class of $0_{sc}$ is the unit for additive union $\cup_{nc}$ in $\mathcal{N}(\mathcal{C})$. 
  \end{lemma}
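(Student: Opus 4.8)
The plan is to reduce the claim to the corresponding fact at the level of the summed category $\mathcal{S}(\mathcal{C})$ and then transport it through the quotient $\mathcal{Q}$. First I would observe that, since $0_{sc}$ is the empty set, the additive union in $\mathcal{S}(\mathcal{C})$ satisfies
\[
0_{sc} \cup f_{sc} = f_{sc} = f_{sc} \cup 0_{sc}
\]
for every $f_{sc} = \{(f_i, w_i)\}_{sc} \in \mathcal{S}(\mathcal{C})(A,B)$. This is immediate from the definition of $\cup$ as the set of all ordered pairs drawn from the two operands: adjoining the empty weighted set contributes no pairs, so the resulting weighted set is unchanged. Thus the zero morphism is already the two-sided unit for $\cup$ at the summed level.

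Next I would lift this identity to $\mathcal{N}(\mathcal{C}) = \mathcal{Q}(\mathcal{S}(\mathcal{C}))$. Given any morphism $f_{nc}$, choose a representative $f_{sc} \in f_{nc}$, and recall that $0_{nc}$ is the equivalence class of $0_{sc}$. By the definition of the additive union in the noisy theory,
\[
0_{nc} \cup_{nc} f_{nc} = \mathbf{Clos}\{0_{sc} \cup f_{sc}\}_{nc} = \mathbf{Clos}\{f_{sc}\}_{nc}.
\]
The closure of a single morphism under probabilistic equivalence is precisely its equivalence class, which is $f_{nc}$ since $f_{sc} \in f_{nc}$; hence $0_{nc} \cup_{nc} f_{nc} = f_{nc}$. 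The symmetric computation, starting from $f_{sc} \cup 0_{sc} = f_{sc}$, gives $f_{nc} \cup_{nc} 0_{nc} = f_{nc}$.

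The only point requiring care is that $\cup_{nc}$ is a priori defined via a choice of representatives, so I would invoke the consistency lemma for the additive union just established to guarantee that the equalities above are independent of which $f_{sc} \in f_{nc}$ is selected. I expect no genuine obstacle: the argument is a direct transfer of the unit law from the weighted-set level through the quotient, and the sole subtlety is the appeal to representative-independence of $\cup_{nc}$. One could alternatively phrase the whole verification in terms of $\lambda_{\mathcal{S}(\mathcal{C})}$, checking that $0_{sc} \cup f_{sc} \sim f_{sc}$ directly from $\lambda_{\mathcal{S}(\mathcal{C})}(0_{sc}) = 0$ and additivity of $\lambda_{\mathcal{S}(\mathcal{C})}$ over $\cup$, but the set-theoretic observation is cleaner and makes the equality hold already before quotienting.
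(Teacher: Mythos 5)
Your proposal is correct and follows essentially the same route as the paper's proof: pick a representative $f_{sc} \in f_{nc}$, observe that the empty weighted set is already a two-sided unit for $\cup$ at the level of $\mathcal{S}(\mathcal{C})$, and pass to equivalence classes. Your explicit appeal to the consistency lemma for representative-independence is a small point of extra care that the paper leaves implicit.
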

  \begin{proof}
    Let $f_{nc}$ be a morphism in the noisy category with $f_{sc} \in f_{nc} = [f_{sc}]_{nc}$, then, $f_{nc} \cup_{nc} 0_{nc} = [f_{sc} \cup 0_{sc}]_{nc} = [f_{sc}]_{nc} = f_{nc}$. Similarly, $0_{nc} \cup_{nc}  f_{nc} = [0_{sc} \cup f_{sc}]_{nc} = [f_{sc}]_{nc} = f_{nc}$.
  \end{proof}

  We are now ready to establish the Born rule in noisy probabilistic process theories. We show that the scalars of a noisy theory are isomorphic to the probability range with semiring isomorphisms $\lambda_{\mathcal{N}(\mathcal{C})}$ and $\theta_{\mathcal{N}(\mathcal{C})}$.

  \begin{theorem}[\bf Born rule as semiring isomorphism in noisy theories]
  For every probabilistic process theory there exists an associated simplified probabilistic process theory with a generalised Born rule that is additive.
  \end{theorem}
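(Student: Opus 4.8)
The plan is to chain together the three constructions developed above and read off the claimed isomorphism from facts already in hand. Starting from an arbitrary probabilistic process theory $(\mathcal{D} \subseteq \mathcal{C}, \{\mathcal{S}^A\}, \{\mathcal{E}^A\}, \{P^A\})$, I would first apply the quotient $\mathcal{G}$ to obtain the simplified probabilistic process theory $\mathcal{G}(\mathcal{D})$, which by the generalised Born rule theorem for PPTs already carries a monoid \emph{isomorphism} $\lambda_{\mathcal{G}(\mathcal{D})}$ between its scalars and $\mathrm{Range}(P_{\mathcal{G}(\mathcal{D})}) \subseteq \mathbb{R}_{\geq 0}$. The SPPT witnessing the theorem is then the noisy theory $\mathcal{N}(\mathcal{G}(\mathcal{D})) = \mathcal{Q}(\mathcal{S}(\mathcal{G}(\mathcal{D})))$, obtained by freely adding noise and re-quotienting by probabilistic equivalence.

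Next I would assemble the structural facts already proved for the summed and noisy constructions. Corollary \ref{cor:SCvalidtheory} guarantees that $\mathcal{S}(\mathcal{G}(\mathcal{D}))$ is itself an SPPT, Theorem \ref{th:SCsemiring} that its scalars form a semiring under $(\otimes_{sc}, \cup)$ with zero $0_{sc}$, and Theorem \ref{th:SCsemiringhomomorphism} that $\lambda_{\mathcal{S}(\mathcal{G}(\mathcal{D}))}$ is a semiring homomorphism onto the probability range. Passing to the quotient $\mathcal{N} = \mathcal{Q}(\mathcal{S}(\mathcal{G}(\mathcal{D})))$, Theorem \ref{thm:generalisedBornQC} upgrades the induced $\lambda_{\mathcal{N}}$ to a monoid isomorphism: injectivity is exactly the statement that the scalars identified under probabilistic equivalence are precisely those sharing the same probability, and surjectivity onto $\mathrm{Range}(P_{\mathcal{N}})$ is immediate from the definition of $\lambda_{\mathcal{N}}$.

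It then remains only to carry the additive structure through the final quotient. Here I would invoke the consistency lemma for the additive union $\cup_{nc}$ in $\mathcal{N}(\mathcal{C})$, which shows that $\cup_{nc}$ is well-defined on equivalence classes, together with the fact that the class of $0_{sc}$ serves as its unit. Since $\lambda_{\mathcal{N}}$ is defined by lifting $\lambda_{\mathcal{S}}$ through the quotient, and $\lambda_{\mathcal{S}}$ already preserves $\cup$ and $0_{sc}$, the map $\lambda_{\mathcal{N}}$ preserves $\cup_{nc}$ and $0_{nc}$; combined with the multiplicative preservation inherited from its monoid-isomorphism status, $\lambda_{\mathcal{N}}$ is a \emph{bijective} semiring homomorphism. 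Because the inverse of a bijective semiring homomorphism automatically preserves both sums and products and both units, $\theta_{\mathcal{N}} = \lambda_{\mathcal{N}}^{-1}$ is a semiring homomorphism as well, so $\lambda_{\mathcal{N}}$ is the desired semiring isomorphism.

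The main obstacle I anticipate is the interaction between the two quotients and the summation: one must check that the probabilistic-equivalence relation used to define $\mathcal{N}$ genuinely respects the additive union inherited from $\mathcal{S}$, so that summing and re-quotienting yield a bona fide semiring rather than a monoid with a merely partially-defined sum, and that $\lambda_{\mathcal{N}}$ descends consistently along this quotient. The consistency lemma for $\cup_{nc}$ is the crux; once it is in place, everything else is a routine assembly of the monoid-isomorphism and semiring-homomorphism results already established.
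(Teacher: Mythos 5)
Your proposal is correct and follows essentially the same route as the paper: obtain a monoid isomorphism from the quotient construction, lift the additive structure (preservation of $\cup$, $0$, and distributivity) from the semiring homomorphism already established for the summed theory, verify via the consistency lemma that $\cup_{nc}$ descends to the noisy quotient, and conclude that the bijective semiring homomorphism has a semiring-homomorphism inverse. The only presentational difference is that you make the initial $\mathcal{G}$ step for general PPTs explicit, which the paper leaves implicit by stating its proof for an SPPT $\mathcal{C}$.
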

  \begin{proof}
  Concretely, we show that there exists a semiring isomorphism $\lambda_{\mathcal{N}(\mathcal{C})}$ between the scalars $(\mathcal{N}(\mathcal{C})(I, I), 1_I, \otimes_{nc}, 0_{nc}, \cup_{nc})$ and the range of probabilities $(Range(P_{\mathcal{N}(\mathcal{C})}) \subseteq \mathbb{R}_{\geq 0}, 1, \times, 0, + )$.  
    Since $\mathcal{N}(\mathcal{C})$ is a quotiented theory, there exist monoid homomorphism $\lambda_{\mathcal{N}(\mathcal{C})}$ from $(\mathcal{N}(\mathcal{C})(I, I), 1_I, \otimes_{nc})$ to $(Range(P_{\mathcal{N}(\mathcal{C})}) \subseteq \mathbb{R}_{\geq 0}, 1, \times)$. This homomorphism can be lifted to a semiring homomorphism. From Theorem \ref{th:SCsemiringhomomorphism} we have
    $$\lambda_{\mathcal{N}(\mathcal{C})}(0_{nc}) = \lambda_{\mathcal{S}(\mathcal{C})}(0_{sc}) = 0.$$ 
  Let $\gamma_{nc} = [\{(\gamma_i, w_i)\}_{sc}]_{nc}, \text{ and } \xi_{nc} = [\{(\xi_j, v_j)\}_{sc}]_{nc}.$ We have
  \begin{eqnarray}
    \lambda_{\mathcal{N}(\mathcal{C})}\left(\gamma_{nc} \cup_{nc} \xi_{nc}\right) &=& \lambda_{\mathcal{S}(\mathcal{C})}\left(\{(\gamma_i, w_i)\}_{sc}\right) + \lambda_{\mathcal{S}(\mathcal{C})}\left(\{(\xi_j, v_j)\}_{sc}\right) = \lambda_{\mathcal{N}(\mathcal{C})}\left(\gamma_{nc}\right) +  \lambda_{\mathcal{N}(\mathcal{C})}\left(\xi_{nc}\right)
  \end{eqnarray}
  thus, $\lambda_{\mathcal{N}(\mathcal{C})}$ is a semiring homomorphism. Now, let $\omega_{sc} \in \omega_{nc}, ~\omega'_{sc} \in \omega'_{nc}$ be scalars in $\mathcal{N}(\mathcal{C})$ so that
    $$\lambda_{\mathcal{N}(\mathcal{C})}(\omega_{nc}) = \lambda_{\mathcal{N}(\mathcal{C})}(\omega'_{nc}) \implies \lambda_{\mathcal{S}(\mathcal{C})}(\omega_{sc}) = \lambda_{\mathcal{S}(\mathcal{C})}(\omega'_{sc}).$$
  Then, by Theorem \ref{th:scalarequiv}, we have $\omega_{sc} \sim \omega'_{sc}$ and hence $\omega_{nc} = [\omega_{sc}]_{nc} = [\omega'_{sc}]_{nc} = \omega'_{nc} $. Thus, $\lambda_{\mathcal{N}(\mathcal{C})}$ is one-one. Therefore, we can define $\theta_{\mathcal{N}(\mathcal{C})}: Range(P_{\mathcal{N}(\mathcal{C})}) \rightarrow \mathcal{N}(\mathcal{C})(I, I)$ as $\theta_{\mathcal{N}(\mathcal{C})}(p) = \lambda^{-1}_{\mathcal{N}(\mathcal{C})}(p)$, where $p \in Range(P_{\mathcal{N}(\mathcal{C})})$. Then, $\lambda_{\mathcal{N}(\mathcal{C})}(\theta_{\mathcal{N}(\mathcal{C})}(p)) = p$ and $\theta_{\mathcal{N}(\mathcal{C})}(\lambda_{\mathcal{N}(\mathcal{C})}(\omega_{nc})) = \omega_{nc}$ for all $\omega_{nc} \in \mathcal{N}(\mathcal{C})(I, I)$. Since $\lambda_{\mathcal{N}(\mathcal{C})}$ is bijective in the Range of probabilities, the inverse $\theta_{\mathcal{N}(\mathcal{C})}$ is also a semiring homomorphism.
  % \ga{Do I show this explicitly?}
  \end{proof}

 % Just like our previous constructions, these homomorphisms translate between scalars and probabilities with the aid of the generalised Born rule. For a given state-effect pair $\rho_{nc}$ and $\sigma_{nc}$, the probability $P_{\mathcal{N}(\mathcal{C})}(\rho_{nc}, \sigma_{nc})$ of measuring $\sigma_{nc}$ is exactly the image of their composition under the semiring isomorphism $\lambda_{\mathcal{N}(\mathcal{C})}$
%  $$P_{\mathcal{N}(\mathcal{C})}\left(\tikzfig{staterhonc}, \tikzfig{effectsigmanc}\right) =P_{\mathcal{N}(\mathcal{C})}\left(1_{I_{nc}}, \tikzfig{compositionrhosigmanc}\right) = \lambda_{\mathcal{N}(\mathcal{C})}\left(\tikzfig{compositionrhosigmanc}\right),$$ 
%  and conversely the scalar obtained from their composition is exactly the image of the probability value under the semiring isomorphism $\theta_{\mathcal{N}(\mathcal{C})}$
%  $$\theta_{\mathcal{N}(\mathcal{C})}\left(P_{\mathcal{N}(\mathcal{C})}\left(\tikzfig{staterhonc}, \tikzfig{effectsigmanc}\right)\right) = \theta_{\mathcal{Q}(\mathcal{C})}\left(\lambda_{\mathcal{N}(\mathcal{C})}\left(\tikzfig{compositionrhosigmanc}\right)\right) = \tikzfig{compositionrhosigmanc},$$ 
%  thus, giving a concrete depiction of the generalised Born rule and answering why one can consider scalars and probabilities as equivalent in noisy theories. 

The result of adding noise, presents the strongest form of similarity between the structures of its scalars and probabilities, since, semiring homomorphisms constitute a highly constrained subset of monoid homomorphisms. This is reflected in quantum theory: the Born rule in the noisy quantum theory ($\Tr[\rho \sigma]$ a composition of state and effect) is considerably closer to the generalised Born rule than in the case of pure quantum theory ($|\langle \phi | \psi \rangle|^2$ modulus square of the composition). This happens because a monoid homomorphism from the scalars of $\mathbf{FHilb}$, that is $\mathbb{C}$ to the probability range $\mathbb{R}_{\geq 0}$ allows for squaring or any other positive powers to obtain probabilities $$\lambda(re^{i \theta}) = r^k, \qquad   re^{i \theta} = \langle \phi | \psi \rangle,$$
    whereas a semiring isomorphism would not allow for such powers (they are forbidden by the preservation of sums). The only semiring isomorphism from the scalars of $\mathbf{CP}$, that is $\mathbb{R}_{\geq 0}$ to probability range $\mathbb{R}_{\geq 0}$ for instance, is an identity function.

    $$\lambda_{\mathcal{N}(\mathcal{C})}(r) = r, \qquad r = \Tr[\rho \sigma].$$ 

\begin{lemma}[\bf Rigidity of the Semiring of Positive Reals]
  Let $\lambda: \mathbb{R}_{\geq 0} \to \mathbb{R}_{\geq 0}$ be a semiring homomorphism. That is, $\lambda(0)=0$, $\lambda(1)=1$, $\lambda(a+b) = \lambda(a)+\lambda(b)$, and $\lambda(ab) = \lambda(a)\lambda(b)$. Then $\lambda(x) = x$ for all $x \in \mathbb{R}_{\geq 0}$.
\end{lemma}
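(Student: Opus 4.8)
The plan is to show $\lambda$ is the identity in four stages, working upward from the integers to the rationals, then establishing an order-preservation property, and finally invoking density of $\mathbb{Q}_{\geq 0}$ in $\mathbb{R}_{\geq 0}$. First I would pin down the non-negative integers: since $\lambda(1)=1$ and $\lambda$ preserves addition, a straightforward induction gives $\lambda(n)=n$ for every $n \in \mathbb{N}$, with $\lambda(0)=0$ by hypothesis. Next I extend to the non-negative rationals. For $q = m/n$ with $m,n$ positive integers, writing $m$ as the $n$-fold sum of $q$ and applying additivity gives $n\,\lambda(q) = \lambda(m) = m$, so that $\lambda(q) = m/n = q$; hence $\lambda$ fixes $\mathbb{Q}_{\geq 0}$ pointwise.

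The key step is to show that $\lambda$ is monotone non-decreasing, and here I would exploit the defining feature of $\mathbb{R}_{\geq 0}$ that every element is a perfect square. Given $a \leq b$, the difference $b-a$ is non-negative, hence $b - a = c^2$ for some $c \in \mathbb{R}_{\geq 0}$. Additivity gives $\lambda(b) = \lambda(a) + \lambda(b-a)$, while multiplicativity gives $\lambda(b-a) = \lambda(c^2) = \lambda(c)^2 \geq 0$. Combining these,
\[ \lambda(b) - \lambda(a) = \lambda(c)^2 \geq 0, \]
so that $\lambda(a) \leq \lambda(b)$, and $\lambda$ is order-preserving.

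Finally I would conclude by a squeeze argument: for arbitrary $x \in \mathbb{R}_{\geq 0}$ and any rationals $q_1, q_2$ with $q_1 \leq x \leq q_2$, monotonicity together with the fact that $\lambda$ fixes rationals yields $q_1 = \lambda(q_1) \leq \lambda(x) \leq \lambda(q_2) = q_2$, and taking $q_1 \uparrow x$, $q_2 \downarrow x$ forces $\lambda(x)=x$. I expect the monotonicity step to be the crux of the argument: nothing in the semiring axioms alone forces continuity, so a naive attempt would stall after fixing the rationals. It is precisely the availability of square roots in $\mathbb{R}_{\geq 0}$ that upgrades the purely algebraic hypotheses into an order-theoretic constraint, after which the density of $\mathbb{Q}_{\geq 0}$ closes the argument with no separate continuity assumption needed.
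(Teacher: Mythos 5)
Your proof is correct and follows the same four-stage outline as the paper's: fix $\mathbb{N}$ by induction, fix $\mathbb{Q}_{\geq 0}$ by clearing denominators, establish monotonicity, and squeeze an arbitrary real between rationals. The only divergence is in the monotonicity step, which you identify as the crux: you write $b-a=c^2$ and use multiplicativity to get $\lambda(b-a)=\lambda(c)^2\geq 0$. This is the classical square-root trick needed for ring homomorphisms $\mathbb{R}\to\mathbb{R}$, where the codomain contains negative elements and positivity must be certified algebraically. Here it is more machinery than required: since the codomain is $\mathbb{R}_{\geq 0}$ by hypothesis, $\lambda(b-a)\geq 0$ holds automatically, and the paper's proof simply invokes this. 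Your version is not wrong --- it just proves a slightly stronger fact (the argument would survive enlarging the codomain to all of $\mathbb{R}$), at the cost of obscuring that, in the stated setting, monotonicity is essentially free and the real content lies in fixing the rationals and the density argument.
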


\begin{proof}
  First, observe that $\lambda(n) = n$ for any natural number $n \in \mathbb{N}$. This follows by induction from $\lambda(1)=1$ and additivity: $\lambda(n+1) = \lambda(n) + \lambda(1) = n + 1$.
  
  Second, for any positive rational number $q = \frac{m}{n}$ where $m, n \in \mathbb{N}_{>0}$, we have $n \cdot q = m$. Applying $\lambda$ to both sides yields $\lambda(n \cdot q) = \lambda(m)$. Using the multiplicative property and the result for naturals, we have $\lambda(n)\lambda(q) = n\lambda(q) = m$. This implies $\lambda(q) = \frac{m}{n} = q$. Thus, $\lambda$ is the identity on the rationals $\mathbb{Q}_{\geq 0}$.
  
  Third, we show that $\lambda$ is order-preserving. Let $x \leq y$. Then $y = x + \delta$ for some $\delta \in \mathbb{R}_{\geq 0}$. Applying $\lambda$, we get $\lambda(y) = \lambda(x) + \lambda(\delta)$. Since the codomain is $\mathbb{R}_{\geq 0}$, we must have $\lambda(\delta) \geq 0$, and therefore $\lambda(x) \leq \lambda(y)$.
  
  Finally, for any real number $x \in \mathbb{R}_{\geq 0}$, there exist sequences of rationals $\{q_n\}$ and $\{p_n\}$ such that $q_n \leq x \leq p_n$ with $\lim_{n \to \infty} q_n = x$ and $\lim_{n \to \infty} p_n = x$. By the order-preserving property, we have $\lambda(q_n) \leq \lambda(x) \leq \lambda(p_n)$. Since $\lambda$ fixes rationals, this becomes $q_n \leq \lambda(x) \leq p_n$. Because the real numbers are complete, the only number that is greater than or equal to every rational lower bound of $x$ and less than or equal to every rational upper bound of $x$ is $x$ itself. Therefore, $\lambda(x) = x$.
  
\end{proof}

   This leads to following corollary: 
    
      \begin{corollary}
      In any probabilistic physical theory arising from the construction $\mathcal{N}$, with the underlying SMC being $\mathbf{CP}$, the possible probability functions on that theory are constrained to $P(\rho, \sigma) = \lambda(\sigma \circ \rho) = \Tr[\rho \sigma].$ 
    \end{corollary}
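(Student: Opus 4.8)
The plan is to assemble three facts already established in the excerpt: the theorem that $\lambda_{\mathcal{N}(\mathcal{C})}$ is a semiring isomorphism for any noisy theory, the identification of the scalar semiring of $\textbf{CP}$ with $\mathbb{R}_{\geq 0}$, and the Rigidity Lemma for $\mathbb{R}_{\geq 0}$. The corollary is then obtained by forcing the witnessing homomorphism to be the identity, which pins the probability function down uniquely.

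First I would note that, since the theory arises from the construction $\mathcal{N}$, it carries the additive union $\cup_{nc}$ and is a quotient of a summed theory, so the theorem \emph{Born rule as semiring isomorphism in noisy theories} applies: the induced $\lambda_{\mathcal{N}(\mathcal{C})}$ (defined as in Lemma~\ref{lemma:lambda}) is a semiring homomorphism from the scalar semiring $(\mathcal{N}(\mathcal{C})(I,I), 0_{nc}, \cup_{nc}, 1_I, \otimes_{nc})$ into $(\mathrm{Range}(P)\subseteq\mathbb{R}_{\geq 0}, 0, +, 1, \times)$, and the generalised Born rule $P(\rho,\sigma) = \lambda_{\mathcal{N}(\mathcal{C})}(\sigma\circ\rho)$ holds. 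It is essential here that the hypothesis ``arising from $\mathcal{N}$'' is in force, since without the $\cup_{nc}$ structure one would only obtain a monoid homomorphism (Theorem~\ref{th:Cmonoidalhomomorphism}), which is far too weak to be rigid.

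Next I would identify the scalar semiring explicitly. Because the underlying SMC is $\textbf{CP}$, the endomorphisms of the monoidal unit are the completely positive maps $\mathbb{C}\to\mathbb{C}$, namely multiplication by a non-negative real, so $\mathcal{N}(\mathcal{C})(I,I)\cong\mathbb{R}_{\geq 0}$ as a semiring, with $\otimes_{nc}$ (equivalently $\circ$) giving multiplication and $\cup_{nc}$ giving addition. Under this identification $\lambda_{\mathcal{N}(\mathcal{C})}$ is a semiring homomorphism $\mathbb{R}_{\geq 0}\to\mathbb{R}_{\geq 0}$ (the restriction of the codomain to $\mathrm{Range}(P)$ is harmless, as the Rigidity Lemma only requires the target to lie inside $\mathbb{R}_{\geq 0}$). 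Invoking the Rigidity Lemma then forces $\lambda_{\mathcal{N}(\mathcal{C})} = \mathrm{id}$.

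Finally, substituting back yields $P(\rho,\sigma) = \lambda_{\mathcal{N}(\mathcal{C})}(\sigma\circ\rho) = \sigma\circ\rho$, and recognising that the composite scalar $\sigma\circ\rho$ of a density matrix $\rho$ with a positive effect $\sigma$ in $\textbf{CP}$ is exactly $\Tr[\rho\sigma]$ closes the argument. The only genuine care needed is in the bookkeeping of the second step: one must verify that the additive operation seen by $\lambda_{\mathcal{N}(\mathcal{C})}$ is literally the usual $+$ on $\mathbb{R}_{\geq 0}$ and not some reweighted variant, since the Rigidity Lemma is sensitive to the precise semiring structure. This is guaranteed by the definition of $\cup_{nc}$ together with the additivity of $P_{\mathcal{S}(\mathcal{C})}$, but it is the one point I would check explicitly; the remaining steps are purely formal.
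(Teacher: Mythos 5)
Your proposal is correct and follows exactly the route the paper intends: the corollary is stated immediately after the Rigidity Lemma precisely so that it follows by combining the semiring-isomorphism theorem for noisy theories with the identification of the scalars of \textbf{CP} with $\mathbb{R}_{\geq 0}$ and the rigidity of that semiring. The paper leaves the proof implicit, and your write-up (including the check that $\cup_{nc}$ really induces ordinary addition on scalars) is a faithful and slightly more careful rendering of the same argument.
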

    In other words, any probabilistic process theory that produces $\mathbf{CP}$ as its noisy theory must also produce the quantum mechanical Born rule as its generalised Born rule. The relative abundance of monoid homomorphisms as compared to semiring homomorphisms suggests that on any probabilistic process theory a more robust equivalence would exist between probabilities and scalars in its information theory as compared to its initial formulation.

  \section{Conclusion}
  \label{sec:conclusion}

In this work, we have formalised and derived the generalised Born rule in process theories by introducing probabilistic process theories. These consist of compositional structures (categories) from which physical processes, states, and effects are carved out, along with statistical structures (probability functions) and their constraints: associativity with respect to transformations, factorisation of probabilities for independent transformations and non-triviality.  A central feature of such theories is their ability to accommodate physically motivated structures such as textbook quantum mechanics which does not strictly form a category in the most naive sense. We consider their simplified forms in simplified probabilistic process theories and show that such theories can always be transformed via quotienting $\mathcal{Q}$ to process theories which have a Born rule.  Subsequently we demonstrate that a similar construction holds for the general case; any probabilistic process theory theory can be systematically completed into a simplified probabilistic process theory with a generalised Born rule via a quotienting procedure $\mathcal{G}$. As a consequence, we have established that the generalised Born rule is a derivable feature of any reasonable operational theory.

After demonstrating that the axioms of a probabilistic process theory give rise to a homomorphism between state-effect compositions and their observed probabilities, this relationship is progressively strengthened through the addition of noise. In the quotiented theory $(\mathcal{Q}(\mathcal{C}),\{P^A\})$, this correspondence is a monoid isomorphism, which can in principle take a multitude of forms (e.g., $|\langle- | -\rangle|^k$). However, upon introducing noise via the summed $\mathcal{S}(\mathcal{C})$ and noisy categories $\mathcal{N}(\mathcal{C})$, this relationship elevates to a semiring isomorphism. This establishes that for a broad class of noisy theories, the scalar value obtained from composing a state and an effect is, in a structurally identical sense, the probability of the measurement outcome. 
%This explains why the Born rule appears ``exact'' in noisy and classical theories, where the additive structure of noise constrains the homomorphism, while pure theories permit the compositions to be probability amplitudes.

 We also show that the category of completely positive maps, $\mathbf{CP}$, arises naturally via two distinct but convergent paths. 
  On one hand, applying our procedure $\mathcal{N}$ to the simplified probabilistic process theory of finite-dimensional Hilbert spaces, $\mathbf{FHilb}$, yields $\mathbf{CP}$ directly.  On the other hand, and perhaps less straightforwardly,, applying the quotienting procedure $\mathcal{G}$ to the probabilistic process theory of textbook quantum mechanics (consisting only of pure states, effects and unitaries) first yields the theory of Kraus operators (rank-1 CPTNI maps), which upon the introduction of noise also lifts to $\mathbf{CP}$. 
%  This result provides a robust justification for the structure of open quantum systems: whether one starts from the full linear algebraic structure of Hilbert spaces or the operational constraints of closed-system reversible dynamics, the structure of consistent probabilities and compositionally inevitably leads to the theory of completely positive maps.
The construction of completely positive maps given is entirely adjoint-free, in contrast to existing approaches such as the $\mathbf{CPM}$  and $\mathbf{CP}$-infinity constructions. 
This adjoint-free nature of the construction suggests that it might be possible to combine with the results of \cite{Coecke_2018} (which take semi-additive categories as a starting point), in order to add classical objects to general probabilistic process theories.

A consequence of not accounting for any adjoint structure is the lack of a built-in normalisation, or more precisely, determinism. In probabilistic process theories there seem to be natural candidates on account of the existence from the beginning of a privileged collection of physical processes within the pure theory. Alternatively, one can naturally imagine not only identifying the legitimate states and measurement outcomes, but further identifying the sets of measurement outcomes which together constitute measurements. The convergence of the two approaches likely depends on the existence of stinespring-like representation theorems for the theory at hand. 

Managing to refine the constructions of this paper in such a way as to identify the deterministic processes and measurements is a particularly promising route to future applications of this derivation of the generalised Born rule. First, such a refinement would provide a way to construct operational probabilistic theories (OPTs) from probabilistic process theories, and so, give a general method for constructing OPTs. Most notably, since probabilistic process theories can neatly cope with alternative quantum mechanical Born rules, this would allow for a construction from alternative quantum mechanical Born rules to operational probabilistic theories. Such a construction could in turn give a new perspective on reconstructions of the amplitude-square Born rule of quantum theory, in particular, by observing properties of the OPTs induced by alternative Born rules one could systematically rule them out by those observed features that might be deemed to be pathological \cite{Bao2015Nov,Aaronson2004Jan, Galley2018Nov, Masanes2019Mar, Alegre2025Dec}.

%Finally, a compelling direction for future work is to develop a formal notion of equivalence between probabilistic process theories. This would allow us to determine when two different mathematical models e.g., one based on Hilbert spaces and another based on hidden variable models describe the same physical theory from an operational standpoint. Such a framework could offer a rigorous way to address questions about varying mathematical descriptions of same physical phenomena and to precisely articulate the fundamental distinctions between classical and quantum theories. 

%A natural criticism of the construction given in this paper is that the assumtion of a monoidal category from which states and effects can be identified is rather strong, it is hte authors expectation that an even stronger justification could be given; by beggining 

Finally, the construction of an \textit{equivalent} theory with more desirable properties is a familiar story in category theory, notably, a justification for process theories exists already from the strictification theorem for monoidal categories, in which every monoidal category, where for instance associativity of the tensor product holds only up to coherent isomorphism, is proven equivalent as a monoidal category to one in which associativity holds on the nose. 
A natural question then, is whether general non-strict monoidal categories equipped with probability functions can be proven suitably equivalent to strict symmetric monoidal categories with generalised Born rules for their probabilities. 
To give such a simultaneous strictification for symmetric monoidal and probabilistic structure would require a careful reframing of probabilistic process theories to the non-strict case, and furthermore, a careful and precise statement on what exactly it should mean formally to say that two physical theories (symmetric monoidal categories equipped with probability functions) are equivalent. This in turn naturally motivates the construction of categories of probabilistic process theories, their non-strict generalisations, and the functors and natural transformations between them. 
In short, the framing of probabilistic process theories and the generalised Born rule represents an outline for a potential general justification and strictification theorem for modern approaches to quantum foundations.

  \bibliographystyle{quantum}
  \bibliography{generic}

\appendix

  \end{document}